\newtheorem*{proof}{Proof}
\newtheorem{lemma}{Lemma}
\newtheorem{theorem}{Theorem}
\newtheorem{remark}{Remark}
\renewcommand{\theenumi}{\arabic{enumi}}
\newcommand{\aseq}{\displaystyle{\mathop{=}^{\cdot}}}
\newcommand{\E}{{\mathbb E}}
\newcommand{\R}{{\mathbb R}}
\newcommand{\Z}{{\mathbb Z}}
\newcommand{\rev}[1]{{\textcolor{black}{#1}}}
\newcommand{\AC}[1]{{\textcolor{black}{#1}}}
\begin{document}
	
	\begin{frontmatter}
		
		\title{\rev{Data-driven predictive control in a stochastic setting:\\a unified framework}\thanksref{footnoteinfo}}
		
		\thanks[footnoteinfo]{This project was partially supported by the Italian Ministry of University and Research under the PRIN’17 project \textquotedblleft Data-driven learning of constrained control systems\textquotedblright, contract no. 2017J89ARP. Corresponding author: Simone Formentin (e-mail: simone.formentin@polimi.it).}
		
		\author[POLIMI]{Valentina Breschi},
		\author[UNIPD]{Alessandro Chiuso},
		\author[POLIMI]{Simone Formentin}
		
		\address[POLIMI]{Dipartimento di Elettronica, Informazione e Bioingegneria, Politecnico di Milano, P.za L. Da Vinci, 32, 20133 Milano, Italy.}
		\address[UNIPD]{Department of Information Engineering, University of Padova, Via Gradenigo 6/b, 35131 Padova, Italy.}

		\begin{keyword}
			data-based control, control of constrained systems, regularization, identification for control
		\end{keyword}
		
		\begin{abstract}
			Data-driven predictive control (DDPC) has been recently proposed as an effective alternative to traditional model-predictive control (MPC) for its unique features of being time-efficient and unbiased with respect to the oracle solution. Nonetheless, it has also been observed 
			that 
			noise 
			may strongly jeopardize the final closed-loop performance, since it affects both the data-based system representation and the control update computed from the online measurements. Recent studies have shown that regularization is potentially a successful tool to counteract the effect of noise. \rev{At the same time, regularization requires the tuning of a set of penalty terms, whose choice might be practically difficult without closed-loop experiments.} \rev{In this paper, by means of subspace identification tools, we pursue a three-fold goal: $(i)$ we set up a unified framework for the existing regularized data-driven predictive control schemes {for stochastic systems}; $(ii)$ we introduce $\gamma$-DDPC, an efficient two-stage scheme that {splits the optimization problem in two parts:  fitting the initial conditions and optimizing the future performance, while guaranteeing constraint satisfaction}; $(iii)$ we \rev{discuss} the role of regularization for data-driven predictive control, providing new insight on }\emph{when} and \emph{how} it should be applied. A benchmark numerical case study finally illustrates \rev{the performance of $\gamma$-DDPC, showing how controller design can be simplified in terms of tuning effort and computational complexity when benefiting from the insights coming from the subspace identification realm.} 
		\end{abstract}

	\end{frontmatter}
	
	\section{Introduction}
\rev{Data}-driven control \rev{(DDC)} refers to the science of learning feedback controllers from data, without first undertaking a full modeling study of the plant to control \cite{formentin2014comparison}. Such a direct mapping of data onto the control action is indeed advisable in real-world problems, as modeling usually takes about 75\% of the time devoted to a control project \cite{gevers2005identification}, and accurate modeling for control requires significant time and several (costly) technical expertises, e.g., in the process domain and in the statistical tools for system identification. {Additionally, accurate modeling may go well beyond what is strictly necessary for control purposes only, since often times rather limited knowledge of the system dynamics may be required to achieve the desired control objectives \cite{Hjalmasson-05}.}  
Early attempts in \rev{the} direction \rev{of DDC} date back to 1942, with the first studies by Ziegler and Nichols about PID auto-tuning \cite{ziegler1942optimum}. More sophisticated, optimization based, approaches have been derived since then for fixed-order controller tuning, leading to a portfolio of techniques suitable for different problem formulations, see, e.g., \cite{formentin2019deterministic,breschi2020direct,formentin2012data,karimi2017data}. However, it is only recently that such a paradigm shift in control design could be extended to more complex control architectures, \rev{thanks to }
the availability of large datasets and unparalleled computing power. 

\rev{In this context lays the uprising interest in data-driven predictive control (DDPC) solutions, that combine the capability of constraint handling of MPC with the flexibility of a data-driven, nonparametric predictor of the system under control. By relying on the} so-called ``fundamental lemma'' \cite{willems2005note} \rev{(or variations of the latter)}, \rev{most of existing DDPC techniques replace} model equations with suitable  data-based constraints\footnote{Such constraints \rev{are} an implicit, nonparametric, mapping of the input/output relationships. According to this interpretation, some researchers legitimately prefer to denote the strategies described herein as ``indirect''. For this reason, we will simply talk about \textit{data-driven predictive control} from now on.} \rev{(see e.g., \cite{berberich2020data,coulson2019data})}. \rev{The transition to this} data-based framework may lead to different performance than traditional model-based MPC, \rev{because of its} unique features. For instance, the sub-optimality gap measuring the control performance with respect to the optimal model-based \rev{solution} (namely, \rev{that obtained} using the real model of the system) vanishes with the size of the dataset. Moreover, model-free predictive control may indirectly address the bias/variance trade-off in a more efficient manner\rev{. Indeed, } it will not  incur in the asymptotic  bias induced by inaccurate modeling when complexity constraints are imposed on the model structure, as discussed in \cite{krishnan2021direct}. 

The transition from a model-based to a data-driven framework is well established in case of purely deterministic systems, whereas many of the attempts made to counteract the effect of noise in the presence of stochastic disturbances lead to approximations that may deteriorate the closed-loop performance. {For instance,} in \cite{berberich2020data}, the authors \rev{propose a regularized DDPC scheme, with guarantees of practical exponential stability in closed loop}  in the presence of \textit{bounded} additive output noise. The key ingredients \rev{to achieve this result} are two: $(i)$ some bounded slack variables to account for the noisy data used for prediction, and $(ii)$ some suitable regularization terms. In \cite{alanwar2021robust}, a slightly different scheme is used, which computes the data-driven reachable set based on a matrix zonotope recursion starting from the measured output. For this scheme, the authors show they can guarantee robust constraint satisfaction, again in case of \textit{bounded} process and measurement noises. The case of stochastic (white) measurement noise is addressed in \cite{yin2021maximum}, where a maximum likelihood framework is proposed to estimate the data-based constraints aimed to replace the model equations in the MPC formulation. {the resulting scheme is an iterative \textit{two-stage} approach, where at each iteration first a model encoded by a data matrix constraint \textit{must} be identified and then the online predictive control is computed}. An approach to handle stochastic noise in the direct framework proposed in \cite{coulson2019data} and \cite{berberich2020data} is given in the recent paper \cite{dorfler2022bridging}. In this contribution, the authors {exploit regularization as the \textit{key tool} to handle the presence of noise in the output measurements, and empirically discuss the performance of different regularization schemes. \rev{Approaches for DDPC with regularization are also shown to be distributionally robust in \cite{Coulson2019}. One of the regularized schemes proposed in \cite{dorfler2022bridging} is then connected with \emph{Subspace Predictive Control} (SPC) \cite{FAVOREEL1999} by \cite{Felix2021}, where the introduction of additional slacks is further propose to cope with noisy online data. {Kalman filter approaches have finally been suggested in \cite{Alpago2020} to filter out the effect of noise in the context of DeePC approaches.}}

In this paper, we consider a stochastic setting where both measurement and process noise are considered. Within this framework, our contribution is three-fold. 
\begin{enumerate}
	\item[C1.] By revising foundational results in subspace identification, we show that the seminal regularized DDPC schemes in \cite{Felix2021,berberich2020data,dorfler2022bridging} can all be recast into a \emph{unified framework}, stemming from the constrained counterpart of the SPC scheme originally proposed in \cite{FAVOREEL1999}. 
	\item[C2.] Based on this unified framework, we discuss how the choice of \emph{key hyperparameters} in \cite{Felix2021,berberich2020data,dorfler2022bridging} can be guided by known results in subspace identification. These insights potentially allow the final user to select the regularization parameters in those schemes with \emph{less} closed-loop tests, while possibly allow one to \emph{avoid} such experiments if the available dataset is large.
	\item[C3.] We show that the {parameterization of the predictor and the control input exploited to solve the DDPC problem} can be  decomposed in three terms with specific roles. This decomposition allows us to \emph{split} the DDPC problem into two sub-problems of smaller dimensions, respectively devoted to: $(i)$ fit the initial conditions embedded in the input/output data streams collected online; $(ii)$ optimize performance in prediction, while avoiding constraint violations. The introduction of this two-stage scheme, which we call \textit{$\gamma$-DDPC} from now on, allows for a reduction in the computational complexity of the overall DDPC formulation, while providing the final user with a more transparent overview of the main players of the control scheme.
\end{enumerate}
By means of a benchmark numerical example, we show the performance of $\gamma$-DDPC and we validate the insights gained from subspace identification about the role of regularization, showing how the latter can be actively exploited to \textit{avoid} (or at least reduce the number of) the closed loop experiments} needed to tune the regularization weights through cross-validation.

The remainder of the paper is \rev{organized} as follows. In Section \ref{Sec:goal}, we formally define the control problem of interest and its data-driven counterpart. Section \ref{sec:SID} \rev{reviews in details subspace identification concepts to} give a deeper insight into the employed system description\rev{, ultimately leading to the constrained SPC formulation at the core of the unified {framework for} regularized DDPC techniques presented in Section~\ref{Sec:unifiednew}.} \AC{\rev{In light of the  preceding analysis}}, Section \ref{Sec:new_scheme} introduces $\gamma$-DDPC and discusses the role of regularization in the data-driven control framework. The benchmark numerical example of Section \ref{sec:benchmark} illustrates the effectiveness of the $\gamma$-DDPC perspective in designing a satisfactory control action. The paper is ended by some concluding remarks.

\textbf{Notation.} Matrices will be denoted with capitals (e.g. $A$), column vectors will be denoted with lowercase letters (e.g. $a$). The transpose of $A$ will be denoted with $A^\top$; the notation $A^\dagger$ will denote the Moore-Penrose pseudo-inverse of $A$. Given deterministic (vector) sequences $a(t)$, $b(t)$ the notation $a(t) = O(b(t))$ means that there exist $M$ and $c < \infty$ such that, for all $t>M$, 
\begin{equation}\label{eq:bigO}
\|a(t)\| \leq c \|b(t)\|.
\end{equation}
Similarly we say that $a(t) = o(b(t))$ if $$
\mathop{\rm lim}_{t\rightarrow \infty} \frac{\|a(t)\|}{\|b(t)\|} = 0,$$ or, equivalently, that  for all $\epsilon > 0$, there exists $M < \infty$ such that, for all $t>M$, 
\begin{equation}\label{eq:littleo}
\|a(t)\| \leq \epsilon \|b(t)\|.
\end{equation}
Probabilistic versions of  $O(\cdot)$ and $o(\cdot)$ (i.e., with conditions \eqref{eq:bigO} and \eqref{eq:littleo} holding in probability) will be denoted by $O_P(\cdot)$ and $o_P(\cdot)$, see e.g., \cite{vanderVaart}. \rev{Given $a$ and $b$, we use the symbols $\aseq$ and $\triangleq$ to denote equality up to $o_{P}(1/\sqrt{N})$ and up to $O_P(1/\sqrt{N})$, respectively. Namely
\begin{subequations}
	\begin{align}
		& a \aseq b \iff a=b+o_{P}(1/\sqrt{N}),\\
		& a \triangleq b \iff a=b+O_{P}(1/\sqrt{N}).
	\end{align}
\end{subequations}} 
$\Pi_{A}[B]$ denotes the orthogonal projection of the (rows of the) matrix $B$ on the row span of the matrix $A$, i.e.,
$$
\Pi_{A}[B]  = BA^\top (AA^\top)^\dagger A.
$$
Similarly $\Pi_{A,C}[B]$ indicates the projection of $B$ onto the row span of $A$ and $C$. 
Finally, given a signal $w(k) \in \R^s$, we define the associated Hankel matrix $W_{[t_0,t_1],N} \in \R^{s(t_1-t_0+1) \times N}$ as:
\begin{equation}\label{eq:Hankel}
W_{[t_0,t_1],N}\!:=\!\!\frac{1}{\sqrt{N}}\!\begin{bmatrix}
w(t_0) & w(t_0\!+\!1) & \cdots & w(t_0\!+\!N\!-\!1)\\
w(t_0\!+\!1) & w(t_0\!+\!2) & \cdots & w(t_0\!+\!N)\\
\vdots & \vdots & \ddots & \vdots\\
w(t_1) &w(t_1\!+\!1) & \dots & w(t_1\!+\!N\!-\!1)  
\end{bmatrix}\!\!,
\end{equation}
while the shorthand $W_{t_0}:= W_{[t_0,t_0],N}$ is used to denote the Hankel containing a single row, namely:
\begin{equation}\label{eq:Hankel:onerow}
W_{t_0}:= \frac{1}{\sqrt{N}}\begin{bmatrix}
w(t_0) & w(t_0\!+\!1) & \cdots & w(t_0\!+\!N\!-\!1)
\end{bmatrix}. 
\end{equation}

\section{Setting and goal}\label{Sec:goal}
Consider an \emph{unknown} discrete-time, \emph{linear time-invariant} (LTI) stochastic plant, {whose behaviour can always be described by the so-called \emph{innovation-form}  equations}
\begin{equation}\label{eq:stoc_sys}
	\begin{cases}
		x(t+1)=Ax(t)+Bu(t)+Ke(t)\\
		y(t)=Cx(t)+Du(t)+e(t), 	\end{cases} \quad  t \in \Z
\end{equation}
where $x(t)\in \mathbb{R}^{n}$, $u(t) \in \mathbb{R}^{m}$ and $e(t) \in \mathbb{R}^{p}$ are the state, input and  innovation process  respectively, while $y(t) \in \mathbb{R}^{p}$ is the corresponding  output signal. Without loss of generality we shall assume that \eqref{eq:stoc_sys} is minimal (i.e., reachable and observable).

Given a \rev{constant} reference signal $\rev{y_{r}}$\rev{, a constant reference input $u_r$}
, and a control horizon $T$, the receding horizon predictive control problem can be framed as follows:
\begin{subequations}\label{eq:RHPC_prob}
	\begin{align}
		&\underset{u(k), k \in [t,t+T)}{\mbox{minimize}}~\frac{1}{2}\left[\sum_{k=t}^{t+T-1} \!\!\!\E\left[\|y(k)\!-\! y_{r}
		\|_{Q}^{2}\right]\!+\!\|u(k)\!-\!\rev{u_r}\|_{R}^{2}\right] \label{eq:cost}\\
		& \mbox{s.t. } x(k\!+\!1)\!=\!\!Ax(k)\!+\!Bu(k)\!+\!Ke(k),~k \!\in\! [t,t\!+\!T),\\
		& \qquad  y(k)\!=\!Cx(k)+Du(k)+e(k),~k \in [t,t+T),\\ 		
		&\qquad x(t) = x_{init},\\ 
		&\qquad u(k) \in \mathcal{U},~\E[ y(k)] \in \mathcal{Y},~k \in [t,t+T),
	\end{align}
\end{subequations}
where \rev{{$k \in \mathbb{Z}$}}, $x_{init}$ is the state at time $t$, $e(k)$ is  a zero mean noise with variance $Var\{e(k)\}$, the sets $ \mathcal{U}$, $ \mathcal{Y}$ denote inputs and output constraints, and the expectation $\E[\cdot]$ is taken w.r.t. the future noise sequence $e(k)$, $ k \in [t,t+T)$, and \emph{conditionally} on the initial state $x_{init}$ and the future input trajectory {$u_f: = \{u(k), k \in [t,t+T)\}$.} The tunable symmetric weights $Q \in \mathbb{R}^{p \times p}$ and $R \in \mathbb{R}^{m \times m}$, with $Q \succeq 0$ and $R \succ 0$, have to be selected to trade-off between tracking performance and the required control effort. 
Our goal is to solve problem \eqref{eq:RHPC_prob} when the systems matrices $A,B,C,D,K$ are \emph{not known} and only a sequence of input output data $\mathcal{D}_{N_{data}}=\{u(j),y(j)\}_{j=1}^{N_{data}}$ collected in open loop\footnote{Extension to data collected in closed-loop is possible. Yet, for the sake of exposition, its treatment is deferred to future publications.} from system \eqref{eq:stoc_sys} is available.

\subsection{Features of the predictive control problem} \label{Sec:prelim}
We now elaborate on the optimization problem \eqref{eq:RHPC_prob} and make two important observations:
\begin{enumerate}
\item Problem \eqref{eq:RHPC_prob} can be equivalently formulated only in terms of the so called ``deterministic'' part of the stochastic system \eqref{eq:stoc_sys}, i.e., the one depending only on the control input and the initial state, but not on the noise $e(k)$.
\item The initial state $x_{init}$ at time $t$ does not have to be available. Indeed, it can be accounted for with arbitrary accuracy based on a sufficiently long window of past input-output observations. 
\end{enumerate}

To show that the first point holds, it is useful to rewrite the control problem \eqref{eq:RHPC_prob} exploiting the decomposition of second order moments as the sum of squared means plus variance, i.e.,  
\begin{align*}
	 \E \left[\|y(k) - \rev{y_{r}}\|_{Q}^{2}\right] &= \| \E\left[y(k) \right] - \rev{y_{r}}\|_{Q}^{2}  + \\
	& + \underbrace{\E\left[\| y(k) - \E\left[y(k)\right]  \|_{Q}^{2}\right]}_{\mbox{\scriptsize independent of } u(k)}
\end{align*}
Since the variance term $\E[\|y(k)-\E[y(k)]\|_{Q}^{2}]$ is independent of the input signal $u(k)$, $k \in [t,t+T)$, only the {conditional (given $x_{ini} $ and $u_f$)} mean value of the output, namely $y^d(k):= \E   [y(k)]$ affects the optimization problem. Denoting {with $x^d(k)$ the conditional mean of $x(k)$, i.e. $x^d(k): = \E[x(k)]$,} it is straightforward to see that the optimal control problem \eqref{eq:RHPC_prob} can be equivalently recast as 
\begin{subequations}\label{eq:RHPC_prob_mean}
	\begin{align}
	&\underset{u(k), k \in [t,t+T)}{\mbox{minimize}}~\frac{1}{2}\left[\sum_{k=t}^{t+T-1} \!\!\!\|y^{d}(k)\!-\! \rev{y_{r}}\|_{Q}^{2}\!+\!\|u(k)\!-\!\rev{u_r}\|_{R}^{2}\right] \label{q:cost_det}\\
	& \mbox{s.t. } x^{d}(k\!+\!1)\!=\!\!Ax^{d}(k)\!+\!Bu(k),~k \!\in\! [t,t\!+\!T),\\
	& \qquad  y^{d}(k)\!=\!Cx^{d}(k)+Du(k),~k \in [t,t+T),\\ 		
	&\qquad x^{d}(t) = x_{init},\\ 
	&\qquad u(k) \in \mathcal{U},~y^{d}(k) \in \mathcal{Y},~k \in [t,t+T).
\end{align}	
\end{subequations}
Even though only the ``deterministic'' part of the system influences the optimal control problem, it is important to stress that measured data are indeed affected by noise. This should be accounted for when exploiting measured data $\mathcal{D}_{N_{data}}$ to solve \eqref{eq:RHPC_prob_mean}.

As it concerns the second observation, to prove its validity we exploit the fact that \eqref{eq:stoc_sys} can be written in innovation (or ``whitening'' \cite{ChiusoP-05a}) form. Accordingly, it holds that
\begin{equation}\label{eq:stoc_sys_inn}
	\begin{cases}
		x(k+1)=(A-KC)x(k)+Bu(k)+Ky(k)\\
		e(k)=y(k) - Cx(k)+Du(k),
	\end{cases}
\end{equation}
and, for any $\rho >0$, $\rho \in \Z$, 
\begin{equation}\label{eq:xinit}
	x(t)\!\!=\!\!(A-KC)^{\rho}x(t\!-\!\rho)\!+\!\!\sum_{p=1}^{\rho} \left[\Phi_{p}u(t\!-\!p)\!+\!\Psi_{p}y(t\!-\!p)\right],
 \end{equation}
where $\Phi_{p}=(A-KC)^{p-1}B$ and $\Psi_{p}=(A-KC)^{p-1}K$. By denoting with $\lambda_{max}$ the eigenvalues of $A-KC$ of largest absolute value, under the (mild) assumption that the matrix $A-KC$ is strictly stable, i.e., 
$|\lambda_{max}| < 1$, we have that: 
\begin{equation}\label{eq:state:approx}
 x(t) = \mathcal{C}\begin{bmatrix}
 	u_t^-\\
 	y_t^-
 \end{bmatrix} + \underbrace{ O(|\lambda_{max}|^\rho)}_{\rightarrow 0 \mbox{ for } \rho \rightarrow \infty} 
\end{equation}
where the $O(\cdot)$ term goes   to zero exponentially; $\mathcal{C}$ stacks the (reversed) controllability matrices $\mathcal{C}_{u}$ and $\mathcal{C}_{y}$, i.e., 
\begin{equation*}
	\mathcal{C}=\begin{bmatrix}\mathcal{C}_{u} &
		\mathcal{C}_{y}
		\end{bmatrix}=\begin{bmatrix}
		\Phi_\rho & \cdots & \Phi_2 & \Phi_1 & \Psi_\rho & \cdots & \Psi_2 & \Psi_1
	\end{bmatrix},
\end{equation*}
and 
\begin{equation}\label{eq:past_rho}
	u_t^-\!: =\!\begin{bmatrix} u(t\!-\!\rho) \\ \vdots \\ u(t\!-\!2) \\ u(t\!-\!1) \end{bmatrix}\!\!,\quad y_t^-\!: =\!\begin{bmatrix} y(t\!-\!\rho) \\ \vdots \\ y(t\!-\!2) \\ y(t\!-\!1)\end{bmatrix}
\end{equation}
\rev{are \emph{noisy} collections of past inputs and outputs.} The relation in \eqref{eq:state:approx} thus guarantees that, up to  $O(|\lambda_{max}|^\rho)$ terms, the initial state can be uniquely reconstructed with a finite window of past data.\\

\begin{remark}[State/data relation]\label{rem:deterministic}
	In the so-called ``deterministic case'', i.e., when there is no process/measurement noise in \eqref{eq:stoc_sys}, the state at time $t$ is a (deterministic) function of a \emph{finite} past window of input-output data, As such, $\exists \mathcal{C}_{det}$ such that 
	\begin{equation*}
		x(t)\!=\!\mathcal{C}_{det}\begin{bmatrix} u_t^- \\ y_t^-\end{bmatrix}, 
	\end{equation*}   
	provided $\rho \geq n$. This is a trivial consequence of \eqref{eq:xinit} and of the observability of the system.\\ 
\end{remark}

\begin{remark}[Choice of $\rho$ - part I]\label{rem:rho}
	In (subspace) system identification, see e.g., \cite{Bauer-05,ChiusoAUTO06,ChiusoTAC07,ChiusoTAC2010} {the quantity $\rho$, known also as the ``past horizon'',}  has to be determined from measured data trading off bias and variance. Indeed, $\rho$ should be large, so that the quantity $O(|\lambda_{max}|^\rho)$ can be neglected, but a large $\rho$ ultimately requires estimating larger \rev{sample covariance matrices}. { A simple and effective way of determining $\rho$ in a data-driven fashion is by using Akaike's criterion (e.g., FPE) \cite{Akaike1969}, with the latter choice also guaranteeing that \rev{$\|(A-KC)^\rho\| = O(|\lambda_{max}|^\rho) = o_P(1/\sqrt{N_{data}})$.}  This is in contrast with common practice in the literature of DDPC where the length $\rho$ of the past horizon 
	is not linked to the eigenvalues of $(A-KC)$, but rather it is generally chosen based on (e.g., an upper bound of) the ``order''   $n$ of the deterministic model. \rev{Finally, note that $(A-KC)$ encodes information both on the deterministic dynamics and the noise properties. Hence, the choice of $\rho$ is intimately related to the stochastic nature of the the disturbances, a feature commonly neglected in DDPC schemes. }}
\end{remark}
\section{DDPC formulation via subspace methods}\label{sec:SID}
In this Section, we exploit ideas from subspace identification to recast Problem \eqref{eq:RHPC_prob_mean} in terms of observed input output data $D_{N_{data}}$.  \rev{The results in this section are standard in subspace identification and can be found in several references, see for instance \cite{Bauer-J-00,Dahlen,Bauer-05,ChiusoTAC04,ChiusoAUTO06,ChiusoTAC07}.}

\rev{Let us first define the joint input/output process}
\begin{equation*}
	z(k):=\begin{bmatrix}
		u(k)\\
		y(k)
	\end{bmatrix},
\end{equation*}
and introduce the shorthands for the ``past'' Hankel matrices, namely,
\begin{equation}\label{eq:past}
	U_P\!\!:=\!U_{[0,\rho-1],N},~Y_P\!\!:=\!Y_{[0,\rho-1],N},~ Z_P\!\!:=\!Z_{[0,\rho-1],N}
\end{equation}
and the ``future'' ones, i.e., 
\begin{align}\label{eq:future}
	\nonumber U_F\!:=&U_{[\rho,\rho+T-1],N},~Y_F\!:=\! Y_{[\rho,\rho+T\!-\!1],N},\\
	&\qquad E_F\!:=\!E_{[\rho,\rho+T-1],N} 
\end{align}
Note that, once the lengths of both the ``past'' $\rho$ and ``future'' $T$ are fixed, the number of columns $N$ of the Hankel data matrices is chosen in such a way that all the available data are exploited, namely $N\!:=\!N_{data}\!-\!T\!-\!\rho$. \\
Let us further introduce the extended observability matrix $\Gamma \in \mathbb{R}^{pT\times n}$ associated with the system in \eqref{eq:stoc_sys}, namely
\begin{equation}\label{eq:observability}
	\Gamma=\begin{bmatrix} C \\ CA \\ CA^{2}\\ \vdots \\ CA^{T-1} \end{bmatrix},
\end{equation}   
and the Toeplitz matrices $\mathcal{H}_{d} \in \mathbb{R}^{pT \times mT}$ and $\mathcal{H}_s \in \mathbb{R}^{pT \times pT}$
formed with its Markov parameters, i.e.,
\begin{subequations}
	\begin{align}
		& 	\mathcal{H}_\rev{d} = \begin{bmatrix} 
			D & 0  & 0 & \dots & 0  \\
			CB & D  &  0 &\dots & 0 \\
			CAB & CB & D  &  \dots & 0 \\
			\vdots & \vdots  & \vdots &  \ddots & \vdots & \\
			CA^{T-2}B & CA^{T-3}B  & CA^{T-4}B & \ldots &D 
		\end{bmatrix},\\
		&\mathcal{H}_s = \begin{bmatrix} 
			I & 0  & 0 & \dots & 0 \\
			CK & I  &  0 &\dots & 0 \\
			CAK & CK & I  &  \dots & 0 \\
			\vdots & \vdots  & \vdots &  \ddots & \vdots  \\
			CA^{T-2}K & CA^{T-3}K  & CA^{T-4}K & \ldots &I 
		\end{bmatrix}. 
	\end{align}
\end{subequations}
\rev{Based on  \eqref{eq:state:approx} and {provided $\rho$ is chosen in a data-driven fashion as discussed in Remark~\ref{rem:rho}}, $X_\rho$  can be {written} as 
	\begin{equation}\label{eq:x0}
		X_\rho=\underbrace{{\mathcal C}_u U_P +  {\mathcal C}_y Y_P}_{:= {\mathcal C} Z_P} + \underbrace{(A-KC)^\rho}_{O(|\lambda_{max}|^\rho)} X_0~\aseq~{\mathcal C} Z_P,
	\end{equation}
	where $Z_{P}=\begin{bmatrix}
		U_{P}^\top & Y_{P}^\top
	\end{bmatrix}^\top$.} The Hankel matrix of future outputs $Y_F$ thus satisfies the following:
\begin{equation}\label{eq:Hankel_links}
\rev{\begin{array}{rl}
	Y_{F}&=\Gamma X_{\rho}+\mathcal{H}_{d}U_{F}+\mathcal{H}_{s}E_{F} \\
& \aseq {\mathcal C} Z_P+\mathcal{H}_{d}U_{F}+\mathcal{H}_{s}E_{F} ,
\end{array}}
\end{equation} 
which is the {equation often} considered as a starting point in subspace identification \rev{and control} \cite{OverscheeDeMoor94,ChiusoP-05a,FAVOREEL1999}. We can now characterize the future noise $E_{F}$ according to the following.\\
\begin{lemma}[Projection of noise]\label{Lemma:noise}
	For any fixed $\rho$ in \eqref{eq:past_rho}, it holds that
	\begin{equation}
		\Pi_{Z_{P},U_{F}}(E_{F})= \Upsilon\begin{bmatrix} Z_P \\ U_F\end{bmatrix}\!,
	\end{equation}
	where $\| \Upsilon\| = O_{P}\!\left(\!\frac{1}{\sqrt{N}}\!\right)$ and $Z_{P}$, $U_F$ are   defined as in \eqref{eq:past}and \eqref{eq:future}.
\end{lemma}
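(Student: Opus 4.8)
The plan is to write the orthogonal projection in closed form and reduce the claim to a statement about the empirical cross-covariance between the future innovations and the regressors $(Z_P,U_F)$, which vanishes at the rate $1/\sqrt N$ because $e(\cdot)$ is the innovation (whitening) process and the data are collected in open loop. First I would set $W:=\begin{bmatrix} Z_P \\ U_F\end{bmatrix}$ and, using the explicit expression of $\Pi_{(\cdot)}[\cdot]$ given in the Notation, write
$$\Pi_{Z_P,U_F}(E_F)=E_F W^\top (WW^\top)^\dagger W = \Upsilon W,\qquad \Upsilon := E_F W^\top (WW^\top)^\dagger.$$
Since $\Upsilon$ factorizes as the product of $E_F W^\top$ and $(WW^\top)^\dagger$, it suffices to control the two factors separately and conclude $\|\Upsilon\|=O_P(1/\sqrt N)$.

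Second, I would argue that $(WW^\top)^\dagger = O_P(1)$. Because of the $1/\sqrt N$ normalization built into the Hankel matrices \eqref{eq:Hankel}, the product $WW^\top=\frac1N\sum_j \tilde w(j)\tilde w(j)^\top$ (with $\tilde w(j)$ the $j$-th unnormalized regressor column) is a sample covariance which, by stationarity/ergodicity of the data-generating process and the stability of $A-KC$, converges to a fixed matrix $\Sigma_w$. Under the standing persistency-of-excitation assumption $\Sigma_w\succ 0$, so $(WW^\top)^\dagger \to \Sigma_w^{-1}$ and is bounded in probability.

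Third — and this is the crux — I would show $E_F W^\top = O_P(1/\sqrt N)$. Again exploiting the normalization, $E_F W^\top=\frac1N\sum_j \tilde e_F(j)\,\tilde w(j)^\top$ is a sample cross-covariance whose entries have the form $\frac1N\sum_j e(j+\rho+i)\,w_\ell(j)$. Each such innovation $e(j+\rho+i)$, $i\ge 0$, sits at a time strictly later than every component of the past block $Z_P$ (whose times are $\le j+\rho-1$); hence by the whitening property of \eqref{eq:stoc_sys_inn}, i.e. $e(\cdot)$ is a martingale difference with respect to the past, it is uncorrelated with $Z_P$, and by open-loop data collection it is independent of the future inputs $U_F$. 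Therefore $\E[\tilde e_F(j)\tilde w(j)^\top]=0$, and a central limit theorem for the sample covariance of the (stationary, ergodic, geometrically mixing) process generated by \eqref{eq:stoc_sys} gives that each scalar entry is $O_P(1/\sqrt N)$. As $\rho$ and $T$ are fixed, $E_F W^\top$ has a fixed finite number of entries, so the same rate transfers to the matrix norm.

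Combining the two factors yields $\Upsilon = [O_P(1/\sqrt N)]\,[O_P(1)] = O_P(1/\sqrt N)$, which is exactly the assertion. The main obstacle is the rigorous justification of the $1/\sqrt N$ rate for the cross-covariance: this needs a CLT for functionals of the stationary process — typically underpinned by finite fourth moments on $e$ and the geometric mixing implied by the stability of $A-KC$ — rather than the mere law of large numbers, which would only give $o_P(1)$. The remaining ingredients, namely the projection identity and the boundedness of $(WW^\top)^\dagger$, are routine once persistency of excitation is assumed.
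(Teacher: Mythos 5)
Your proof follows essentially the same route as the paper's: both write the projection explicitly, identify $\Upsilon$ as the product of the sample cross-covariance $E_F\left[\begin{smallmatrix} Z_P^\top & U_F^\top\end{smallmatrix}\right]$ (which vanishes at rate $1/\sqrt N$ because future innovations are uncorrelated with past data and with the open-loop inputs) and the inverse sample covariance of the regressors (which is $O_P(1)$ by persistency of excitation). Your version merely spells out the martingale-difference/CLT justification that the paper leaves implicit, so it is correct and matches the paper's argument.
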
  
{\begin{proof}
By definition, 
$$\Pi_{Z_{P},U_{F}}(E_{F}) = \underbrace{E_F\left[\begin{matrix} Z^\top_P & U^\top_F \end{matrix} \right] \left[ \left[\begin{matrix} Z_P \\U_F \end{matrix} \right]\left[\begin{matrix}Z^\top_P & U^\top_F \end{matrix} \right]\right]^{-1}}_{\Upsilon} \left[\begin{matrix}Z_P \\U_F \end{matrix} \right],
$$
so that 
$$
\Upsilon = \underbrace{E_F\left[\begin{matrix} Z^\top_P & U^\top_F \end{matrix} \right] }_{[ \hat \Sigma_{e_F z_P} \; \hat \Sigma_{e_F u_F}]}  \left[\left[\begin{matrix} Z_P \\ U_F \end{matrix} \right] \left[\begin{matrix}Z^\top_P & U^\top_F \end{matrix} \right]\right]^{-1}.$$
It is sufficient to observe that the term on the left-hand side $\hat \Sigma_{e_F z_P}$ and $\hat \Sigma_{e_F u_P}$ are sample cross-covariances between future innovations and past data ($z_P$) or future inputs $u_F$, and thus converge to zero in probability with rate $\frac{1}{\sqrt{N}}$, whereas the rightmost term converges to the input-output covariance matrix,  which is bounded away from zero \rev{thanks to the persistency of excitation assumptions, see, e.g., Lemma \ref{lem:PE} later.}
\end{proof}
}
This Lemma further allows us to characterize the future outputs $Y_{F}$ as follows.\\
\begin{lemma}[Projection of the output]\label{Lemma:output}
	The projection $\hat Y_F\!:=\!\Pi_{Z_{P},U_{F}}(Y_F)$ satisfies 
{	\begin{align}
		\nonumber \hat{Y}_{F}&=\Gamma \hat X_{\rho} +\mathcal{H}_{d}U_{F}+ \mathcal{H}_{s}\Pi_{Z_{P},U_{F}}(E_F)\\
		&\triangleq \Gamma {\mathcal C} Z_P +  \mathcal{H}_{d}U_{F}, \label{eq:Proj:F}
	\end{align}
	}
	where $ \hat X_{\rho} : = \Pi_{Z_{P},U_{F}}(X_{\rho}) ~\aseq~{\mathcal C} Z_P$.
\end{lemma}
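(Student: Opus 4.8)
The plan is to push the orthogonal projection $\Pi_{Z_{P},U_{F}}(\cdot)$ through the Hankel identity \eqref{eq:Hankel_links} and then control each of the resulting terms with the tools already established. Since $\Pi_{Z_{P},U_{F}}(\cdot)$ acts by right-multiplication by the (symmetric, idempotent) projector onto the row span of $\begin{bmatrix} Z_P^\top & U_F^\top\end{bmatrix}^\top$, it commutes with left multiplication by the constant matrices $\Gamma$, $\mathcal{H}_{d}$, $\mathcal{H}_{s}$. Applying it to $Y_{F}=\Gamma X_{\rho}+\mathcal{H}_{d}U_{F}+\mathcal{H}_{s}E_{F}$ therefore yields, by linearity,
\[
\hat{Y}_{F}=\Gamma\,\Pi_{Z_{P},U_{F}}(X_{\rho})+\mathcal{H}_{d}\,\Pi_{Z_{P},U_{F}}(U_{F})+\mathcal{H}_{s}\,\Pi_{Z_{P},U_{F}}(E_{F}).
\]
Setting $\hat{X}_{\rho}:=\Pi_{Z_{P},U_{F}}(X_{\rho})$ and noting that the rows of $U_{F}$ lie in the subspace onto which we project, so that $\Pi_{Z_{P},U_{F}}(U_{F})=U_{F}$, already gives the first (exact) line of \eqref{eq:Proj:F}.

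Next I would dispose of the state term. Substituting the decomposition \eqref{eq:x0}, $X_{\rho}={\mathcal C} Z_P+(A-KC)^{\rho}X_{0}$, and using again that the rows of $Z_{P}$ belong to the projection subspace (hence $\Pi_{Z_{P},U_{F}}({\mathcal C} Z_P)={\mathcal C} Z_P$), the only leftover is $\Pi_{Z_{P},U_{F}}\big((A-KC)^{\rho}X_{0}\big)$. Because an orthogonal projection is a contraction in the Frobenius norm, this term is bounded by $\|(A-KC)^{\rho}X_{0}\|=O(|\lambda_{max}|^{\rho})$, which, under the data-driven choice of $\rho$ discussed in Remark~\ref{rem:rho}, is $o_{P}(1/\sqrt{N})$. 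This proves $\hat{X}_{\rho}\aseq{\mathcal C} Z_P$ (the closing claim of the lemma) and, in particular, $\Gamma\hat{X}_{\rho}\aseq\Gamma{\mathcal C} Z_P$.

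For the noise term I would invoke Lemma~\ref{Lemma:noise}, which gives $\Pi_{Z_{P},U_{F}}(E_{F})=\Upsilon\begin{bmatrix} Z_P \\ U_F\end{bmatrix}$ with $\|\Upsilon\|=O_{P}(1/\sqrt{N})$. Since $\mathcal{H}_{s}$ is a fixed (bounded) matrix and the stacked data matrix $\begin{bmatrix} Z_P^\top & U_F^\top\end{bmatrix}^\top$ is bounded in probability — its Gram matrix converges to the bounded, positive-definite input/output covariance under the persistency-of-excitation assumption of Lemma~\ref{lem:PE} — the product $\mathcal{H}_{s}\,\Pi_{Z_{P},U_{F}}(E_{F})$ is $O_{P}(1/\sqrt{N})$. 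Collecting the three contributions, and recalling that an $o_{P}(1/\sqrt{N})$ term is in particular $O_{P}(1/\sqrt{N})$, the exact first line of \eqref{eq:Proj:F} becomes $\hat{Y}_{F}=\Gamma{\mathcal C} Z_P+\mathcal{H}_{d}U_{F}+O_{P}(1/\sqrt{N})$, i.e. $\hat{Y}_{F}\triangleq\Gamma{\mathcal C} Z_P+\mathcal{H}_{d}U_{F}$, as asserted.

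The only genuinely delicate point is tracking the stochastic orders through the projection. I must be sure that projecting never amplifies the residuals — this is exactly the contraction property used for the state term — and that right-multiplication by $\begin{bmatrix} Z_P^\top & U_F^\top\end{bmatrix}^\top$ preserves the $O_{P}(1/\sqrt{N})$ rate of $\Upsilon$, which is where boundedness in probability of the data block (via Lemma~\ref{lem:PE}) is essential. The remaining manipulations are routine linearity bookkeeping.
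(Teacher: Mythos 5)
Your argument is correct and follows essentially the same route as the paper's proof: linearity of the projection applied to the Hankel identity, the state approximation \eqref{eq:x0} for the $\Gamma \hat X_\rho$ term, and Lemma~\ref{Lemma:noise} for the noise term. The paper states this in one sentence; your version simply fills in the bookkeeping (invariance of $U_F$ and $Z_P$ under the projection, the contraction bound on the residual state term, and boundedness in probability of the data block), all of which is consistent with the intended argument.
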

{\begin{proof}
The proof {straightforwardly} follows from the observation that the projection is a linear operator, by exploting Lemma \ref{Lemma:noise} on the projection of the noise term and Equation \eqref{eq:x0} on the approximation of the state using a finite set of past data. 
\end{proof}}
Given the {projected} initial condition $\hat X_{\rho}$ and the input $U_F$, Lemma~\ref{Lemma:output} establishes that the projected output $\hat Y_F$ equals the evolution of the deterministic part of the system \eqref{eq:stoc_sys}, up to $O_P(1/\sqrt{N})$ terms. This result is formalized in the following Theorem.\\

{\begin{theorem}[Output/data relation]\label{thm:DDoutputs}
Given any $\alpha \in \R^N$, the vector $\hat y^d_f: = \hat Y_F \alpha$ satisfies the relation 
\begin{equation}\label{eq:fut_out}
\begin{array}{rcl}
\hat y^d_f  &= &  \Gamma \hat x^d(t) +  \mathcal{H}_{d} u_f +O_{P}\!\left(\!\frac{1}{\sqrt{N}}\!\right) \\
&\triangleq &  \Gamma \hat x^d(t) +  \mathcal{H}_{d} u_f,
\end{array}
\end{equation}
where 
{\begin{subequations}
\begin{align}
		&\hat x^d(t) :=  \hat X_{\rho} \alpha ~\aseq~ {\mathcal C} Z_P \alpha=  {\mathcal C} z_{init},\label{eq:state}\\
		&u_f := \begin{bmatrix}
			u(t)\\
			u(t+1)\\
			\vdots\\
			u(t+T-1)
		\end{bmatrix}=U_F \alpha, \label{eq:uf}
\end{align}
\end{subequations}}
and $z_{init} := Z_P \alpha$.
\end{theorem}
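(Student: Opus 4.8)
The plan is to obtain the claim as an immediate columnwise specialization of Lemma~\ref{Lemma:output}, obtained by post-multiplying its matrix identity by the vector $\alpha$. First I would recall that the proof of Lemma~\ref{Lemma:output} yields the exact decomposition $\hat Y_F = \Gamma \hat X_\rho + \mathcal{H}_{d} U_F + \mathcal{H}_{s}\,\Pi_{Z_P,U_F}(E_F)$, and hence $\hat Y_F = \Gamma {\mathcal C} Z_P + \mathcal{H}_{d} U_F + R_N$, where the residual $R_N := \Gamma(\hat X_\rho - {\mathcal C} Z_P) + \mathcal{H}_{s}\,\Pi_{Z_P,U_F}(E_F)$ satisfies $\|R_N\| = O_P(1/\sqrt{N})$: its first summand is $o_P(1/\sqrt{N})$ since $\hat X_\rho \aseq {\mathcal C} Z_P$ (Lemma~\ref{Lemma:output}), and its second summand is $O_P(1/\sqrt{N})$ by Lemma~\ref{Lemma:noise}.

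Next I would multiply both sides on the right by $\alpha \in \R^N$ and identify the three resulting vectors with the quantities in the statement: $U_F \alpha = u_f$ by \eqref{eq:uf}, $Z_P \alpha = z_{init}$ by definition of $z_{init}$, and $\hat X_\rho \alpha = \hat x^d(t)$ by \eqref{eq:state}. This produces $\hat y^d_f = \Gamma {\mathcal C}\, z_{init} + \mathcal{H}_{d} u_f + R_N \alpha$. To pass from $\Gamma {\mathcal C}\, z_{init}$ to $\Gamma \hat x^d(t)$, I would again invoke $\hat X_\rho \aseq {\mathcal C} Z_P$, post-multiplied by $\alpha$, which gives $\hat x^d(t) \aseq {\mathcal C}\, z_{init}$ and therefore $\Gamma {\mathcal C}\, z_{init} = \Gamma \hat x^d(t) + o_P(1/\sqrt{N})$. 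Substituting and absorbing this $o_P(1/\sqrt{N})$ contribution into the global $O_P(1/\sqrt{N})$ term reproduces exactly the two lines of \eqref{eq:fut_out}, the first being the $O_P(1/\sqrt{N})$ expansion and the second its $\triangleq$ shorthand.

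The one step that requires care, and the main obstacle, is the control of the residual $R_N \alpha$. Since $\|R_N \alpha\| \le \|R_N\|\,\|\alpha\|$, the conclusion $R_N \alpha = O_P(1/\sqrt{N})$ and likewise $(\hat X_\rho - {\mathcal C} Z_P)\alpha = o_P(1/\sqrt{N})$ hold provided $\|\alpha\|$ stays bounded in probability as $N \to \infty$. I would therefore make this boundedness explicit and note that it is met by the relevant choices of $\alpha$, namely those reconstructing a prescribed finite pair $(z_{init},u_f)$ through the minimum-norm solution $\alpha = \left[\begin{smallmatrix} Z_P \\ U_F \end{smallmatrix}\right]^{\dagger}\left[\begin{smallmatrix} z_{init} \\ u_f \end{smallmatrix}\right]$, whose norm is $O_P(1)$ because the input/output sample covariance is bounded away from zero under the persistency-of-excitation assumption (Lemma~\ref{lem:PE}). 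Once this is fixed, the remainder of the argument is a purely algebraic rearrangement driven entirely by the two convergence rates already supplied by Lemmas~\ref{Lemma:noise} and~\ref{Lemma:output}.
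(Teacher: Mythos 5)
Your proposal is correct and follows essentially the same route as the paper: the paper's proof is exactly the one-line post-multiplication of the identity $\hat Y_F \triangleq \Gamma \mathcal{C} Z_P + \mathcal{H}_d U_F$ from Lemma~\ref{Lemma:output} by $\alpha$. Your additional observation that the residual bound $R_N\alpha = O_P(1/\sqrt{N})$ requires $\|\alpha\| = O_P(1)$ --- which holds for the minimum-norm solutions actually used, but not for literally ``any'' $\alpha \in \R^N$ --- is a legitimate refinement that the paper's proof leaves implicit.
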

\begin{proof}
The proof is an immediate consequence of Lemma \ref{Lemma:output}. In fact, defining $\hat y^d_f: = \hat Y_F \alpha$ and using Equation \eqref{eq:Proj:F},
we have that 
$$
\hat y^d_f: = \hat Y_F \alpha \triangleq \Gamma {\mathcal C} \underbrace{Z_P  \alpha}_{:= x_{init}} +  \mathcal{H}_{d}\underbrace{U_{F}  \alpha}_{:= u_{f}}.
$$
\end{proof}
}
The result in Theorem~\ref{thm:DDoutputs} should be read as follows. If the sequence of past input-output $u(k)$ and $y(k)$  for $k \in [t-\rho,t-1]$ equals $z_{init}$  and the future inputs $u(k)$ in the time window $k \in [t,t+T-1]$ (see \eqref{eq:uf}) are given by  $u_{f}$,  the corresponding ``deterministic" output, i.e.,
\begin{equation*}
y_f^d:=\begin{bmatrix}
	y^d(t) \\ y^d(t + 1) \\ \vdots \\ y^d(t+T-1)
\end{bmatrix},
\end{equation*}
is a linear transformation through $\alpha$ of the projected future outputs $\hat Y_F$, up to $O_P(1/\sqrt{N})$ terms. 

\subsection*{Towards DDPC}
For every pair of initial conditions and future inputs that can be written as linear combinations of $Z_P$ and $U_F$ (see \eqref{eq:past} and \eqref{eq:future}), Theorem~\ref{thm:DDoutputs} shows that one can compute the deterministic output of \eqref{eq:stoc_sys} (up to $O_P(1/\sqrt{N})$ terms) from a \emph{finite set}  input-output data only, \emph{without knowing} the true system \eqref{eq:stoc_sys}. {Under the additional assumption that the training input $u(t)$ has a full rank spectral density matrix and the innovation process has positive definite variance $Var\{e(t)\} >0$ \footnote{Since the our purpose \emph{is not} to discuss the weakest conditions under which the results of Theorem~\ref{thm:DDoutputs} can be generalized, here we make a sufficient assumption that is general enough for being widely applicable in practice.}, we can guarantee that the matrices $Z_P$ and $U_F$ have full rank, so that \emph{any}     possible initial condition and sequence of control inputs can be generated by linear combination of their columns. The following lemma formalizes this result.\\
\begin{lemma}[Persistency of excitation]\label{lem:PE}
	If the input process has full rank spectral density that is bounded away from zero and  $Var\{e(t)\} >0$, then for \emph{any} choice of $\rho$ and $T$ and provided $N > (m+p)(\rho+T)$, the block Hankel matrix 
	\begin{equation}\label{eq:Hankeldatamatrix}
		Z_{data}: = \begin{bmatrix} Z_P \\ U_F \\ Y_F \end{bmatrix} \in \R^{(m+p)(\rho+T) \times N}
	\end{equation}  has full rank almost surely. 
\end{lemma}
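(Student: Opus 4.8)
The plan is to reduce the claim to the strict positive definiteness of a single covariance matrix. First I would observe that $Z_{data}$ is nothing but a row permutation of the full block-Hankel matrix $Z_{[0,\rho+T-1],N}$ built from the joint process $z(k)=[u(k)^\top\ y(k)^\top]^\top$: the block rows of $Z_P$ cover the lags $0,\dots,\rho-1$, while $U_F$ and $Y_F$ merely split the $u$- and $y$-components of the lags $\rho,\dots,\rho+T-1$. Since permuting rows does not alter rank, $Z_{data}$ has full row rank iff $Z_{[0,\rho+T-1],N}$ does. Writing $\zeta(j):=[z(j)^\top\ \dots\ z(j+\rho+T-1)^\top]^\top\in\R^{(m+p)(\rho+T)}$, the $1/\sqrt N$ normalization in the Hankel definition gives $Z_{[0,\rho+T-1],N}Z_{[0,\rho+T-1],N}^\top=\tfrac1N\sum_{j=0}^{N-1}\zeta(j)\zeta(j)^\top=:\hat\Sigma_z$, so full row rank is exactly $\hat\Sigma_z\succ0$. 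I would then split the argument into (i) showing the population covariance $\Sigma_\zeta:=\E[\zeta(k)\zeta(k)^\top]$ is positive definite, and (ii) transferring this to almost-sure nonsingularity of the finite-sample Gram matrix $\hat\Sigma_z$.

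For step (i), suppose $\Sigma_\zeta$ is singular, so there is $v\neq0$ with $v^\top\zeta(k)=0$ almost surely; writing $v$ in its lag/$uy$ blocks this reads $\sum_i(a_i^\top u(k+i)+b_i^\top y(k+i))=0$ a.s. Let $i^\star$ be the largest index with $b_{i^\star}\neq0$. Because the data are collected in open loop, the future innovation $e(k+i^\star)$ is independent of every input sample and of every past innovation; hence, using $y(k+i)=Cx(k+i)+Du(k+i)+e(k+i)$ with $x(k+i)$ depending only on inputs and innovations up to time $k+i-1$, the only term carrying $e(k+i^\star)$ is $b_{i^\star}^\top e(k+i^\star)$. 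Correlating the identically-zero combination with $e(k+i^\star)$ therefore yields $b_{i^\star}^\top Var\{e\}=0$, and $Var\{e\}\succ0$ forces $b_{i^\star}=0$, a contradiction; so all $b_i=0$. The relation collapses to $\sum_i a_i^\top u(k+i)=0$ a.s., i.e. a vanishing linear combination of the stacked input vector. But a spectral density bounded away from zero makes the block-Toeplitz covariance of $[u(k)^\top\ \dots\ u(k+\rho+T-1)^\top]^\top$ positive definite (its eigenvalues are bounded below by the essential infimum of the smallest eigenvalue of the input spectrum $\Phi_u$), so all $a_i=0$ as well, giving $v=0$. Hence $\Sigma_\zeta\succ0$.

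For step (ii) I would pass from population to sample by a genericity argument. The event $\{\hat\Sigma_z\text{ singular}\}$ is contained in the zero set of $\det\hat\Sigma_z$, a polynomial in the finitely many underlying random variables $\{u(k)\},\{e(k)\}$ over the window $[0,\rho+T+N-2]$. Since $\mathcal{H}_s$ is lower-triangular with identity diagonal, the map from the $e$-block to the $y$-block (given the inputs and the initial condition) is an invertible affine transformation, so the joint law of the $(u,y)$ entries is absolutely continuous whenever that of the $(u,e)$ entries is; and $\Sigma_\zeta\succ0$ guarantees this determinant polynomial is not identically zero, so its zero set has Lebesgue—hence probability—measure zero for every $N>(m+p)(\rho+T)$. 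I expect the genuine obstacle to be step (i): the recursive elimination of the innovation contributions through whiteness and $Var\{e\}\succ0$, followed by the appeal to persistency of excitation of the input, is where all the hypotheses are actually consumed. Step (ii) is technically the only place where an absolute-continuity (or Gaussianity) assumption on $u$ and $e$ is needed to upgrade ``generic'' to ``almost sure'' at finite $N$.
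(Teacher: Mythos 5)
Your proof is correct and follows essentially the same route as the paper's: both arguments reduce the claim to the non-degeneracy of the joint input--output process, which the paper obtains by citing Hannan for the fact that the joint spectral density does not vanish on the unit circle (so the joint past and future-input spaces intersect trivially), while you prove the equivalent finite-window statement $\Sigma_\zeta \succ 0$ directly, by recursively eliminating the innovation contributions via whiteness and $Var\{e\}\succ 0$ and then invoking the positive-definite block-Toeplitz covariance of the input. The one place you go beyond the paper is in flagging that the passage from $\Sigma_\zeta \succ 0$ to almost-sure full rank of the \emph{finite}-$N$ sample Gram matrix requires an absolute-continuity (or similar genericity) assumption on the joint law of $(u,e)$; the paper's proof simply asserts this last step, so your observation correctly identifies a hypothesis that is used implicitly but not stated in the lemma.
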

\begin{proof}
The proof is a direct consequence of the fact that, under the stated assumptions, the joint spectral density matrix of the input-output process $z(t):=[u^\top(t) \; y^\top(t)]^\top$ does not vanish on the unit circle and, therefore, the intersection between the (joint) past and input spaces contains only the zero random variable (see e.g. \cite{Hannan}).  Thus, the Hankel matrix formed with input output trajectories has full rank almost surely.
\end{proof}}
Under the latter, the result in Theorem~\ref{thm:DDoutputs} can be generalized to all initial conditions and future inputs, as stated in the main result of this Section.\\

\begin{theorem}[Output/data relation - generalized]\label{thm:WL:Stoch}
	Under the assumptions in Lemma \ref{lem:PE},  given any (past) joint input and output trajectory 
	\begin{equation}\label{eq_zini}
		z_{init}: = 
		\begin{bmatrix} z(t-\rho) \\ \vdots \\ z(t-2) \\ z(t-1) \end{bmatrix},
	\end{equation}
	and any choice of the future control input 
	\begin{equation}\label{eq_uf}
		u_f: = \begin{bmatrix} u(t) \\ u(t + 1) \\ \vdots \\ u(t+T-1)\end{bmatrix},
	\end{equation} 
	the corresponding ``deterministic'' output 
	$$
	y_f^d:= \begin{bmatrix} y^d(t) \\ y^d(t + 1) \\ \vdots \\ y^d(t+T-1)\end{bmatrix}
	$$ satisfies: 
	\begin{equation}\label{eq:approx:output}
		y_f^d = \hat Y_F \alpha^\star + O_P(1/\sqrt{N})  \triangleq \hat Y_F \alpha^\star
	\end{equation}
	where $\alpha^\star$ is the minimum-norm solution of the  system of linear equations:
	\begin{equation}\label{eq:opt_alpha}
		\begin{bmatrix} z_{ini\rev{t}} \\ u_f \end{bmatrix} = 
		\begin{bmatrix} Z_P \\ U_F\end{bmatrix} \alpha
	\end{equation}
	{where $\hat Y_F:=  \Pi_{Z_{P},U_{F}}(Y_F)$.}
\end{theorem}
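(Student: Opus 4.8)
The plan is to deduce the generalized relation from Theorem~\ref{thm:DDoutputs} by using persistency of excitation to realize the \emph{arbitrary} pair $(z_{init},u_f)$ as an \emph{exact} linear combination of the columns of the data Hankel matrices, and then to read off the claim by substituting that combination into the decomposition of $\hat Y_F$. First I would invoke Lemma~\ref{lem:PE}: under the stated assumptions and for $N>(m+p)(\rho+T)$, the matrix $Z_{data}$ in \eqref{eq:Hankeldatamatrix} has full rank almost surely, hence so does its row sub-block $\begin{bmatrix} Z_P \\ U_F\end{bmatrix}$. Full row rank makes the linear system \eqref{eq:opt_alpha} consistent for \emph{every} right-hand side, so the minimum-norm solution $\alpha^\star = \begin{bmatrix} Z_P \\ U_F\end{bmatrix}^\dagger \begin{bmatrix} z_{init} \\ u_f\end{bmatrix}$ exists, is unique, and reproduces the data \emph{exactly}, i.e. $Z_P\alpha^\star = z_{init}$ and $U_F\alpha^\star = u_f$.

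Next I would evaluate the decomposition of Lemma~\ref{Lemma:output}, $\hat Y_F = \Gamma \hat X_\rho + \mathcal{H}_d U_F + \mathcal{H}_s \Pi_{Z_P,U_F}(E_F)$, at $\alpha=\alpha^\star$, treating the three terms separately. The input term is exact, $\mathcal{H}_d U_F \alpha^\star = \mathcal{H}_d u_f$. For the noise term, Lemma~\ref{Lemma:noise} gives $\Pi_{Z_P,U_F}(E_F)\alpha^\star = \Upsilon \begin{bmatrix} Z_P \\ U_F\end{bmatrix}\alpha^\star = \Upsilon \begin{bmatrix} z_{init} \\ u_f\end{bmatrix}$, and since $\|\Upsilon\| = O_P(1/\sqrt{N})$ while $(z_{init},u_f)$ is a fixed vector and $\mathcal{H}_s$ a fixed matrix, this contributes only $O_P(1/\sqrt{N})$. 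For the state term, using $\hat X_\rho \aseq \mathcal{C} Z_P$ together with $Z_P\alpha^\star = z_{init}$ yields $\Gamma \hat X_\rho \alpha^\star = \Gamma \mathcal{C} z_{init} + o_P(1/\sqrt{N})$. Collecting the three gives $\hat Y_F \alpha^\star = \Gamma \mathcal{C} z_{init} + \mathcal{H}_d u_f + O_P(1/\sqrt{N})$. It then remains to connect this to the target: writing $y_f^d = \Gamma x_{init} + \mathcal{H}_d u_f$ for the deterministic evolution and substituting the state reconstruction \eqref{eq:state:approx} with $\rho$ chosen as in Remark~\ref{rem:rho} (so that $\|(A-KC)^\rho\| = o_P(1/\sqrt{N})$) gives $y_f^d = \Gamma \mathcal{C} z_{init} + \mathcal{H}_d u_f + o_P(1/\sqrt{N})$. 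Subtracting the two expressions leaves $y_f^d - \hat Y_F\alpha^\star = O_P(1/\sqrt{N})$, which is exactly \eqref{eq:approx:output}.

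The step I expect to be the genuine obstacle is the state term, where the \emph{matrix}-level approximation $\hat X_\rho \aseq \mathcal{C} Z_P$ must be pushed through multiplication by $\alpha^\star$, a quantity that is itself a function of the random data (so Theorem~\ref{thm:DDoutputs}, stated for a fixed $\alpha$, does not apply verbatim). The resolution I would pursue exploits the $1/\sqrt{N}$ normalization built into the Hankel matrices \eqref{eq:Hankel}: by persistency of excitation (cf. the closing line of the proof of Lemma~\ref{Lemma:noise}) the Gram matrix $\begin{bmatrix} Z_P \\ U_F\end{bmatrix}\begin{bmatrix} Z_P \\ U_F\end{bmatrix}^\top$ converges to a limit bounded away from zero, whence $\|\alpha^\star\| = O_P(1)$ even though $\alpha^\star \in \R^N$. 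With $\|\alpha^\star\| = O_P(1)$ in hand, $\|(\hat X_\rho - \mathcal{C} Z_P)\alpha^\star\| \le \|\hat X_\rho - \mathcal{C} Z_P\|\,\|\alpha^\star\| = o_P(1/\sqrt{N})\cdot O_P(1) = o_P(1/\sqrt{N})$, which closes the one non-routine estimate; everything else is bookkeeping of $O_P$/$o_P$ orders.
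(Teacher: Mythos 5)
Your proof is correct and follows essentially the same route as the paper's: Lemma~\ref{lem:PE} gives full rank of the stacked Hankel matrix, hence \eqref{eq:opt_alpha} is solvable for any right-hand side, and the conclusion follows by applying the per-$\alpha$ relation of Theorem~\ref{thm:DDoutputs} to the minimum-norm solution $\alpha^\star$. The one place you go beyond the paper --- noting that $\alpha^\star$ is data-dependent, so the $O_P(1/\sqrt{N})$ bound of Theorem~\ref{thm:DDoutputs} does not apply verbatim, and closing that gap via $\|\alpha^\star\| = O_P(1)$ (from the $1/\sqrt{N}$ normalization and the persistency-of-excitation lower bound on the Gram matrix), together with the exactness of $Z_P\alpha^\star = z_{init}$ and $U_F\alpha^\star = u_f$ --- is a legitimate refinement that the paper's three-line proof silently elides.
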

{\begin{proof}
Under the assumption of Lemma \ref{lem:PE}, the matrix $Z_{data}$ has full rank and, therefore, $\forall$ $z_{init}$ and $u_f$, there exists $\alpha$ such that 
\begin{equation}\label{eq:init_uf}
\begin{bmatrix} z_{init} \\ u_f  \end{bmatrix} =  \begin{bmatrix} Z_P \\ U_F  \end{bmatrix} \alpha.
\end{equation}
Thus, exploiting Theorem \ref{thm:DDoutputs}, the corresponding deterministic output satisfies 
$$
y_f^d = \hat Y_F \alpha + O_P(1/\sqrt{N})  \triangleq \hat Y_F \alpha.
$$
This is true for \emph{all} possible solutions of \eqref{eq:init_uf}, and in particular it holds for its minimum-norm solution $\alpha^*$.\\ 
\end{proof}}

\begin{remark}[The case of deterministic systems]\label{rem:initialstate:solution}
		The reader may observe that, when $e(t)=0$ (that is the system is actually deterministic), Lemma \ref{lem:PE} does not hold. Indeed, for for $\rho > n$, it is well know (see, e.g., \cite{Moonen}) that the Hankel matrix $Z_{data}$ in \eqref{eq:Hankeldatamatrix}  and $Z_{P} \in  \R^{(m+p)\rho \times N}$ in \eqref{eq:opt_alpha} have rank equal to 
		\begin{align*}
			& \mbox{rank}(Z_{data})=n +   m(\rho+T) < (m+p)(\rho+T),\\
			& \mbox{rank}(Z_P)=n +  m\rho <(m+p)\rho.
		\end{align*}
		{These relations are indeed the basis for the so-called ``intersection algorithms'' in subspace identification, and also can be seen as algebraic formulations of the well known ``Willems' fundamental lemma'' \cite{willems2005note}.}
		Nonetheless, in this case, any finite (deterministic) trajectory $z_{init}$ of the system \eqref{eq:stoc_sys} belongs to the column span of $Z_P$. As such, provided that $z_{init}$ is an ``admissible'' sequence of input/output pairs of the given deterministic system, then \eqref{eq:opt_alpha} has a solution. 
	\end{remark}

\rev{Based on the previous results, w}e are now ready to recast the control problem \eqref{eq:RHPC_prob_mean} in a data driven fashion as {follows}:
\begin{subequations}\label{eq:DD_RHPC_prob_mean}
	\begin{align}
		&\underset{\rev{u_{f}}
		}{\mbox{minimize}}~~	J\left(\begin{bmatrix}
			y_{f}^{d}\\
			u_{f}
		\end{bmatrix}\right)\\
		& \mbox{s.t. }  \alpha^\star=\begin{bmatrix}
			Z_{P}\\
			U_{F}
		\end{bmatrix}^{\dagger}\begin{bmatrix}
			z_{init}\\
			u_{f}
		\end{bmatrix},\label{eq:prediction_model_DD1}\\
		& \qquad y_{f}^{d}=\hat{Y}_{F}\alpha^{\star},\label{eq:prediction_model_DD2}\\
		& \qquad u(k) \in \mathcal{U},~y^{d}(k) \in \mathcal{Y},~k \in [t,t+T), \label{eq:constraints_DD}	 
	\end{align}
	where
	\begin{equation}\label{eq:cost_DD}
		J\left(\!\begin{bmatrix}
			y_{f}^{d}\\
			u_{f}
		\end{bmatrix}\!\right)\!\!=\! \frac{1}{2}\left[\sum_{k=t}^{t+T-1}\!\| y^d(k)- \rev{y_{r}}
		\|_{Q}^{2}+\|u(k)\!-\!\rev{u_r}\|_{R}^{2}\right]\!,
	\end{equation}
\end{subequations}
and $\hat Y_F: = \Pi_{Z_P,U_F}(Y_F)$, while $z_{init}$ and $u_f$ are defined as in \eqref{eq_zini} and \eqref{eq_uf}. \rev{Except for the use of a slightly different notation and the introduction of constraints, the problem in \eqref{eq:DD_RHPC_prob_mean} corresponds to the \emph{Stochastic Predictive Control} (SPC) problem firstly formalized in \cite{FAVOREEL1999}}.

\rev{\section{A unified outlook on DDPC problems}\label{Sec:unifiednew}}
 Recent papers have discussed problems \rev{that are very similar to \eqref{eq:DD_RHPC_prob_mean}, generally starting from a \emph{deterministic} viewpoint, i.e., assuming that $e(t) = 0$, $\forall t$ in \eqref{eq:stoc_sys}, and then coping with measurement noise by introducing slack variables and regularization terms. 
 	However, by reformulating these problems with our notation, we will show that \textit{all of them can be cast into a unified framework}. In particular, we will show the connections among the problem in \eqref{eq:DD_RHPC_prob_mean}, the one with slacks on the initial conditions proposed in \cite[Section IV.B]{Felix2021}, the formulation tailored to cope with bounded noise introduced in \cite{berberich2020data} and that with elastic net regularization given in \cite[Section IV.D]{dorfler2022bridging}.}  

\rev{\subsection{SPC with slacks}}
\rev{Based on our notation, the SPC problem tackled in \cite{Felix2021} to handle non-deterministic scenarios (with measurement noise only) can be recast as follows:}
\rev{\begin{subequations}\label{eq:DD_Lucia}
	\begin{align}
	&\underset{u_{f},\sigma \succeq 0}{\mbox{minimize}}~~	J\left(\begin{bmatrix}
			y_{f}^{d}\\
			u_{f}
		\end{bmatrix}\right)+\lambda\|\sigma\|_{2}^{2} \label{eq:DD_Lucia_cost}\\
		& \mbox{s.t. }  \alpha=\begin{bmatrix}
			Z_{P}\\
			U_{F}
		\end{bmatrix}^{\dagger}\left(\begin{bmatrix}
			z_{init}\\
			u_{f}
		\end{bmatrix}+\begin{bmatrix}
		\sigma\\
		0
	\end{bmatrix}\right),\label{eq:prediction_model_Lucia}\\
		& \qquad y_{f}^{d}=\hat{Y}_{F}\alpha,\\
		& \qquad u(k) \in \mathcal{U},~y^{d}(k) \in \mathcal{Y},~k \in [t,t+T), 
	\end{align}
\end{subequations}
where $\sigma \in \mathbb{R}^{\rho(m+p)}$ is a slack variable to be optimized in order to cope with noise on the data used to build $z_{init}$, while $\lambda>0$ is a tunable parameter\footnote{\rev{By considering a diagonal matrix $\Lambda$ rather than a scalar $\lambda$, different weights can be chosen for the slack acting on past inputs and outputs, like in the framework proposed in \cite{Felix2021}.}}.}
\rev{The role of the additional slack $\sigma$ introduced in \eqref{eq:DD_Lucia} is linked to the results presented in Section~\ref{Sec:prelim} by the following proposition.
\rev{\begin{lemma}[Asymptotic regularization with slacks]
Assume that the cost $J(\cdot)$ in \eqref{eq:DD_Lucia_cost} is equal to \eqref{eq:cost_DD}. Then the solution to problem \eqref{eq:DD_RHPC_prob_mean} coincides with the one of \eqref{eq:DD_Lucia} when $\lambda \to \infty$. Moreover, $\lambda \to \infty$ is the optimal choice even for finite, but large, $N_{data}$ and $\rho$ chosen according to the Akaike's criterion.     
\end{lemma}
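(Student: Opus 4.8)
The plan is to treat $\lambda\|\sigma\|_2^2$ as an exact penalty on the equality constraint \eqref{eq:prediction_model_Lucia}: when $\sigma=0$ that constraint collapses exactly onto \eqref{eq:prediction_model_DD1}, so \eqref{eq:DD_Lucia} becomes verbatim \eqref{eq:DD_RHPC_prob_mean}, and the whole argument reduces to showing that the optimal slack is driven to zero as $\lambda\to\infty$, and that a vanishing slack is precisely what one wants on finite data. Throughout I assume, as is standard for \eqref{eq:RHPC_prob_mean} to be meaningful, that the underlying control problem is feasible, and I use that, by Lemma~\ref{lem:PE}, the matrix $[Z_P^\top\ U_F^\top]^\top$ has full row rank almost surely, so that for every $\sigma$ the constraint \eqref{eq:prediction_model_Lucia} matches the (perturbed) initial condition $z_{init}+\sigma$ exactly.

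For the coincidence in the limit I would argue as follows. Let $u_f^{SPC}$ be the minimizer of \eqref{eq:DD_RHPC_prob_mean} (unique, since $R\succ0$ makes $J$ strictly convex in $u_f$), with optimal value $J^\star$. The pair $(u_f^{SPC},\sigma=0)$ is feasible for \eqref{eq:DD_Lucia} with cost $J^\star$ for every $\lambda$, so the optimal value of \eqref{eq:DD_Lucia}, namely $J(u_f^\star(\lambda))+\lambda\|\sigma^\star(\lambda)\|_2^2$, is at most $J^\star$; since $J\ge0$ (as $Q\succeq0$, $R\succ0$), this gives $\|\sigma^\star(\lambda)\|_2^2\le J^\star/\lambda\to0$, hence $\sigma^\star(\lambda)\to0$. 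The perturbed constraint then converges to \eqref{eq:prediction_model_DD1}, and a standard stability argument for the parametric program --- coercivity of $J$ in $u_f$ (as $R\succ0$) bounds the minimizers, while continuity of $J$ and closedness of $\mathcal{U},\mathcal{Y}$ give feasibility of limit points --- shows that every limit point of $u_f^\star(\lambda)$ is feasible for \eqref{eq:DD_RHPC_prob_mean} with value at most $J^\star$; by optimality and uniqueness it equals $u_f^{SPC}$, whence $u_f^\star(\lambda)\to u_f^{SPC}$.

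For the optimality of $\lambda\to\infty$ on finite but large data, I would quantify what a nonzero slack does to the predictor. Because $[Z_P^\top\ U_F^\top]^\top$ has full row rank, \eqref{eq:prediction_model_Lucia} reproduces the initial condition $z_{init}+\sigma$ exactly, so applying Theorem~\ref{thm:WL:Stoch} to this perturbed trajectory shows that $\hat{Y}_F\alpha$ equals the deterministic output associated to the state $\mathcal{C}(z_{init}+\sigma)$; relative to the true $\mathcal{C}z_{init}$ this is a deterministic bias $\Gamma\mathcal{C}\sigma=O(\|\sigma\|)$ that does not shrink with $N$. In contrast, by Theorem~\ref{thm:WL:Stoch} with the Akaike-based $\rho$ of Remark~\ref{rem:rho} (for which the truncation term in \eqref{eq:x0} is $o_P(1/\sqrt{N_{data}})$), the $\sigma=0$ predictor already matches the true deterministic output up to $O_P(1/\sqrt{N})$. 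Hence any $\sigma\neq0$ trades a vanishing stochastic error for a non-vanishing bias in the very quantity $y_f^d$ that enters both the cost \eqref{eq:cost_DD} and the output constraints, yielding a control that is optimal for the wrong initial state; minimizing this bias selects $\sigma=0$, i.e.\ $\lambda\to\infty$, for every $N_{data}$ large enough for Lemma~\ref{lem:PE} to apply.

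The main obstacle I anticipate is exactly this second part: making rigorous the sense in which $\lambda\to\infty$ is \emph{optimal} at finite $N$, and not merely in the limit. The clean formulation is that $\sigma=0$ is the unique slack for which the data-driven predictor is asymptotically unbiased, so any finite $\lambda$ can only lower the surrogate cost at the price of an $O(\|\sigma\|)$ bias in the genuine predicted output. Formalizing this requires cleanly separating the two error scales --- the stochastic $O_P(1/\sqrt{N})$ floor of the projection-based predictor against the deterministic $O(\|\sigma\|)$ bias injected by the slack --- and arguing that no finite $\lambda$ drives $\sigma$ below that floor without also sending $\lambda\to\infty$. This is precisely where the Akaike choice of $\rho$ is indispensable, since it certifies that the truncation in \eqref{eq:x0} is negligible with respect to $1/\sqrt{N}$ and hence that the slack has no genuine bias-reduction role to play.
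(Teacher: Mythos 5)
Your proposal is correct and follows essentially the same route as the paper: the paper disposes of the first claim as a direct consequence of the two formulations (your exact-penalty argument showing $\|\sigma^\star(\lambda)\|_2^2 \le J^\star/\lambda \to 0$ is just a rigorous fleshing-out of that), and its proof of the second claim rests on precisely the two error scales you identify --- the $o_P(1/\sqrt{N_{data}})$ truncation error in \eqref{eq:state:approx} under the Akaike choice of $\rho$ versus the $O_P(1/\sqrt{N})$ projection error in \eqref{eq:approx:output} --- so that the slack has no bias-correction role left to play. Your version is considerably more detailed than the paper's two-sentence sketch, and your candid remark that the finite-$N$ ``optimality'' of $\lambda\to\infty$ is the delicate part accurately reflects the informality of the original claim.
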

\begin{proof}
	The proof of the first statement is a direct consequence of the formulations of the problems. The second claim straightforwardly follows from \eqref{eq:state:approx} and Remark~\ref{rem:rho}, showing that the error due to finite past is $o_P(1/\sqrt{N_{data}})$ (and thus can be neglected), whereas the error due to finite $N_{data}$ in the projection \eqref{eq:approx:output} is instead $O_P(1/\sqrt{N_{data}})$.
\end{proof}
}}

\rev{This additional insight on problem \eqref{eq:DD_Lucia} provides a direct connection between $\rho$ and $N_{data}$ and the value of the slack variables needed to counteract the effect of noise on the initial conditions. Indeed, when $\rho$ is selected according to Akaike's criterion, Remark~\ref{rem:rho} directly links the accuracy of the reconstructed state with the dimension of the dataset. As such,  for large $N_{data}$ , then $\lambda \to +\infty$ is the optimal choice. This will be confirmed by the simulation results reported in Section  \ref{sec:benchmark} (see Figure \ref{Fig:luciaJ}).}
\rev{\subsection{DDPC with bounded measurement noise}}
\rev{Let us now focus on a stochastic settings in which $K=0$ in \eqref{eq:stoc_sys_inn}, the measurement noise $e(t)$ is bounded, namely $\|e(t)\|_{\infty}\leq \bar{\varepsilon}$, and $\bar{\varepsilon}$ is assumed to be \emph{known}. In our framework, the regularized problem proposed in \cite{berberich2020data} to tackle this scenario can be rewritten as follows: 
\begin{subequations}\label{eq:DD_Berberich}
	\begin{align}
		& \underset{u_f, y_f^d, \alpha,\sigma}{\mbox{minimize}}~~~ J\left( \begin{bmatrix} y_f^d \\ u_f \end{bmatrix}\right)  + \lambda_\alpha\bar{\varepsilon}\|\alpha\|_2^{2}+ \lambda_\sigma\| \sigma \|_2^{2}
		\label{eq:cost_Berberich}\\
		& \mbox{s.t. } \begin{bmatrix}
			z_{init}+\mathbbm{1}_{y}\sigma_{init} \\u_{f}\\y_{f}^{d}+\sigma_{y}			
		\end{bmatrix}=\begin{bmatrix}
			Z_{P}\\U_{F}\\Y_{F}
		\end{bmatrix}\alpha,\label{eq:modelBer}\\
		& \qquad \begin{bmatrix}
			u(k)\\
			y^{d}(k)
			\end{bmatrix}=\begin{bmatrix}
			u_{r}\\
			y_{r}
		\end{bmatrix}\!,~~k \in [t\!+\!T\!-\!\rho,t\!+\!T),\label{eq:terminalBer}\\
		& \qquad u(k) \in \mathcal{U},~y^{d}(k) \in \mathcal{Y},~k \in [t,t+T),\\
		& \qquad \|\sigma\|_{\infty} \leq \bar{\varepsilon}(1+\|\alpha\|_{1}), ~k \in [t,t+T), \label{eq:noiseBer}
	\end{align}
where 
\begin{equation}
	\sigma=\begin{bmatrix}
		\sigma_{init}\\
		\sigma_{y}
	\end{bmatrix}  \in \mathbb{R}^{p(T+\rho)},
\end{equation}
is a vector of slacks accounting for the noise acting on the output measurements used to build $Z_{P}$ and $Y_{F}$, while $\mathbbm{1}_{y}$ is a selector function, introduced to add the slack only on the initial outputs comprised in $z_{init}$. Note that, the constraint in \eqref{eq:terminalBer} is a terminal ingredient introduced to guarantee practical stability and recursive feasibility of the DDPC scheme, and the inequality in \eqref{eq:noiseBer} is a \emph{non-convex}\footnote{\rev{The constraint in \eqref{eq:noiseBer} cannot be enforced without resorting to a non-convex optimization routine. Thus, the entity of the slack is practically contained by a proper tuning of $\lambda_{\sigma}$.}} constraint that connects the slack variables to the known features of the measurement noise.
\end{subequations}}

\rev{
	By recasting the robust DDPC formulation in \cite{berberich2020data} within our framework, we now establish its relationship with the constrained SPC scheme in \eqref{eq:DD_RHPC_prob_mean} through the following result, {which shows that the formulation in \cite{berberich2020data}  complemented with an additional constraint on the slack variable $\sigma_y$ is equivalent to constrained SPC, for a suitable chioice of regularization parameters.}
\begin{theorem}[Regularization with bounded noise]\label{thm:DDberb}
	Let the cost $J(\cdot)$ in \eqref{eq:DD_Berberich} be defined as in \eqref{eq:cost_DD} and the non-convex constraint in \eqref{eq:noiseBer} be neglected. Assume that the SPC problem in \eqref{eq:RHPC_prob_mean} is augmented with the terminal ingredient in \eqref{eq:terminalBer}. {Then,  under the additional constraint on the slack variable $\sigma_y = Y_F (I - \Pi) \alpha$, where $\Pi$ is the orthogonal projector onto the column span of $\begin{bmatrix} Z^\top_P  & U^\top_F \end{bmatrix}$, i.e.,
\begin{equation}\label{eq:pi}
	\Pi := \begin{bmatrix} Z^\top_P  & U^\top_F \end{bmatrix}\begin{bmatrix} Z_P  \\ U_F \end{bmatrix}^{\dagger},
\end{equation}
}the solutions of \eqref{eq:DD_RHPC_prob_mean} with terminal constraints and \eqref{eq:DD_Berberich} coincide for $\lambda_{\alpha}\bar{\varepsilon}=0$ and $\lambda_{\sigma} \to \infty$. 
\end{theorem}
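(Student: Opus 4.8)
The plan is to read $\lambda_\sigma \to \infty$ as an exact-penalty limit and to show that, once $\lambda_\alpha\bar\varepsilon=0$ and the extra constraint $\sigma_y = Y_F(I-\Pi)\alpha$ are in force, problem~\eqref{eq:DD_Berberich} reduces algebraically to the constrained SPC problem~\eqref{eq:DD_RHPC_prob_mean} augmented with the terminal ingredient~\eqref{eq:terminalBer}. It is convenient to abbreviate $M:=\begin{bmatrix} Z_P \\ U_F\end{bmatrix}$, which has full row rank by Lemma~\ref{lem:PE}; then $\Pi$ of~\eqref{eq:pi} is the orthogonal projector onto the row span of $M$, namely $\Pi = M^\dagger M$, so that $\Pi=\Pi^2$, $\hat Y_F = Y_F\Pi$, and $\Pi\alpha^\star = \alpha^\star$ for the minimum-norm SPC solution $\alpha^\star = M^\dagger\begin{bmatrix} z_{init}\\ u_f\end{bmatrix}$. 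These three identities are essentially all I will need.

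First I would drop the $\alpha$-regularizer: with $\lambda_\alpha\bar\varepsilon=0$ the cost~\eqref{eq:cost_Berberich} becomes $J(\cdot)+\lambda_\sigma\|\sigma\|_2^2$. Then I would use the extra constraint to recover the SPC output law. Combining the last block of~\eqref{eq:modelBer}, $Y_F\alpha = y_f^d + \sigma_y$, with $\sigma_y = Y_F(I-\Pi)\alpha$ gives
$$ y_f^d = Y_F\alpha - Y_F(I-\Pi)\alpha = Y_F\Pi\alpha = \hat Y_F\alpha, $$
i.e. exactly the SPC prediction~\eqref{eq:prediction_model_DD2} evaluated at $\alpha$. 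Reading the other blocks of~\eqref{eq:modelBer} as $M\alpha = \begin{bmatrix} z_{init}+\mathbbm{1}_y\sigma_{init}\\ u_f\end{bmatrix}$ and projecting through $\Pi=M^\dagger M$ shows that $y_f^d = Y_F M^\dagger\begin{bmatrix} z_{init}+\mathbbm{1}_y\sigma_{init}\\ u_f\end{bmatrix}$ depends on $\alpha$ only through the slack-corrected initial condition, and matches $\hat Y_F\alpha^\star$ precisely when $\sigma_{init}=0$.

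The heart of the proof is the penalty limit. Substituting the optimal SPC solution together with $\sigma=0$ into~\eqref{eq:DD_Berberich} yields a feasible point of cost $J^\star_{\mathrm{SPC}}$ (feasibility of the terminal and input/output constraints is inherited from SPC, while $\sigma_y = Y_F(I-\Pi)\alpha^\star = 0$ is consistent because $\Pi\alpha^\star=\alpha^\star$); hence the optimal value obeys $v^\star(\lambda_\sigma)\le J^\star_{\mathrm{SPC}}$ for every $\lambda_\sigma$. Since $J\ge 0$ (as $Q\succeq 0$, $R\succ 0$), the penalty term alone satisfies $\lambda_\sigma\|\sigma^\star(\lambda_\sigma)\|_2^2 \le v^\star(\lambda_\sigma)\le J^\star_{\mathrm{SPC}}$, so $\|\sigma^\star(\lambda_\sigma)\|_2\to 0$ as $\lambda_\sigma\to\infty$. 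In this limit $\sigma_{init}\to 0$ and $\sigma_y\to 0$, whence $M\alpha\to\begin{bmatrix} z_{init}\\ u_f\end{bmatrix}$ and $y_f^d\to \hat Y_F\alpha^\star$. With the slacks gone the two problems minimize the same cost $J(\cdot)$ over the same $u_f$, subject to identical input/output and terminal constraints, with $y_f^d$ tied to $u_f$ by the same SPC law; the minimizers therefore coincide.

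The delicate point is to make the limit rigorous at the level of \emph{minimizers}, not merely of optimal values: one has to conclude that the optimizers of~\eqref{eq:DD_Berberich} converge to the SPC optimizer as $\lambda_\sigma\to\infty$. The cleanest route is to note that $\|\sigma^\star(\lambda_\sigma)\|_2\to 0$ renders the limiting iterate feasible for constrained SPC, so that in the limit the two feasible sets literally coincide and the solution maps agree. Two bookkeeping issues deserve attention along the way: the selector $\mathbbm{1}_y$, which places slack only on the \emph{output} rows of $z_{init}$ (consistent with $K=0$ leaving the past inputs noise-free), and the non-uniqueness of $\alpha$ when its regularizer is switched off, which is harmless because the objective and constraints see $\alpha$ only through $y_f^d=\hat Y_F\alpha$, and the latter is invariant across all solutions of $M\alpha=\begin{bmatrix} z_{init}\\ u_f\end{bmatrix}$.
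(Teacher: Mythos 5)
Your proof is correct and follows essentially the same route as the paper's: both arguments hinge on the decomposition $Y_F = Y_F\Pi + Y_F(I-\Pi)$ together with the additional constraint $\sigma_y = Y_F(I-\Pi)\alpha$, so that $y_f^d = Y_F\Pi\alpha = \hat Y_F\alpha$ recovers the SPC predictor \eqref{eq:prediction_model_DD1}--\eqref{eq:prediction_model_DD2} once the slacks vanish. The only difference is that the paper treats $\lambda_\sigma\to\infty$ as formally imposing $\sigma=0$ before doing the algebra, whereas you additionally justify that limit with the standard exact-penalty bound $\lambda_\sigma\|\sigma^\star(\lambda_\sigma)\|_2^2\le J^\star_{\mathrm{SPC}}$ (using feasibility of the SPC optimizer with $\sigma=0$, which holds since $\Pi\alpha^\star=\alpha^\star$) --- a welcome refinement of the limiting step, but not a different method.
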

\begin{proof}
		For $\lambda_{\sigma} \to \infty$ and $\lambda_{\alpha}\bar{\varepsilon}=0$, \eqref{eq:DD_Berberich} reduces to
	\begin{subequations}\label{eq:DD_Berberich2}
		\begin{align}
			& \underset{u_f, y_f^d, \alpha,\sigma}{\mbox{minimize}}~~~ J\left( \begin{bmatrix} y_f^d \\ u_f \end{bmatrix}\right) 
			\label{eq:cost_Berberich2}\\
			& \mbox{s.t. }~ \begin{bmatrix}
				z_{init}+\mathbbm{1}_{y}\sigma_{init} \\u_{f}\\y_{f}^{d}+\sigma_{y}			
			\end{bmatrix}=\begin{bmatrix}
				Z_{P}\\U_{F}\\Y_{F}
			\end{bmatrix}\alpha,\\
			& \qquad \begin{bmatrix}
				u(k)\\
				y^{d}(k)
			\end{bmatrix}=\begin{bmatrix}
				u_{r}\\
				y_{r}
			\end{bmatrix}\!,~~k \in [t\!+\!T\!-\!\rho,t\!+\!T),\label{eq:terminalBer2}\\
			& \qquad u(k) \in \mathcal{U},~y^{d}(k) \in \mathcal{Y},~k \in [t,t+T),\\
			& \qquad \sigma =0, \label{eq:slackBer}
		\end{align}
	\end{subequations}	
	where the constraint in \eqref{eq:noiseBer} can be replaced with \eqref{eq:slackBer}, independently from $\bar{\varepsilon}$. Let us now decompose the prediction model in \eqref{eq:modelBer} as follows:
	\begin{align}
		& \begin{bmatrix}
			z_{init}\\
			u_{f}
		\end{bmatrix}+\begin{bmatrix}
		\mathbbm{1}_{y}\sigma_{init}\\
		0
	\end{bmatrix}=\begin{bmatrix}
	Z_{P}\\U_{F}
\end{bmatrix}\alpha, \label{eq:pastANDinputs}\\
		& y_{f}^{d}+\sigma_{y}=Y_{F}\alpha, \label{eq:futureANDslack}
	\end{align} 
	and $Y_{F}$ as $Y_{F}:=\hat{Y}_{F}+\bar{Y}_{F}$, with $\hat{Y}_{F}=\Pi_{Z_{P},U_{F}}(Y_{F})$ and $\bar{Y}_{F}=Y_{F}-\Pi_{Z_{P},U_{F}}(Y_{F})$. Leveraging on \ref{eq:slackBer},  {and the additional constraint $$
\sigma_y =  Y_F (I - \Pi) \alpha = 0,$$ the relations in \eqref{eq:pastANDinputs} and \eqref{eq:futureANDslack} can be rewritten as:
	\begin{align}
		\alpha &=\begin{bmatrix}
			Z_{P}\\
			U_{F}
		\end{bmatrix}^{\dagger}\begin{bmatrix}
		z_{init}\\
		u_{f}
	\end{bmatrix}, \label{eq:SOLpastANDinputs}\\
		 y_{f}^{d}&=Y_{F}\alpha =  Y_{F}\Pi\alpha + Y_{F}(I-\Pi)\alpha =  Y_{F}\Pi\alpha. \label{eq:SOLfutureANDslack}
	\end{align} 
Note that \eqref{eq:SOLpastANDinputs} corresponds to \eqref{eq:prediction_model_DD1} in the constrained SPC problem, whereas \eqref{eq:SOLfutureANDslack} implies that the predictor $y_{f}^{d}=\hat{Y}_{F}\alpha$, thus concluding the proof. }
\end{proof}
}

\rev{The relationship established in Theorem~\ref{thm:DDberb} are consistent with the empirical evidences on the role of $\lambda_{\alpha}\bar{\varepsilon}$ and $\lambda_{\sigma}$ discussed in \cite[Section V]{berberich2020data}. Although providing a guideline for the choice of these two hyperparameters, with the choice of $\lambda_{\alpha}$ inherently connected with the noise bound $\bar{\varepsilon}$, it is worth stressing that this choice will be optimal asymptotically, i.e., when $\rho$ is selected according to the Akaike's criterion and $N_{data} \rightarrow \infty$.}
\rev{\subsection{DeePC with elastic net regularization}}
We \rev{now} consider the  problem with elastic net regularization in \cite[Section IV.D]{dorfler2022bridging}, that we rewrite for the control problem considered in this work by using our notation as follows:
\begin{subequations}\label{eq:DD_Dorfler}
	\begin{align}
		& {\underset{u_f, y_f^d, \alpha}{\mbox{minimize}}~~~ J\left( \begin{bmatrix} y_f^d \\ u_f \end{bmatrix}\right)\!+\! \lambda_1\|\alpha\|_1\!+\!\lambda_2\| (I\!-\!\Pi)\alpha \|_p}
		\label{eq:cost_Dorfler}\\
		& \mbox{s.t. } \begin{bmatrix}
			z_{init}\\u_{f}\\y_{f}^{d}			
		\end{bmatrix}=\begin{bmatrix}
		Z_{P}\\U_{F}\\Y_{F}
	\end{bmatrix}\alpha,\\
	& \qquad u(k) \in \mathcal{U},~y^{d}(k) \in \mathcal{Y},~k \in [t,t+T),
	\end{align}
where $\Pi$ \rev{has been defined in \eqref{eq:pi}.}
\end{subequations}
The following proposition provides the connection between the problem in \eqref{eq:DD_Dorfler} and the one in \eqref{eq:DD_RHPC_prob_mean}.
\begin{theorem}[\rev{SPC-based} regularization]\label{thm:comparisonDorfler}
		Assuming the cost $J(\cdot)$ in \eqref{eq:cost_Dorfler} is equal to \eqref{eq:cost_DD}, then the solution to problem \eqref{eq:DD_RHPC_prob_mean} coincides with the one of \eqref{eq:DD_Dorfler} for $\lambda_1 = 0$ and $\lambda_2 \to \infty$.
\end{theorem}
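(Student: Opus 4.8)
The plan is to mirror the argument used for Theorem~\ref{thm:DDberb}, since the mechanism is identical with the penalty $\lambda_2\|(I-\Pi)\alpha\|_p$ playing the role that $\lambda_\sigma\|\sigma\|_2^2$ played there. Writing $M:=\begin{bmatrix}Z_P\\U_F\end{bmatrix}$, the idea is that $\lambda_1=0$ deletes the $\|\alpha\|_1$ term from \eqref{eq:cost_Dorfler}, so that the objective becomes exactly $J$ plus the single penalty $\lambda_2\|(I-\Pi)\alpha\|_p$, and that letting $\lambda_2\to\infty$ acts as an exact penalty forcing $(I-\Pi)\alpha=0$ at the optimum. The first step I would take is to certify that this can be achieved at \emph{zero} cost: by Lemma~\ref{lem:PE} and Theorem~\ref{thm:WL:Stoch}, for the given $z_{init}$ and any admissible $u_f$ the system $\begin{bmatrix}z_{init}\\u_f\end{bmatrix}=M\alpha$ is solvable, and its minimum-norm solution $\alpha^\star=M^\dagger\begin{bmatrix}z_{init}\\u_f\end{bmatrix}$ lies in the row span of $M$, hence satisfies $\Pi\alpha^\star=\alpha^\star$, i.e.\ $(I-\Pi)\alpha^\star=0$. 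Thus the SPC optimizer is feasible for \eqref{eq:DD_Dorfler} with vanishing penalty, which (as in Theorem~\ref{thm:DDberb}) is what lets the limit $\lambda_2\to\infty$ enforce the hard constraint $(I-\Pi)\alpha=0$.

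Second, I would show that $(I-\Pi)\alpha=0$ combined with the equality constraint of \eqref{eq:DD_Dorfler} pins $\alpha$ to $\alpha^\star$, recovering \eqref{eq:prediction_model_DD1}. The key algebraic fact is uniqueness: every $\alpha$ obeying the first two block rows $\begin{bmatrix}z_{init}\\u_f\end{bmatrix}=M\alpha$ decomposes as $\alpha=\alpha^\star+\nu$ with $\nu\in\ker M$, which is the orthogonal complement of the row span of $M$; imposing $(I-\Pi)\alpha=0$, i.e.\ membership in the row span, leaves only $\nu=0$, so $\alpha=\alpha^\star$.

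Third, I would reduce the predictor. The third block row reads $y_f^d=Y_F\alpha=Y_F\alpha^\star$, and writing $\hat Y_F=\Pi_{Z_P,U_F}(Y_F)=Y_F\Pi$ together with $\Pi\alpha^\star=\alpha^\star$ gives $Y_F\alpha^\star=Y_F\Pi\alpha^\star+Y_F(I-\Pi)\alpha^\star=\hat Y_F\alpha^\star$, i.e.\ exactly \eqref{eq:prediction_model_DD2}. Since the input/output constraints in the two problems are identical and the objective of \eqref{eq:DD_Dorfler} has collapsed to $J$ (both regularizers being zero on the limiting feasible set), the feasible set and cost of \eqref{eq:DD_Dorfler} coincide with those of \eqref{eq:DD_RHPC_prob_mean}, so the minimizers coincide.

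The step I expect to be the main obstacle is the limiting argument of the first paragraph: rigorously turning the soft penalty $\lambda_2\|(I-\Pi)\alpha\|_p$ into the hard constraint $(I-\Pi)\alpha=0$ as $\lambda_2\to\infty$. This requires the existence of a zero-penalty feasible point (supplied by the persistency of excitation of Lemma~\ref{lem:PE}) and a standard exact-penalty argument that the optimizers converge to the SPC optimizer. One must also verify that $(I-\Pi)\alpha$ does not affect the cost or constraints except through the penalty; this holds because $J$, $\mathcal U$ and $\mathcal Y$ see $\alpha$ only via $u_f=U_F\alpha$ and $y_f^d=Y_F\alpha=\hat Y_F\alpha+Y_F(I-\Pi)\alpha$, whose spurious term $Y_F(I-\Pi)\alpha$ is annihilated precisely when the penalty is driven to zero. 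Note also that $\lambda_1=0$ is essential here: a positive $\lambda_1$ would add the $u_f$-dependent term $\lambda_1\|\alpha^\star\|_1$ to the objective and thus perturb the SPC minimizer.
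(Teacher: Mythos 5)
Your proposal is correct and follows essentially the same route as the paper's proof: reduce the $\lambda_1=0$, $\lambda_2\to\infty$ limit to the hard constraint $(I-\Pi)\alpha=0$, then use the decomposition $Y_F=Y_F\Pi+Y_F(I-\Pi)$ with $\hat Y_F=Y_F\Pi$ to collapse the predictor to $\hat Y_F\alpha$. You additionally spell out the feasibility of a zero-penalty point and the pinning of $\alpha$ to the minimum-norm solution $\alpha^\star$, which the paper leaves implicit; this only adds rigor and does not change the argument.
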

\begin{proof}
{For $\lambda_1=0$ and $\lambda_2 \to +\infty$, problem \eqref{eq:DD_Dorfler} reduces to 
\begin{subequations}\label{eq:DD_Dorfler/_reduced}
	\begin{align}
		& {\underset{u_f, y_f^d, \alpha}{\mbox{minimize}}~~~ J\left( \begin{bmatrix} y_f^d \\ u_f \end{bmatrix}\right)  }
		\label{eq:cost_Dorfler2}\\
		& \mbox{s.t. } \begin{bmatrix}
			z_{init}\\u_{f}\\y_{f}^{d}			
		\end{bmatrix}=\begin{bmatrix}
		Z_{P}\\U_{F}\\Y_{F}
	\end{bmatrix}\alpha, \quad  \| (I-\Pi)\alpha \| = 0\\
	& \qquad u(k) \in \mathcal{U},~y^{d}(k) \in \mathcal{Y},~k \in [t,t+T).
	\end{align}
\end{subequations}
In addition, by decomposing $Y_F: = \hat Y_F + \tilde Y_F$,
where $\hat Y_F = \Pi_{Z_P,U_F}(Y_F)$, and $\tilde Y_F = Y_F - \Pi_{Z_P,U_F}(Y_F)$, we have that
$$
\hat Y_F = Y_F \Pi  \quad  \tilde Y_F =Y_F (I-\Pi).
$$
Then, when  $(I-\Pi) \alpha = 0$, we have $$Y_F \alpha = Y_F \Pi \alpha + Y_F(I- \Pi) \alpha = Y_F \Pi \alpha  = \hat Y_F \alpha.$$}
\end{proof}
This result not only shows the connection between the control problem considered in this work and the regularized one proposed in \cite{dorfler2022bridging}, but it also puts the results shown in  \cite{dorfler2022bridging}, where the role of $\lambda_1$ and $\lambda_2$ is evaluated experimentally, into a rigorous frame. \rev{Note that, since the performance of SPC is influenced by the choice of $\rho$ in \eqref{eq:state:approx} and the dimension of the Hankel matrix (specifically the number of its columns $N$, see \eqref{eq:fut_out}), the choice of the regularization weights via Theorem~\ref{thm:comparisonDorfler} is likely to be optimal when $\rho$ is chosen according to the Akaike's criterion and $N_{data} \rightarrow \infty$ (i.e., $N \rightarrow \infty$).}

	\section{The $\gamma$-DDPC scheme}\label{Sec:new_scheme}
	
	In this Section, we reformulate problem \eqref{eq:DD_RHPC_prob_mean} by exploiting the LQ decomposition of the Hankel data matrices. On the one hand, this procedure leads to an even closer connection with subspace identification. On the other, it allows us to parametrize the solution to   \eqref{eq:DD_RHPC_prob_mean} in terms of a lower dimensional parameter vector. 
	We thus consider the LQ decomposition of the joint input-output block Hankel matrix  $Z_{data}$ in \eqref{eq:Hankeldatamatrix}, namely:
	\begin{equation}\label{eq:LQ}
	\begin{bmatrix}
	Z_{P}\\
	U_{F}\\
	Y_{F}
	\end{bmatrix}=\begin{bmatrix}
	L_{11} & 0 & 0  \\
	L_{21} & L_{22} &  0\\
	L_{31} & L_{32} & L_{33} 
	\end{bmatrix}\begin{bmatrix}
	Q_{1}\\
	Q_{2}\\
	Q_{3}
	\end{bmatrix},
	\end{equation}
	where the matrices $\{L_{ii}\}_{i=1}^{3}$ are all non-singular (under the assumptions of Lemma \ref{lem:PE}) and $Q_{i}$ have orthonormal rows, i.e.  \emph{i.e.,} $Q_{i}Q_{i}^{\top}=I$, for $i=1,\ldots,3$, $Q_i Q_j^\top = 0$, $i\neq j$. 
	
	First of all, let us observe that $\hat Y_F: = \Pi_{Z_{P},U_{F}}(Y_F)$ in Lemma \ref{Lemma:output} can be expressed in terms of the LQ decomposition \eqref{eq:LQ} as:
	\begin{equation}\label{eq:Yhat:LQ}
	\hat Y_F = \begin{bmatrix}
	L_{31} & L_{32} 
	\end{bmatrix}\begin{bmatrix}
	Q_{1}\\Q_{2}
	\end{bmatrix}.
	\end{equation}
	By exploiting \eqref{eq:LQ} and \eqref{eq:Yhat:LQ}, we can thus express the constraint in \eqref{eq:opt_alpha} as follows:
	\begin{subequations}\label{eq:decomposition}
		\begin{align}
		& z_{init}
		=Z_{P}\alpha= L_{11} Q_{1} \alpha
		\label{eq:init_cond} \\
		&	u_{f} = U_F \alpha  
		=\begin{bmatrix}
		L_{21} & L_{22} 
		\end{bmatrix}\begin{bmatrix}
		Q_{1}\\Q_{2}	
		\end{bmatrix}\alpha, \label{eq:input}
		\end{align}
	\end{subequations}
	where \eqref{eq:init_cond} accounts for the initial condition  of the predictive control problem, whereas   \eqref{eq:input} links the optimal $\alpha$ with the control input. The predicted output in \eqref{eq:approx:output} can then be rewritten as
	\begin{equation}
	\hat y^d_{f} = \hat Y_F \alpha  
	=\begin{bmatrix}
	L_{31} & L_{32} 
	\end{bmatrix}\begin{bmatrix}
	Q_{1}\\Q_{2}	
	\end{bmatrix}\alpha^\star, \label{eq:pred:output}
	\end{equation}
	where $\alpha^\star$ is the minimum-norm solution to \eqref{eq:decomposition}.
	
	We can now leverage on triangular structure of \eqref{eq:decomposition} to characterize the minimum-norm solution $\alpha^\star$.  In particular, \eqref{eq:init_cond} always admits a solution (see Lemma \ref{lem:PE} and Remark \ref{rem:initialstate:solution}), that satisfies the following property. 
	\begin{lemma}[Definition of $\gamma_1$]\label{Lemma:4}
		Let $\alpha^\star_{init} \in \mathbb{R}^{N}$  be the minimum-norm $\alpha$ solving \eqref{eq:init_cond}. Then, by defining $\gamma_1^\star  \in \mathbb{R}^{(m+p)\rho}$ as the unique solution of 	
		\begin{equation}\label{eq:init_problem}
		z_{init} = L_{11} \gamma_1,
		\end{equation}
		$\alpha^\star_{init}$ can be written as 
		$\alpha^\star_{init}=Q_{1}^\top \gamma_{1}^\star $, 
		so that				
		\begin{equation*}
		\alpha^\star_{init} \in \mbox{colspan}\left(Q_1^\top\right).
		\end{equation*}	 
	\end{lemma}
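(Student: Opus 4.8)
The plan is to use the triangular structure of the LQ decomposition \eqref{eq:LQ} to reduce the constraint $z_{init} = Z_P \alpha$ to a condition involving $Q_1$ alone, and then to exploit the orthonormality $Q_1 Q_1^\top = I$ to identify the minimum-norm solution explicitly.

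First I would read off from the first block-row of \eqref{eq:LQ} the relation $Z_P = L_{11} Q_1$ and substitute it into \eqref{eq:init_cond}. Since $L_{11}$ is nonsingular under the assumptions of Lemma~\ref{lem:PE}, the equation $z_{init} = L_{11} Q_1 \alpha$ is equivalent to $Q_1 \alpha = L_{11}^{-1} z_{init}$. This simultaneously shows that $\gamma_1 = L_{11}^{-1} z_{init}$ is the unique solution of \eqref{eq:init_problem}, establishing the well-posedness of $\gamma_1^\star$, and recasts the original constraint as the single linear condition $Q_1 \alpha = \gamma_1^\star$.

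Next I would characterize the minimum-norm $\alpha$ subject to $Q_1 \alpha = \gamma_1^\star$. The key observation is that, because $Q_1$ has orthonormal rows, the candidate $Q_1^\top \gamma_1^\star$ is feasible: indeed $Q_1 (Q_1^\top \gamma_1^\star) = (Q_1 Q_1^\top)\gamma_1^\star = \gamma_1^\star$. Any other feasible $\alpha$ can be written as $\alpha = Q_1^\top \gamma_1^\star + \delta$ with $\delta \in \ker Q_1$, i.e., $\delta$ orthogonal to the row span of $Q_1$. Since $Q_1^\top \gamma_1^\star$ lies in $\mbox{colspan}(Q_1^\top)$, it is orthogonal to $\delta$, so by the Pythagorean identity $\|\alpha\|^2 = \|Q_1^\top \gamma_1^\star\|^2 + \|\delta\|^2$, which is minimized precisely when $\delta = 0$. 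Hence $\alpha^\star_{init} = Q_1^\top \gamma_1^\star$, which in particular lies in $\mbox{colspan}(Q_1^\top)$, as claimed.

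There is no genuinely hard step here; the statement is essentially a bookkeeping consequence of the LQ structure. The only point requiring mild care is the equivalence between $z_{init} = Z_P\alpha$ and $Q_1\alpha = \gamma_1^\star$, which hinges on the invertibility of $L_{11}$ (hence on the persistency-of-excitation assumption of Lemma~\ref{lem:PE}). Alternatively, the same result follows by computing the pseudo-inverse $Z_P^\dagger = Q_1^\top L_{11}^{-1}$ directly from $Z_P = L_{11} Q_1$ and $Q_1 Q_1^\top = I$, and observing $\alpha^\star_{init} = Z_P^\dagger z_{init} = Q_1^\top \gamma_1^\star$.
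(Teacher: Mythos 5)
Your proof is correct and follows essentially the same route as the paper's: reduce $z_{init}=Z_P\alpha$ to $Q_1\alpha=L_{11}^{-1}z_{init}=\gamma_1^\star$ via the invertibility of $L_{11}$, then identify the minimum-norm solution as $Q_1^\dagger\gamma_1^\star=Q_1^\top\gamma_1^\star$. The only difference is that you spell out the Pythagorean argument for minimality that the paper leaves implicit in the pseudo-inverse.
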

	\begin{proof}
		{ Since $Z_P$ has full column rank, so does $L_{11}$ and any solution $\alpha$ to \eqref{eq:init_cond} must satisfy
		$$
		Q_{1}\alpha = L_{11}^{-1} z_{init} = \gamma_1^*.$$ The minimum-norm solution $\alpha_{init}^*$ can be found by as 
		$$
		\alpha_{init}^* =  Q_{1}^{\dagger}  \gamma_1^* = Q_1^\top \gamma_1^*,
		$$ thus concluding the proof.}
	\end{proof}
	Exploiting the definition of $\gamma_1^\star$ in Lemma \ref{Lemma:4}, the control sequence $u_{f}$ in \eqref{eq:input} can be equivalently written as
	\begin{equation}\label{eq:u_fbis}
	\begin{array}{rcl}
	u_{f}&=&L_{21}\gamma^\star_{1}+L_{22}\gamma_{2}\\
	\gamma_2 & = & Q_{2}\alpha
	\end{array}
	\end{equation}
	Based on this representation, we can provide additional insights on $\alpha^{\star}$, through the following result.  
	\begin{lemma}[Definition of $\gamma_2$]\label{Lemma:5}
		Let $\alpha^\star_{f} \in \mathbb{R}^{N}$ indicate the minimum-norm $\alpha$ solving \eqref{eq:u_fbis}. Accordingly, define $\gamma^*_{2} \in \mathbb{R}^{mT}$ as the unique solution of the least squares problem
		\begin{equation}\label{eq:stepalgo}
		L_{22}\gamma_{2}=u_{f}-L_{21}\gamma^\star_{1}		\end{equation}
		where $\gamma^\star_{1}$ is defined in Lemma \ref{Lemma:4}. Then $\alpha^\star_{f}$ can be written as: 
		\begin{equation}\label{eq:gamma3}
		\alpha^\star_{f} =Q_{2}^\top \gamma^\star_{2},
		\end{equation}
		so that 		
		\begin{equation*}
		\alpha^\star_{f} \in \mbox{colspan}\left(Q_2^\top\right).
		\end{equation*}	
	\end{lemma}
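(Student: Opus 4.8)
The plan is to mirror exactly the structure of the proof of Lemma~\ref{Lemma:4}, but now applied one level down in the triangular LQ decomposition. The key observation is that by Lemma~\ref{Lemma:4} the component of $\alpha$ lying in $\mathrm{colspan}(Q_1^\top)$ is already pinned down: it must equal $Q_1^\top\gamma_1^\star$. Therefore, when we seek the minimum-norm solution $\alpha_f^\star$ of the system \eqref{eq:u_fbis}, we are free only in the remaining directions, and the relevant new degree of freedom is $\gamma_2 = Q_2\alpha$. First I would substitute the parameterization $u_f = L_{21}\gamma_1^\star + L_{22}\gamma_2$ from \eqref{eq:u_fbis} and isolate $\gamma_2$, which satisfies the least-squares relation \eqref{eq:stepalgo}. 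Since $L_{22}$ is nonsingular under the assumptions of Lemma~\ref{lem:PE}, the equation $L_{22}\gamma_2 = u_f - L_{21}\gamma_1^\star$ has the unique solution $\gamma_2^\star = L_{22}^{-1}(u_f - L_{21}\gamma_1^\star)$.

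Next I would argue, in complete analogy with Lemma~\ref{Lemma:4}, that any $\alpha$ solving \eqref{eq:u_fbis} must satisfy $Q_2\alpha = \gamma_2^\star$, because $L_{22}$ being invertible forces $Q_2\alpha$ to equal the unique $\gamma_2^\star$. Using orthonormality of the rows of the $Q_i$ blocks, namely $Q_iQ_j^\top = \delta_{ij}I$, I would then decompose any candidate $\alpha$ into its components along $Q_1^\top$, $Q_2^\top$, and the orthogonal complement, and observe that the minimum-norm choice sets the orthogonal-complement part to zero. The component along $Q_1^\top$ is already fixed by the initial-condition constraint \eqref{eq:init_cond}; subtracting it out, the \emph{incremental} minimum-norm solution attributable to the future-input constraint lies entirely in $\mathrm{colspan}(Q_2^\top)$ and equals $Q_2^\dagger\gamma_2^\star = Q_2^\top\gamma_2^\star$, where the pseudo-inverse collapses to the transpose precisely because $Q_2$ has orthonormal rows. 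This yields \eqref{eq:gamma3}.

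The main subtlety—more bookkeeping than genuine obstacle—is to be careful about \emph{what} $\alpha_f^\star$ denotes: whether it is the full minimum-norm solution of the combined system \eqref{eq:decomposition}, or just the increment beyond $\alpha_{init}^\star$. The cleanest reading, consistent with the later $\gamma$-decomposition $\alpha^\star = Q_1^\top\gamma_1^\star + Q_2^\top\gamma_2^\star + \cdots$ that the scheme is building toward, is that $Q_1^\top\gamma_1^\star$ and $Q_2^\top\gamma_2^\star$ are \emph{orthogonal} contributions (since $Q_1Q_2^\top = 0$), so that minimizing the norm decouples across blocks and each block contributes its own minimum-norm piece independently. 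I would therefore make explicit that orthogonality of the $Q_i$ blocks is what lets us optimize block-by-block, guaranteeing both that $\alpha_f^\star \in \mathrm{colspan}(Q_2^\top)$ and that adding it to $\alpha_{init}^\star$ does not disturb the already-satisfied initial-condition constraint \eqref{eq:init_cond}. The invertibility of $L_{22}$ (which follows from full rank of $Z_{data}$ via Lemma~\ref{lem:PE}) supplies uniqueness of $\gamma_2^\star$, completing the argument.
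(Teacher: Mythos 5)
Your proposal is correct and follows essentially the same route as the paper: invertibility of $L_{22}$ (from Lemma~\ref{lem:PE}) forces $Q_2\alpha=\gamma_2^\star=L_{22}^{-1}(u_f-L_{21}\gamma_1^\star)$ for any solution of \eqref{eq:u_fbis}, and the minimum-norm solution is $Q_2^\dagger\gamma_2^\star=Q_2^\top\gamma_2^\star$ by orthonormality of the rows of $Q_2$. Your added discussion of whether $\alpha_f^\star$ is the increment or the full solution resolves itself the same way the paper implicitly does, since \eqref{eq:u_fbis} with $\gamma_1^\star$ fixed constrains only $Q_2\alpha$, so its minimum-norm solution carries no $Q_1^\top$ component.
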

	\begin{proof}
		{Since the matrix $[Z_P^\top \; U_F^\top]^\top$ has full rank, also $L_{22}$ has full rank and is thus invertible.
		Any solution $\alpha$ to 
		$$
		u_f = L_{21} Q_1 \gamma_1^* + L_{22} Q_2 \alpha
		$$ must therefore satisfy
		$$
		Q_2 \alpha = \underbrace{L_{22}^{-1} \left[u_f - L_{21} \gamma_1^*\right]}_{:= \gamma_2^*}.
		$$
		and the minimum-norm solution is given by 
		$$\alpha_f^* = Q_2 ^{\dagger} L_{22}^{-1}\left[u_f - L_{21} \gamma_1^*\right] =  Q_2 ^{\top}L_{22}^{-1} \left[u_f - L_{21} \gamma_1^*\right]  = Q_2 ^{\top}\gamma_2^*.$$}
	\end{proof}
	From Lemma~\ref{Lemma:4} and~\ref{Lemma:5}, we can then characterize the minimum-norm parameter $\alpha$ of the whole behavioral model in \eqref{eq:prediction_model_DD1}-\eqref{eq:prediction_model_DD2} as follows.
	\begin{theorem}[Decomposition of $\alpha^\star$]\label{thm:1}
		Let $\alpha^\star_{init} \in \mathbb{R}^{N}$ and $\alpha^\star_{f} \in \mathbb{R}^{N}$ be defined as in Lemma~\ref{Lemma:4} and \ref{Lemma:5}, respectively. Then, they satisfy the following properties:
		\begin{enumerate}
			\item $\alpha^\star_{init}=Q_{1}^\top\gamma^\star_{1};$		
			\item $\alpha^\star_{f}=Q_{2}^\top \gamma^\star_{2}$;
			\item $\alpha^\star_{init}$ is orthogonal to $\alpha^\star_{f}$;
			\item $Q_{1}\alpha^\star_{f}=0$  and $Q_{2}\alpha^\star_{init}=0$ 
			\item $Q_{3}\alpha^\star_{f}=Q_{3}\alpha^\star_{init}=0$.	
		\end{enumerate}
		Therefore, $\alpha^\star= \alpha^\star_{init}+\alpha^\star_{f}$ is the minimum-norm vector satisfying the conditions:
		\begin{subequations}\label{eq:new_predictor}
			\begin{align}
			\nonumber&\begin{bmatrix}
			z_{init}\\
			u_{f}
			\end{bmatrix}\!=\!\begin{bmatrix}
			Z_{P}\\				
			U_{F}
			\end{bmatrix}\alpha^{\star}\!\!=\!\begin{bmatrix}
			L_{11} &  0\\
			L_{21} & L_{22} 				
			\end{bmatrix}\begin{bmatrix}
			Q_{1}\\
			Q_{2}
			\end{bmatrix}\alpha^{\star}\\
			&\qquad \qquad \qquad \qquad =\!\begin{bmatrix}
			L_{11} &  0\\
			L_{21} & L_{22} 				
			\end{bmatrix}\begin{bmatrix}
			\gamma^{\star}_{1}\\
			\gamma^{\star}_{2}
			\end{bmatrix},\\
			&\hat y^d_{f}=\sum_{i=1}^{3}L_{3i}Q_{i}\alpha^*=\sum_{i=1}^{2}L_{3i}Q_{i}\alpha^*=\sum_{i=1}^{2}L_{3i}\gamma^*_{i}. \label{eq:pred_output}
			\end{align}
		\end{subequations}
	\end{theorem}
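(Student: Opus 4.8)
The plan is to establish the five itemized properties first and then to use them to show that $\alpha^\star = \alpha^\star_{init} + \alpha^\star_f$ simultaneously satisfies the constraints in \eqref{eq:new_predictor} and attains minimum norm. Properties 1 and 2 are nothing but the conclusions of Lemma~\ref{Lemma:4} and Lemma~\ref{Lemma:5}, so I would simply restate them. For properties 3, 4 and 5 I would substitute $\alpha^\star_{init} = Q_1^\top \gamma_1^\star$ and $\alpha^\star_f = Q_2^\top \gamma_2^\star$ and invoke the block-orthonormality of the $Q$-factor in \eqref{eq:LQ}, namely $Q_i Q_j^\top = I$ for $i=j$ and $Q_i Q_j^\top = 0$ for $i \neq j$. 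Concretely, orthogonality follows from $(\alpha^\star_{init})^\top \alpha^\star_f = (\gamma_1^\star)^\top Q_1 Q_2^\top \gamma_2^\star = 0$; property 4 from $Q_1 \alpha^\star_f = Q_1 Q_2^\top \gamma_2^\star = 0$ and $Q_2 \alpha^\star_{init} = Q_2 Q_1^\top \gamma_1^\star = 0$; and property 5 from $Q_3 Q_1^\top = Q_3 Q_2^\top = 0$. Each of these is a one-line computation.

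With these identities in hand, I would verify the two displays in \eqref{eq:new_predictor} by direct substitution. Using $Q_1 Q_1^\top = I$ together with property 4 gives $Q_1 \alpha^\star = \gamma_1^\star$, and $Q_2 Q_2^\top = I$ together with property 4 gives $Q_2 \alpha^\star = \gamma_2^\star$. Plugging these into $Z_P \alpha^\star = L_{11} Q_1 \alpha^\star$ and $U_F \alpha^\star = L_{21} Q_1 \alpha^\star + L_{22} Q_2 \alpha^\star$, and recalling that $\gamma_1^\star$ solves \eqref{eq:init_problem} while $\gamma_2^\star$ solves \eqref{eq:stepalgo}, returns exactly $z_{init}$ and $u_f$, which is the first display of \eqref{eq:new_predictor}. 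For the predicted-output display \eqref{eq:pred_output}, I would use property 5 to annihilate the $Q_3$ contribution, so that $\hat y^d_f = Y_F \alpha^\star = \sum_{i=1}^3 L_{3i} Q_i \alpha^\star = \sum_{i=1}^2 L_{3i} Q_i \alpha^\star = \sum_{i=1}^2 L_{3i} \gamma_i^\star$.

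The only step requiring a genuine argument, rather than bookkeeping, is the minimum-norm claim. Here I would decompose an arbitrary feasible $\alpha$ of \eqref{eq:decomposition} along the mutually orthogonal subspaces $\mathrm{colspan}(Q_1^\top)$, $\mathrm{colspan}(Q_2^\top)$, $\mathrm{colspan}(Q_3^\top)$ and their common orthogonal complement in $\R^N$. Since \eqref{eq:decomposition} constrains $\alpha$ only through $Z_P \alpha$ and $U_F \alpha$, and these depend on $\alpha$ only via $Q_1 \alpha$ and $Q_2 \alpha$, the invertibility of $L_{11}$ and $L_{22}$ (guaranteed under Lemma~\ref{lem:PE}) forces $Q_1 \alpha = \gamma_1^\star$ and $Q_2 \alpha = \gamma_2^\star$ for every feasible $\alpha$, while the component $Q_3 \alpha$ and the projection onto the orthogonal complement remain entirely free. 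By orthonormality, $\|\alpha\|^2$ splits as the sum of the squared norms of these four components, so the norm is minimized precisely by setting the two free components to zero, which yields $\alpha^\star = Q_1^\top \gamma_1^\star + Q_2^\top \gamma_2^\star = \alpha^\star_{init} + \alpha^\star_f$. I expect this orthogonal-decomposition step -- being explicit that the $Q_3$-component and the orthogonal complement are unconstrained and hence vanish at the optimum -- to be the crux of the argument, even though every individual identity is elementary.
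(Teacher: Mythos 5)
Your proposal is correct and follows essentially the same route as the paper: properties 1--5 from the two lemmas plus the block-orthonormality of the $Q_i$, verification of \eqref{eq:new_predictor} by direct substitution, and the minimum-norm claim from the fact that any feasible $\alpha$ has its components along $Q_1^\top$ and $Q_2^\top$ pinned to $\gamma_1^\star$ and $\gamma_2^\star$ while the remaining components are unconstrained. The paper states this last step in one line (``$\alpha^\star$ belongs to the column space of $[Q_1^\top \; Q_2^\top]$''); your explicit orthogonal decomposition of an arbitrary feasible $\alpha$ just spells out the same argument in more detail.
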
 
	\begin{proof}
		{Conditions 1 and 2 have  been proved in Lemmas \ref{Lemma:4} and \ref{Lemma:5} respectively. Condition 3 and 4 are  direct consequences of the fact that $Q_1^\top Q_2 = 0$.  Finally, Condition 5 derives from the fact that $Q_3^\top Q_i = 0$, for $i=1,2$. It is also straightforward to verify that, indeed, $\alpha^\star= \alpha^\star_{init}+\alpha^\star_{f}$ is a solution to \eqref{eq:new_predictor}. The fact that it is the minimum-norm solution derives from the fact that $\alpha^\star$ belongs to the column space of $[Q_1^\top \; Q_2^\top]$. }
	\end{proof}
	
	The properties highlighted above allows us to reformulate the DDPC problem as follows:
	\begin{subequations}\label{eq:DDPC_prob2}
		\begin{align}
			&\underset{{\gamma_{1},\gamma_{2}}}{\mbox{minimze}}~~J\left(\begin{bmatrix}
				y_{f}^{d}\\
				u_{f}
			\end{bmatrix}\right)\\
			&~~\mbox{s.t.}~~\begin{bmatrix}
				z_{init}\\
				u_{f}\\
				y_{f}^{d}
			\end{bmatrix}=\begin{bmatrix}
				L_{11} & 0 \\
				L_{21} & L_{22} \\
				L_{31} & L_{32} 
			\end{bmatrix}\begin{bmatrix}
				\gamma_{1}\\\gamma_{2}
			\end{bmatrix} \label{eq:prediction_model2},\\
			&\qquad~~~u(k) \in \mathcal{U},~y^{d}(k) \in \mathcal{Y},~k \in [t,t+T), \label{eq:value_constr}
		\end{align}
	\end{subequations}
	where the cost is defined in \eqref{eq:cost_DD} and we reshape the predictor based on the properties of the minimum-norm $\alpha$ highlighted in Theorem~\ref{thm:1}.
	
	By looking at \eqref{eq:DDPC_prob2}, it can be easily noticed that the cost and the value constraints in \eqref{eq:value_constr} are independent of $\gamma_{1}$. In turn, $\gamma_1$ is  solely determined by the initial conditions $z_{init}$. As such, {$\gamma_1$ is not a proper optimization variable, but acts as a constraint that can be explicitely solved by setting: 
	\begin{equation}\label{eq:init_terms}
			\gamma_{1}^\star=
			L_{11}^{-1}z_{init}.
	\end{equation}}
\rev{By leveraging LQ-decomposition, the problem of matching initial conditions can thus be decoupled from that of designing the optimal input.} It is worth stressing once more that, according to Lemma~\ref{Lemma:4}, $\gamma^\star_{1}$ and $\gamma_{2}$ found through \eqref{eq:init_terms} coincides with the ones leading to the minimum-norm $\alpha$ satisfying the initial conditions.
	
	The constrained optimization problem to be solved at each time instant thus result into a reduced problem on $\gamma_2$ only, i.e.,
		\begin{subequations}\label{eq:DDPC_prob3}
		\begin{align}
			&\underset{{\gamma_{2}}}{\mbox{minimize}}~~J\left(\begin{bmatrix}
				y_{f}^{d}\\
				u_{f}
			\end{bmatrix}\right)\\
			&~~\mbox{s.t.}~~\begin{bmatrix}
				u_{f}\\
				y_{f}
			\end{bmatrix}=\begin{bmatrix}
				L_{21} & L_{22} \\
				L_{31} & L_{32}
			\end{bmatrix}\begin{bmatrix}
				\gamma_{1}^\star\\\gamma_{2}
			\end{bmatrix} \label{eq:prediction_model3},\\
			&~~~~~~~~~u(k) \in \mathcal{U},~y^{d}(k) \in \mathcal{Y},~k \in [t,t+T), \label{eq:constraints2}
		\end{align}
	\end{subequations}
	with $\gamma_1(t)$ fixed at the solution of \eqref{eq:init_terms}.
	
		\begin{algorithm}[!tb]
		\caption{$\gamma$-DDPC at time $t$}
		\label{algo1}
		~\textbf{Input}: Matrices $\{L_{i,j}\}_{i=1}^{3}$, $j=1,2$; penalties $Q \succeq 0$, $R \succ 0$; target $y_{r}$; constraint sets $\mathcal{U}$ and $\mathcal{Y}$; initial conditions $z_{init}$.
		\vspace*{.1cm}\hrule\vspace*{.1cm}
		\begin{enumerate}[label=\arabic*., ref=\theenumi{}]
			\item\label{step:1} \textbf{Find} $\gamma_{1}^\star$ via \eqref{eq:init_terms}; 
			\item \textbf{Optimize} $\gamma_{2}$ by solving \eqref{eq:DDPC_prob3};
			\item \textbf{Construct} $u_{f}$ according to \eqref{eq:stepalgo};
			\item \textbf{Extract} the first optimal input from $u_{f}$.
		\end{enumerate}
		\vspace*{.1cm}\hrule\vspace*{.1cm}
		~\textbf{Output}: Optimal input $u^{\star}(t)$.
	\end{algorithm}

	According to this decomposition, we propose the $\gamma$-DDPC scheme, summarized in Algorithm~\ref{algo1}. Apart from inheriting the properties of the predictor highlighted using the LQ decomposition with respect to noise handling, the $\gamma$-DDPC scheme is likely to be computationally advantageous. Indeed, the dimension of the optimization variable $\gamma_{2} \in \mathbb{R}^{mT}$ in \eqref{eq:DDPC_prob3} is likely to be considerably smaller than the one of $\alpha \in \mathbb{R}^{N}$. At the same time, retrieving $\gamma_{1}$ requires the inversion of a matrix with dimensions dictated by the chosen $\rho$. 
	
	\begin{remark}[Choice of $\rho$ (part II)]\label{rem:rho2}
		The length of the ``past'' window plays a pivotal role in shaping the performance of the predictive controller. On the one hand, $\rho$ should be chosen by following an identification-oriented reasoning (see Remark~\ref{rem:rho}). On the other, a smaller $\rho$ reduces the number of data needed to solve the DDPC problem ($N_{data}\!:=\!N\!+\!T\!+\!\rho$), and it would result in a computationally lighter DDPC problem. Its value has thus to be selected by trading-off between these  requirements. 
	\end{remark} 

	\subsection{Explaining regularization in DDPC}\label{Sec:explain}
	By looking at the DDPC problem from a different angle, the results presented so far allow us to have a clearer vision on the actual effect that additional regularization terms have on the optimal control action generated when solving \eqref{eq:DD_RHPC_prob_mean}. \rev{We stress that the use of regularization is currently, by and large,  the strategy proposed by most of the literature to cope with stochastic noise in DDPC. }
	
	The properties highlighted in Theorem~\ref{thm:1} indicate that $Q_{3}\alpha$ should be set to zero, if one seeks to reduce the effect of noise on the predictions exploited to determine the optimal control action. At the same time, one should not excessively shrink the values of $Q_{i}\alpha$, for $i=1,2$. While these two conflicting requirements on $\alpha$ can be easily accommodated when decomposing the predictor using the LQ decomposition, this operation is not as easy when the predictor in \eqref{eq:prediction_model_DD1}-\eqref{eq:prediction_model_DD2} is used as it is. Indeed, in this last case, one can only try shrink the whole vector $\alpha$ by introducing a regularizer in the cost, as already proposed in \cite{dorfler2022bridging,berberich2020data}. Although such procedure has proven to be effective, the regularization strength has to be well calibrated to trade-off between reducing the norm of $\alpha$ and retaining the information needed to produce a meaningful control action. In turn, achieving this balance requires the fine tuning of the regularization penalty, representing a well-known drawback of regularization-based DDPC approaches. Indeed, existing procedures generally require closed-loop experiments to calibrate the regularization parameters, which can endanger the safety of the plant, ultimately limiting the applicability of existing DDPC strategies.

	\section{A benchmark case study}\label{sec:benchmark}
	\begin{figure}[!tb]
		\centering
		\begin{tabular}{cc}
					\hspace{-1cm}
					\subfigure[$\mathcal{J}$ \emph{vs} $\overline{\mbox{SNR}}$\label{Fig:SNRJ}]{\includegraphics[scale=.4,trim=2cm 1.75cm 5cm 17cm,clip]{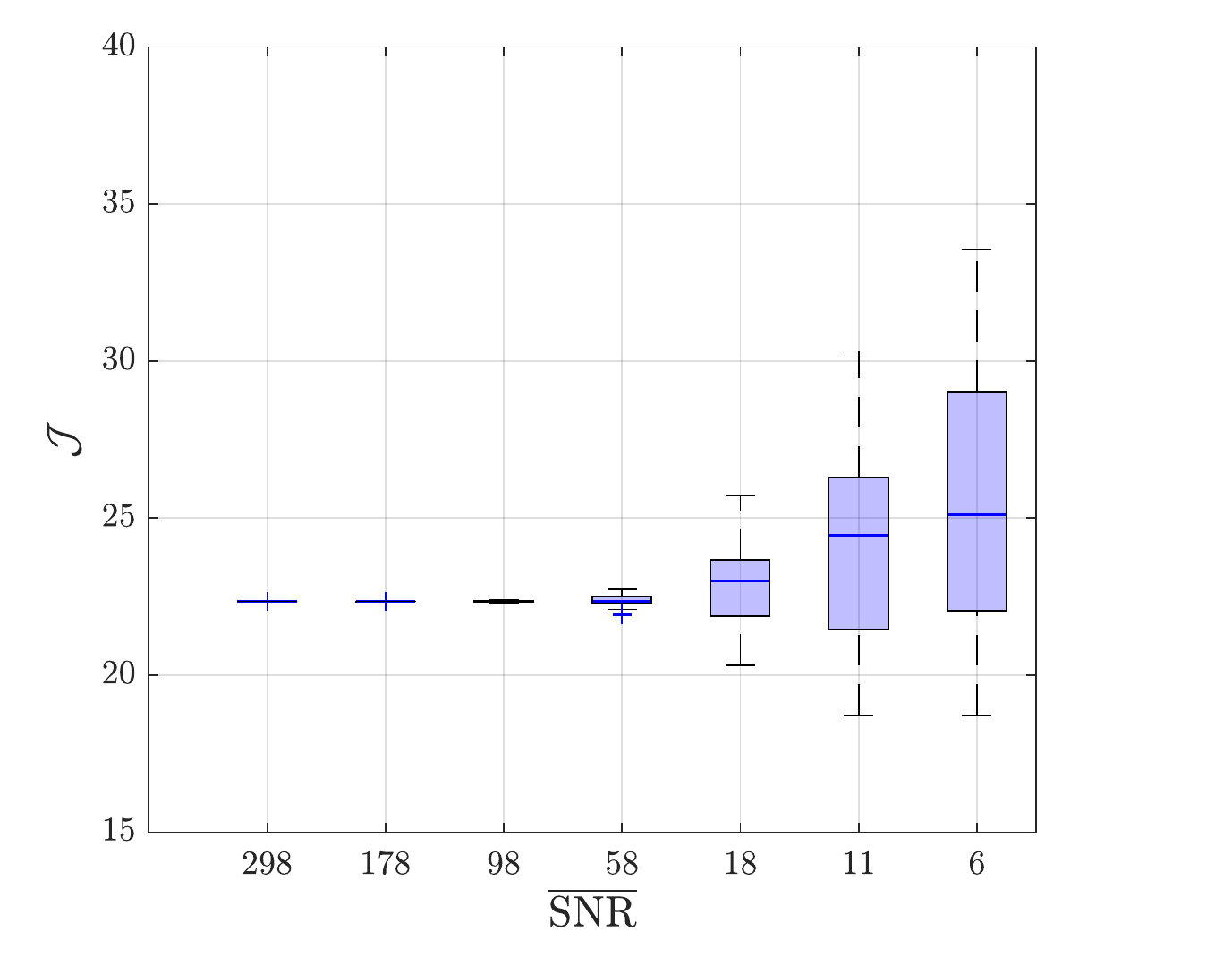}} \hspace{-.7cm} & \hspace{-.7cm} 
				\subfigure[$\mathcal{J}_u$ \emph{vs} $\overline{\mbox{SNR}}$\label{Fig:SNRJu}]{\includegraphics[scale=.4,trim=2cm 1.75cm 5cm 17cm,clip]{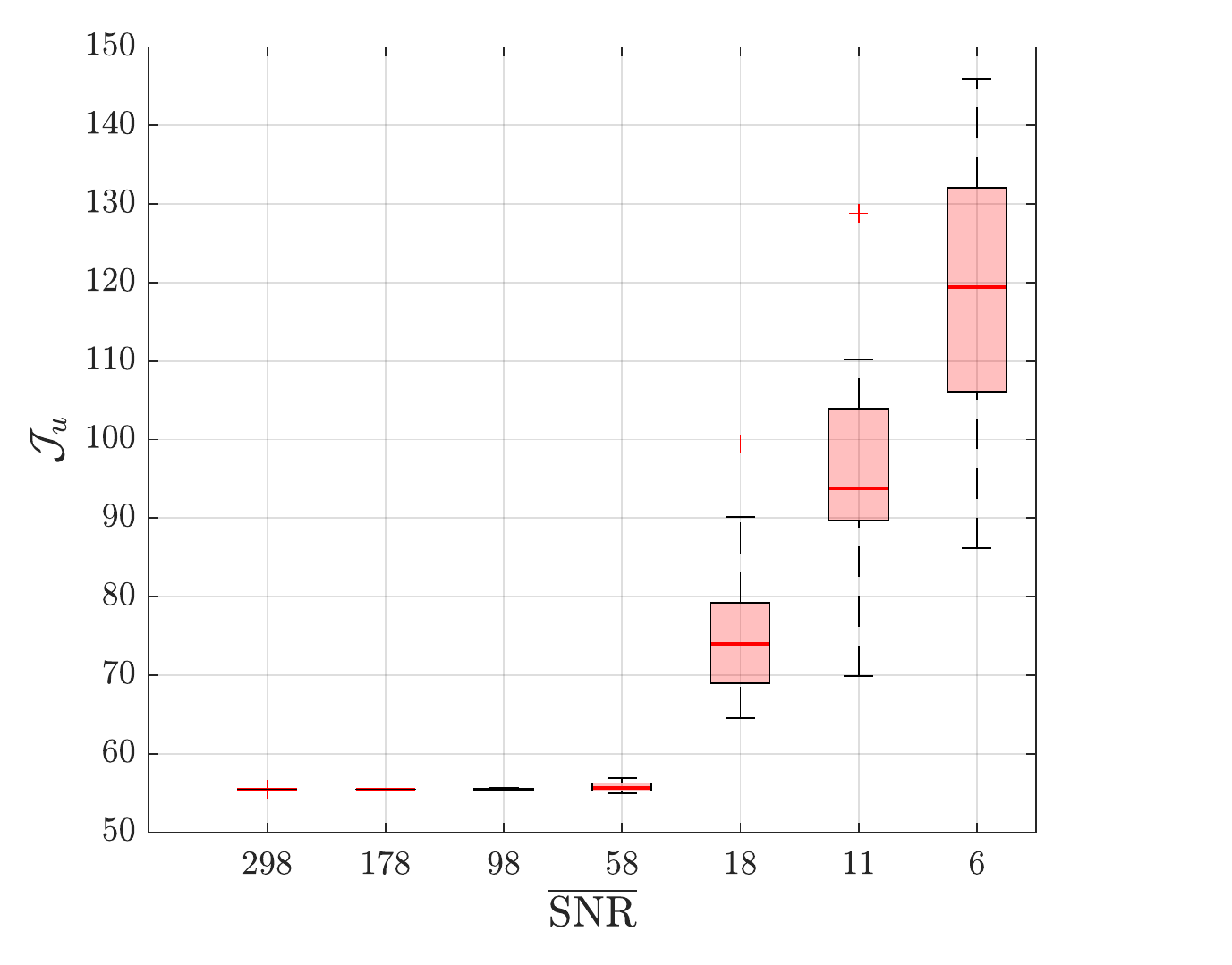}}
		\end{tabular}
		\caption{\rev{Closed-loop validation tests: performance indexes \emph{vs} average \emph{signal-to-noise ratio} ($\overline{\mbox{SNR}}$) over $30$ Monte Carlo predictors}.}\label{Fig:different_noiseLevels} \vspace{-.7cm}
	\end{figure}
	\begin{figure}[!tb]
		\centering
		\begin{tabular}{cc}
		\hspace*{-.4cm}\includegraphics[scale=.515,trim=2cm 2cm 10cm 20cm,clip]{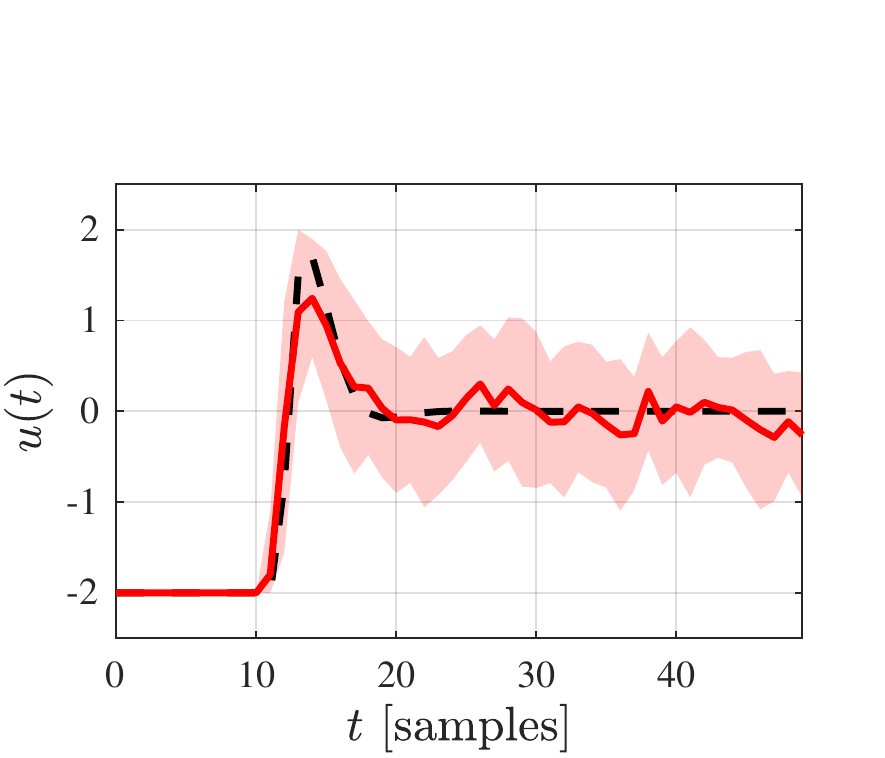} \hspace*{-.4cm}&\hspace*{-.4cm}	\includegraphics[scale=.515,trim=2cm 2cm 10cm 20cm,clip]{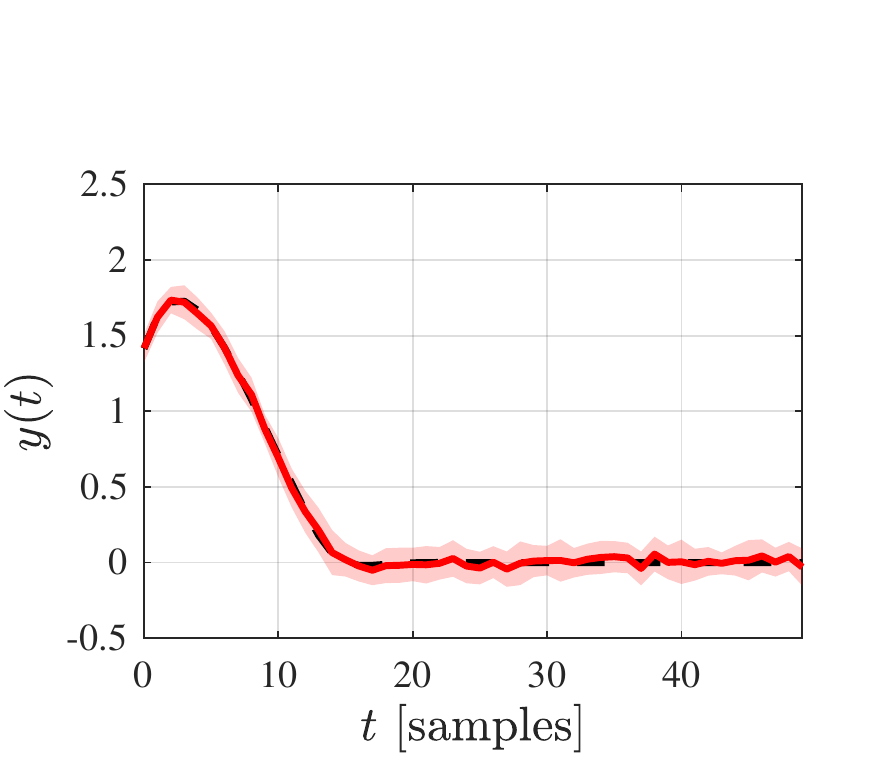}
		\end{tabular}
		\caption{\rev{Closed-loop validation tests ($\overline{\mbox{SNR}}=18$~dB): average (red) closed-loop response and inputs with their standard deviations (shaded area) over the $30$ predictors \emph{vs} oracle \emph{noise-free} MPC (black dashed lines).}}\label{Fig:diff_lambda1noisy}
		\end{figure}
		\begin{figure}[!tb]
		\centering
		\begin{tabular}{c}
			\subfigure[$\mathcal{J}$ \emph{vs} predictive strategy\label{Fig:PredJ}]{\includegraphics[scale=.4,trim=3.05cm 1.95cm 0cm 17cm,clip]{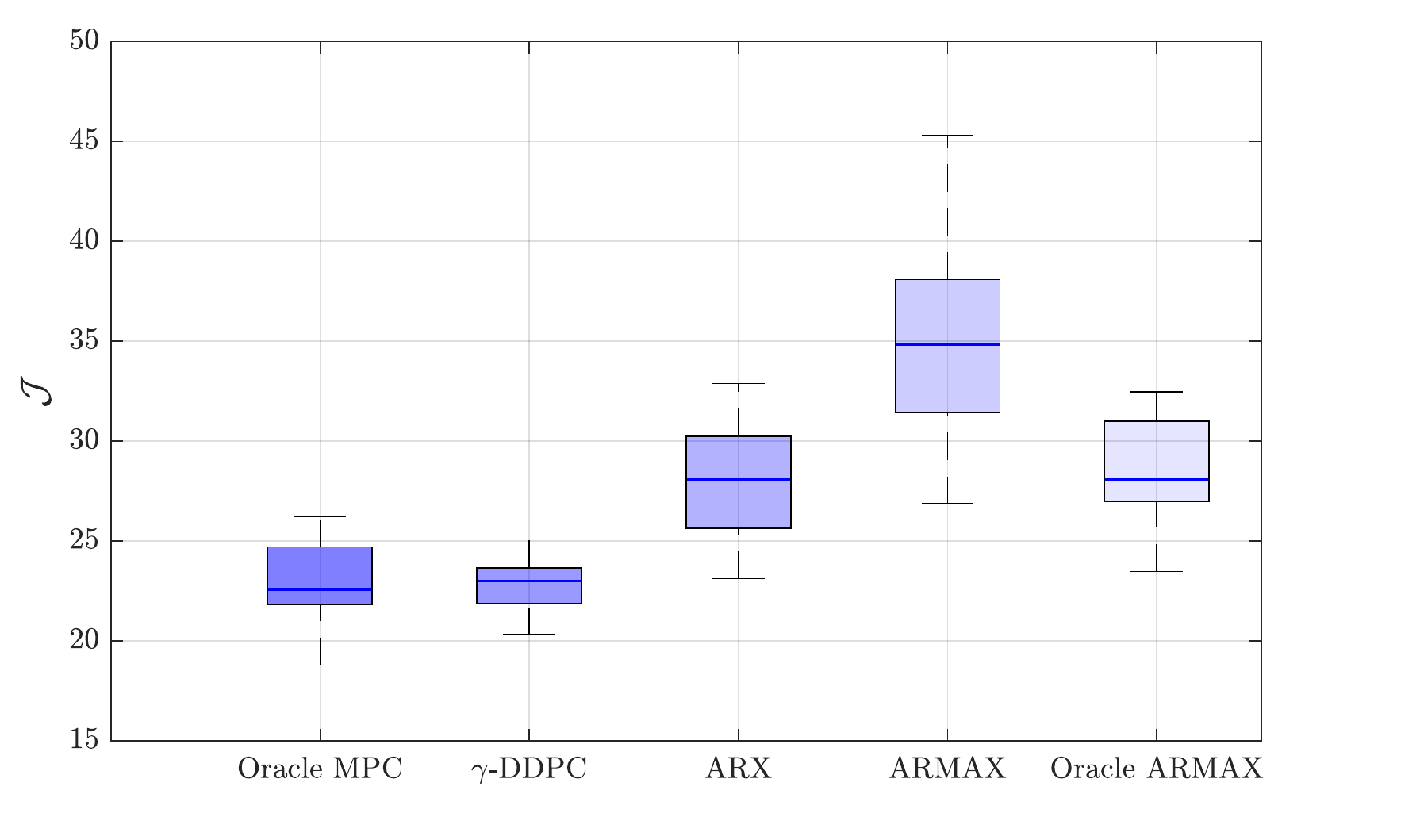}} \vspace{-.2cm}\\
			\subfigure[$\mathcal{J}_u$ \emph{vs} predictive strategy\label{Fig:PredJu}]{\includegraphics[scale=.4,trim=3.05cm 1.95cm 0cm 17cm,clip]{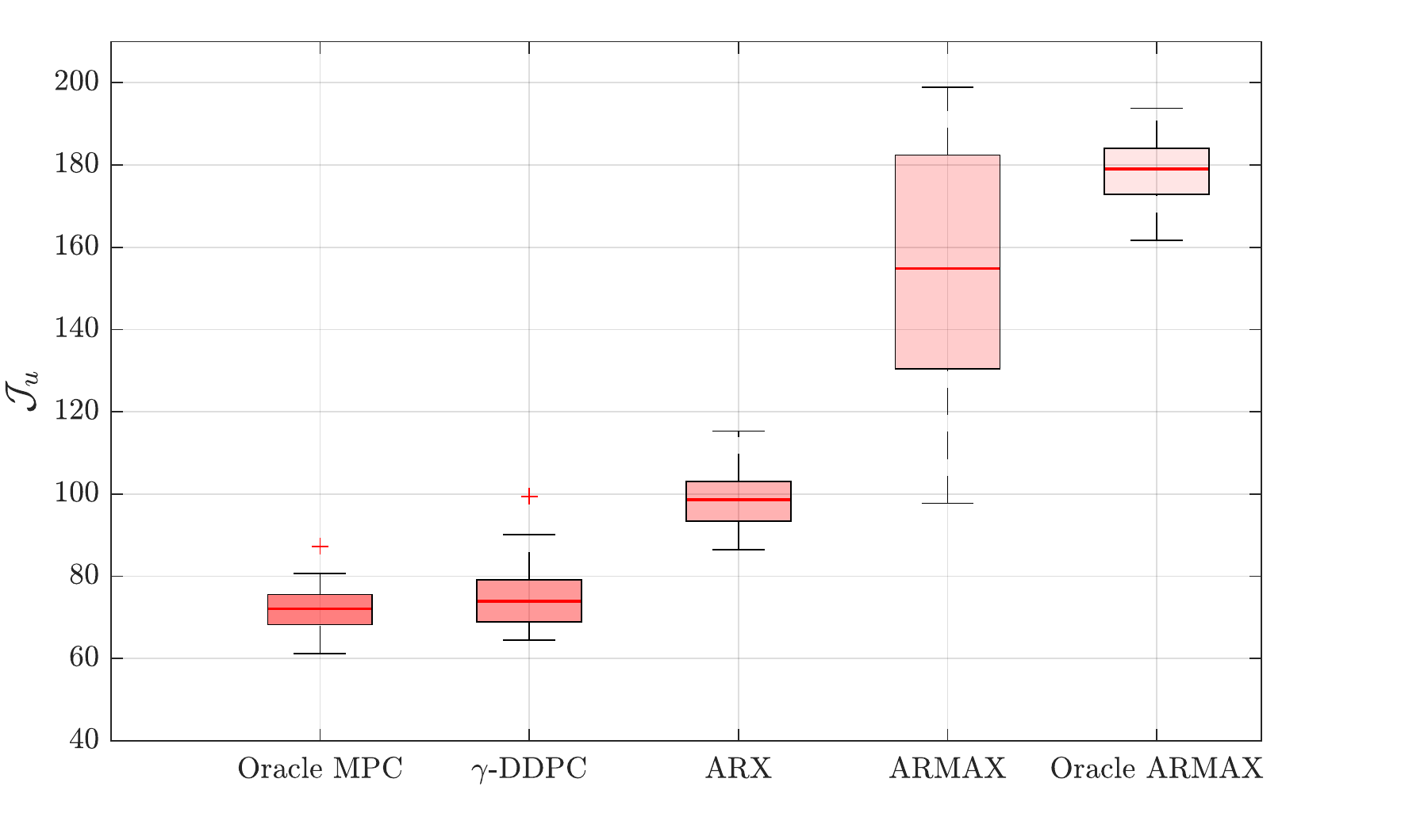}}
		\end{tabular}
		\caption{\rev{Closed-loop validation tests ($\overline{\mbox{SNR}}=18$~dB): performance indexes \emph{vs} predictive strategy over $30$ Monte Carlo predictors}.}\label{Fig:different_predictors}
	\end{figure}
		\begin{figure}[!tb]
	\centering
	\begin{tabular}{c}
		\subfigure[$|\mathcal{J}-\bar{\mathcal{J}}^{\mathrm{o}}|$ \emph{vs} $N_{data}$\label{Fig:DataJ}]{\includegraphics[scale=.4,trim=2cm 1.75cm 5cm 17cm,clip]{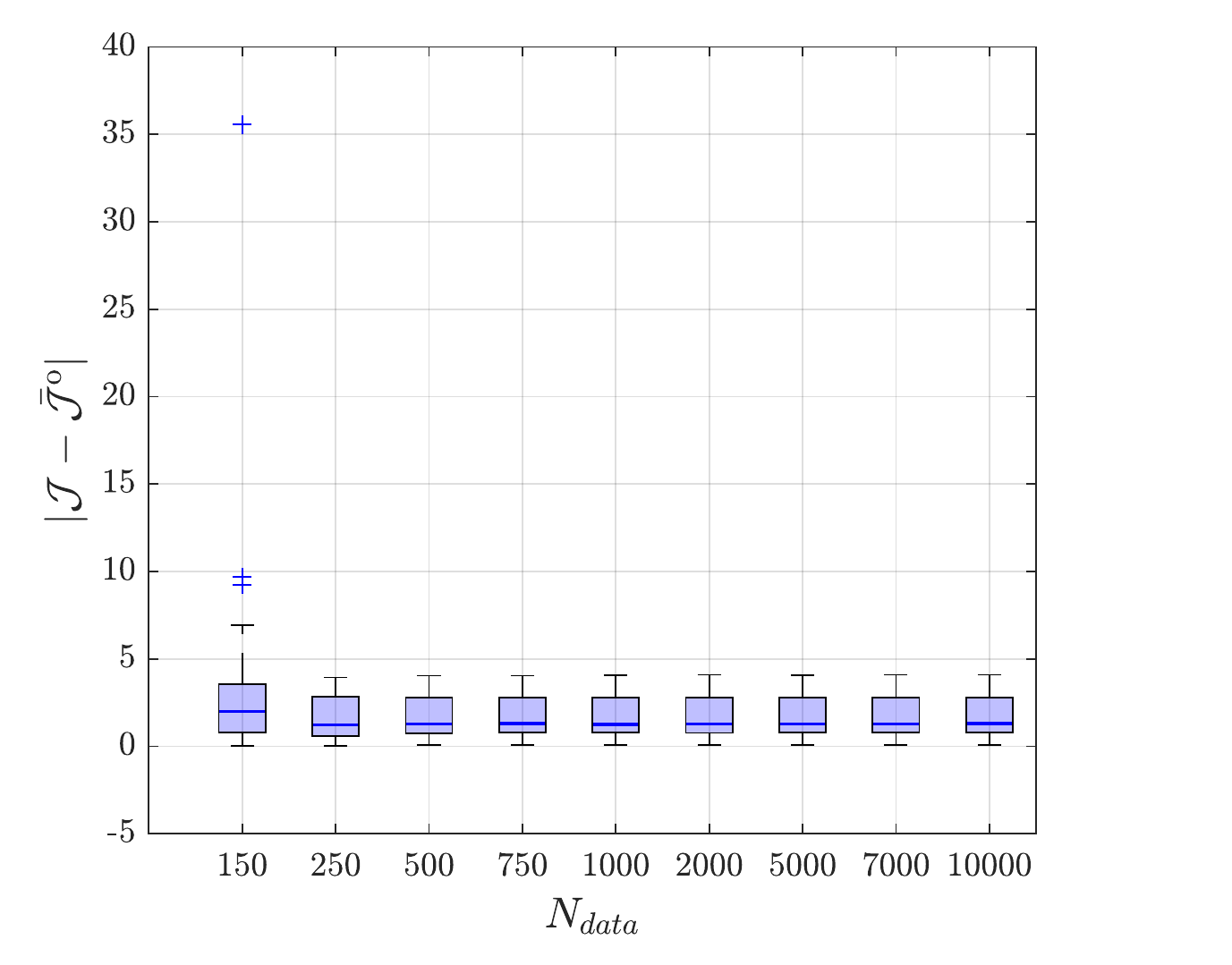}} \\
		\subfigure[$|\mathcal{J}_u-\bar{\mathcal{J}}_{u}^{\mathrm{o}}|$ \emph{vs} $N_{data}$\label{Fig:DataJu}]{\includegraphics[scale=.4,trim=2cm 1.75cm 5cm 17cm,clip]{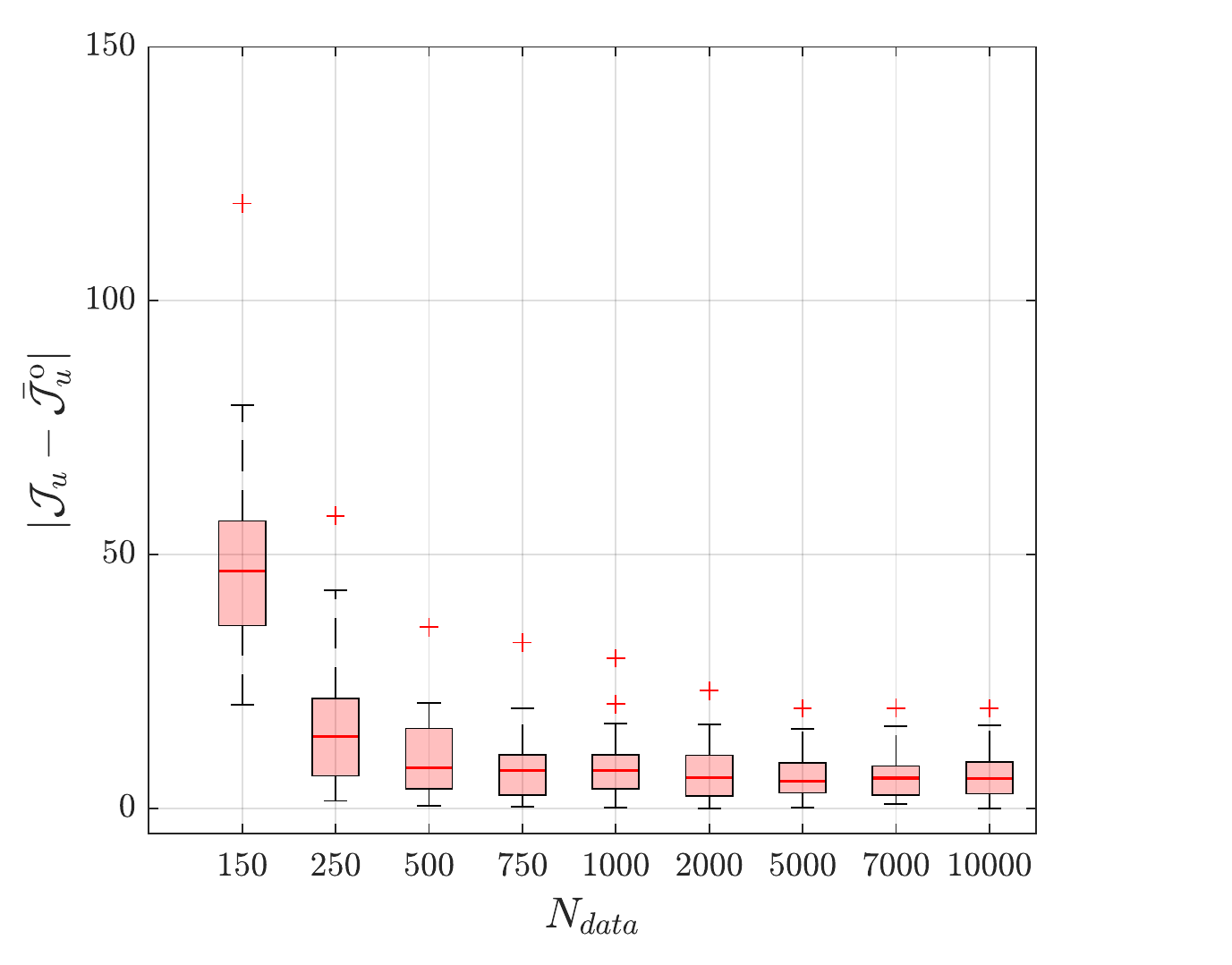}}
	\end{tabular}
	\caption{\rev{Closed-loop validation tests ($\overline{\mbox{SNR}}=18$~dB): absolute differences between the performance indexes of $\gamma$-DDPC and the average values of those associated with the \emph{noisy} oracle MPC \emph{vs} $N_{data}$ over $30$ Monte Carlo predictors}.}\label{Fig:data}
\end{figure}
\begin{figure}[!tb]
	\centering
	\begin{tabular}{cc}
		\hspace{-1cm}
		\subfigure[$|\mathcal{J}-\bar{\mathcal{J}}^{\mathrm{o}}|$ \emph{vs} $\rho$\label{Fig:rhoJ}]{\includegraphics[scale=.4,trim=2cm 1.75cm 5cm 17cm,clip]{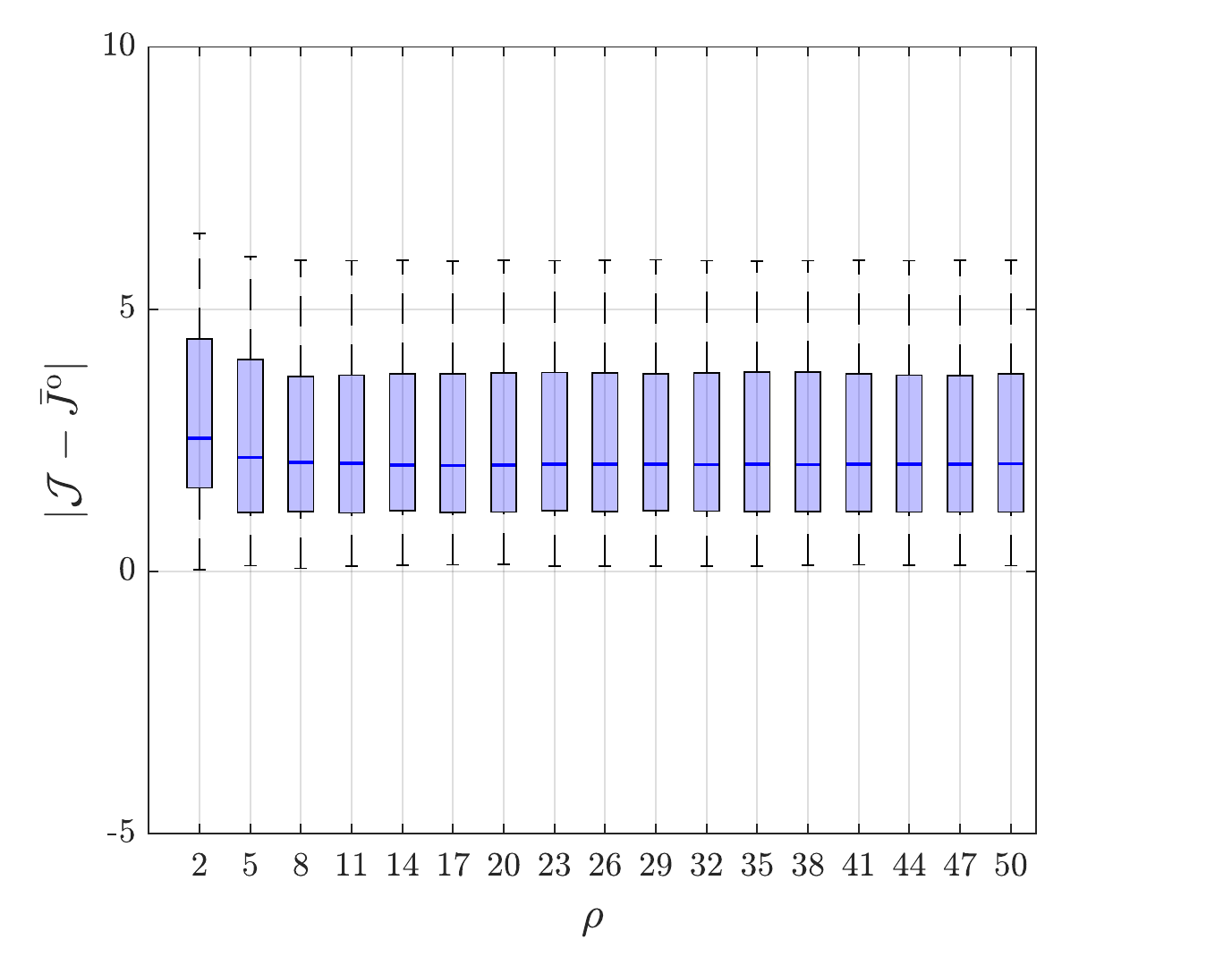}} \hspace{-.7cm} &\hspace{-.7cm} 
		\subfigure[$|\mathcal{J}_u-\bar{\mathcal{J}}_{u}^{\mathrm{o}}|$ \emph{vs} $\rho$\label{Fig:rhoJu}]{\includegraphics[scale=.4,trim=2cm 1.75cm 5cm 17cm,clip]{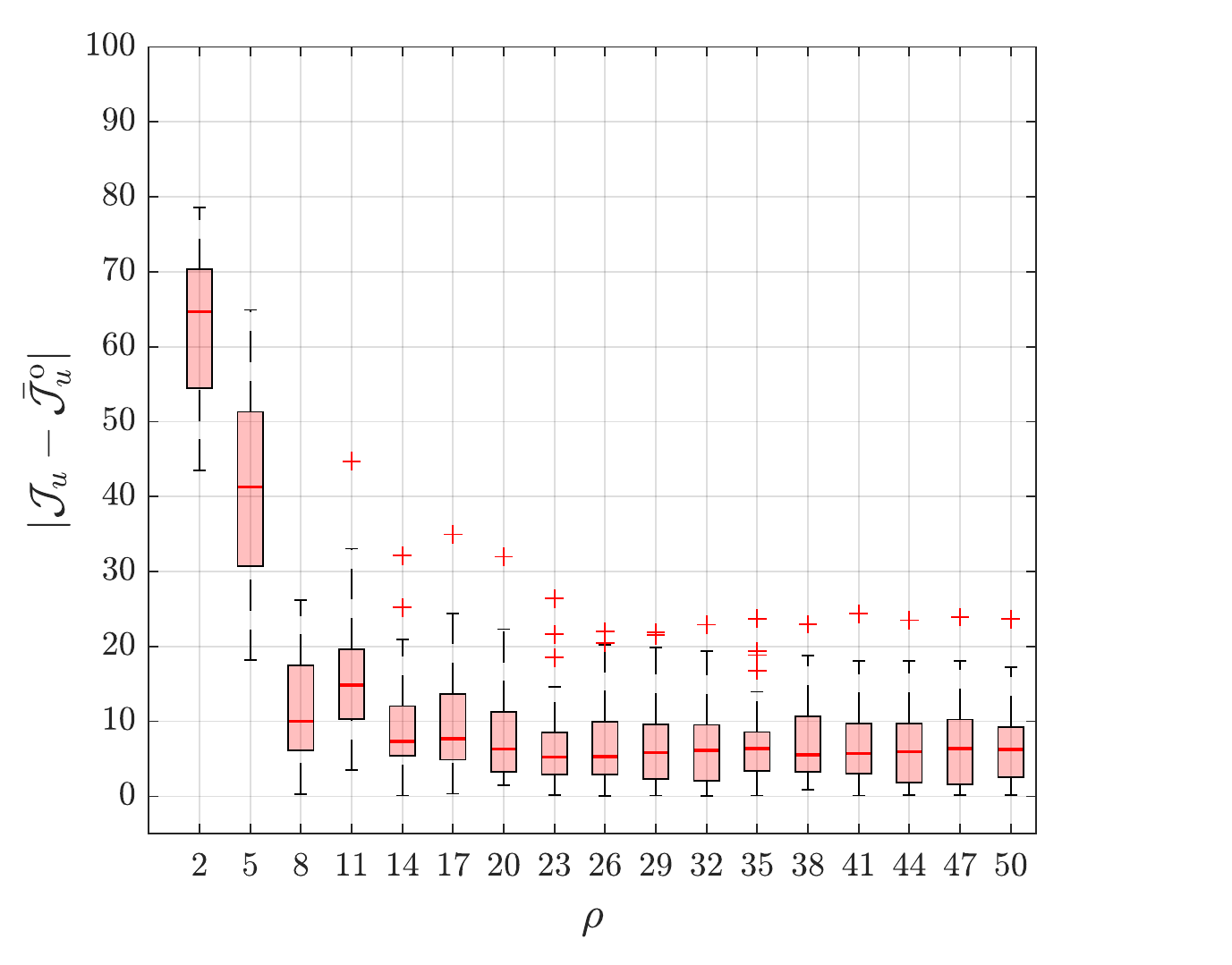}}
	\end{tabular}
	\caption{\rev{Closed-loop validation tests ($\overline{\mbox{SNR}}=18$~dB): absolute differences between the performance indexes of $\gamma$-DDPC and the average values of those associated with the \emph{noisy} oracle MPC \emph{vs} length of the \textquotedblleft past horizon\textquotedblright \ $\rho$ over $30$ Monte Carlo predictors}.}\label{Fig:different_rho}
\end{figure}
To assess the effectiveness of the proposed $\gamma$-DDPC scheme, while validating the conclusions drawn in Section~\ref{Sec:explain}, we consider the same benchmark example proposed in \cite{bemporad2002explicit}. Therefore, the unknown plant to be controlled is described by the following model:
	\begin{equation}\label{eq:model_ex}
		\begin{cases}
			x(t+1)\!=\!\begin{bmatrix}
			0.7326 & -0.0861\\
			0.1722 & 0.9909
			\end{bmatrix}\!x(t)\!+\!\!\begin{bmatrix}
			0.0609\\
			0.0064
		\end{bmatrix}\!u(t)\!+\!Ke(t),\\
			y(t)\!=\!\begin{bmatrix} 0 & 1.4142 
				\end{bmatrix}\!x(t)\!+\!e(t),
		\end{cases}
	\end{equation}
where the innovation is set to be zero-mean and Gaussian distributed\rev{, while $K \in \mathbb{R}^{1 \times 2}$ is randomly chosen according to a normal distribution, with all the eigenvalues of $A-KC$ being inside the unit circle.} By considering a prediction horizon of length $T=40$ and $\rho=23$ (selected according to Remark~\ref{rem:rho}), we design the predictive controllers solving a zero regulation problem by running Algorithm~\ref{algo1}\footnote{All tests have been carried out on an M1 chip, running MATLAB 2021a, while the optimization problems are solved with CVX \cite{gb08,cvx}} with $Q=I$, $R=10^{-3}$ and \rev{$y_{r}=u_r=0$}
, as in \cite{bemporad2002explicit}. To have a quantitative assessment of performance, for all closed-loop tests we consider the following indexes:
\begin{subequations} \label{eq:perf_indexes}
	\begin{align*}
	& \mathcal{J}=\sum_{t=0}^{T_{v}-1} \|y(t)\|_{Q}^{2}+\|u(t)\|_{R}^{2},~~~ \mathcal{J}_{u}=\sum_{t=0}^{T_{v}-1} u(t)^{2},
	\end{align*}
\end{subequations}  
that allow us to have a compact information on the tracking performance and the input effort in testing.

\rev{We initially focus on assessing the performance of the $\gamma$-DDPC scheme introduced in Section~\ref{Sec:new_scheme}. Firstly, we assess the sensitivity of $\gamma$-DDPC to noise in the available batch of data $\mathcal{D}_{N_{data}}$. By progressively increasing the level of noise, we thus perform $30$ Monte Carlo simulations of length $N_{data}=1000$ with a random input sequence, uniformly distributed in the interval $[-5,5]$, to generate different datasets. Closed-loop performance is then evaluated for each predictive model and level of noise by using $\gamma$-DDPC to close the control loop over tests of length $T_{v}=50$, always starting from the same initial condition. As shown in \figurename{~\ref{Fig:different_noiseLevels}}, b}oth the performance and control effort are quite consistent when the average \emph{signal-to-noise ratio} (SNR) is high. Instead, a slight degradation in performance is experienced when the average noise corrupting the data used to construct the predictor decreases, along with an increase in the control effort required during \rev{closed-loop} testing. These results generally show that the proposed $\gamma$-DDPC strategy \rev{allows the closed-loop system to track (on average) the reference, in spite of the process and measurement noise affecting it}. This consideration is further confirmed by the \rev{results reported} in \figurename{~\ref{Fig:diff_lambda1noisy}}, \rev{where the closed-loop inputs and output attained with $\gamma$-DDPC are compared with the ones obtained via an MPC designed with the true system matrices (denominated from now on \emph{oracle MPC}) within a noise-free setting.}

\rev{For a fixed level of noise, we then} compare the closed-loop performance achieved with Algorithm~\ref{algo1} with the ones attained by designing an MPC with an identified model\footnote{The model is identified with N4SID \cite{N4SID}.} of the plant. To this end, we keep the input/output structure of the predictor by identifying both an \emph{autoregressive} model \emph{with exogenous inputs} (ARX) of order $23$, an ``oracle'' \emph{autoregressive moving average} model \emph{with exogenous inputs} (ARMAX) of order $2$ and ARMAX models with orders selected according to Remark~\ref{rem:rho}\footnote{The average order of the ARMAX models is $7$, while its standard deviation is $5$.}. As shown in \figurename{~\ref{Fig:different_predictors}}, the use of all identified models tends to slightly deteriorate performance, while requiring an additional control effort. \rev{The main deterioration in performance is visible when} the ARMAX models are used to design the MPC. 
These results thus highlight the possible advantages of using the $\gamma$-DDPC scheme over a identification+model-based control procedure, at least for the considered case study. We also evaluate how $\gamma$-DDPC performs when increasing $N_{data}$ over noisy closed-loop tests. As shown in \figurename{~\ref{Fig:data}}, the difference between the overall cost and the required control effort tends to decrease with the number of data, in line with established results in system identification. \rev{Lastly, we assess the sensitivity of $\gamma$-DDPC to the only free parameter of this scheme, namely $\rho$. } As shown in \figurename{~\ref{Fig:different_rho}}, \rev{the main changes due to different choices of the \textquotedblleft past horizon\textquotedblright are visible in the index assessing the control effort. In particular,} excessively small values of $\rho$ results into the demand for \rev{a greater control effort than that required by the oracle MPC. By increasing $\rho$, the input effort required by $\gamma$-DDPC tends to become aligned with that associated with the oracle MPC, while slighly increasing again when $\rho>30$. Note that $\mathcal{J}_{u}$ gets the closest to the average input effort index of the oracle MPC for $\rho=23$, thus validating the choice we have automatically performed through the Akaike's criterion.}

\subsection{Effect of additional regularization on $\gamma$-DDPC}
\begin{figure}[!tb]
	\centering
	\begin{tabular}{cc}
		\hspace{-1cm}\subfigure[$|\mathcal{J}-\bar{\mathcal{J}}^{\mathrm{o}}|$ \emph{vs} $\beta$\label{Fig:betaJ}]{\includegraphics[scale=.4,trim=2cm 1.75cm 5cm 17cm,clip]{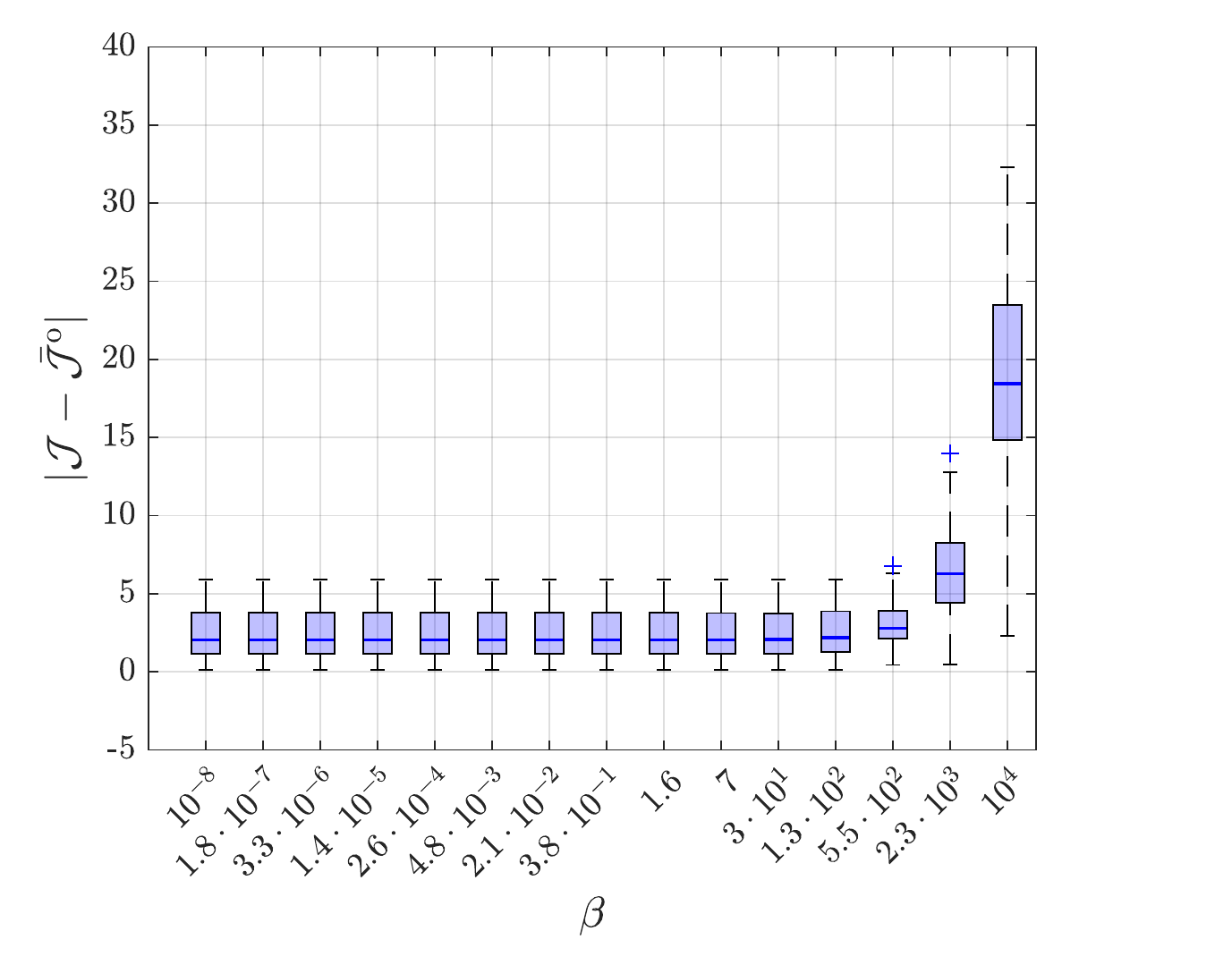}} \hspace{-.7cm} & \hspace{-.7cm}
		\subfigure[$|\mathcal{J}_u-\bar{\mathcal{J}}_{u}^{\mathrm{o}}|$ \emph{vs} $\beta$\label{Fig:betaJu}]{\includegraphics[scale=.4,trim=2cm 1.75cm 5cm 17cm,clip]{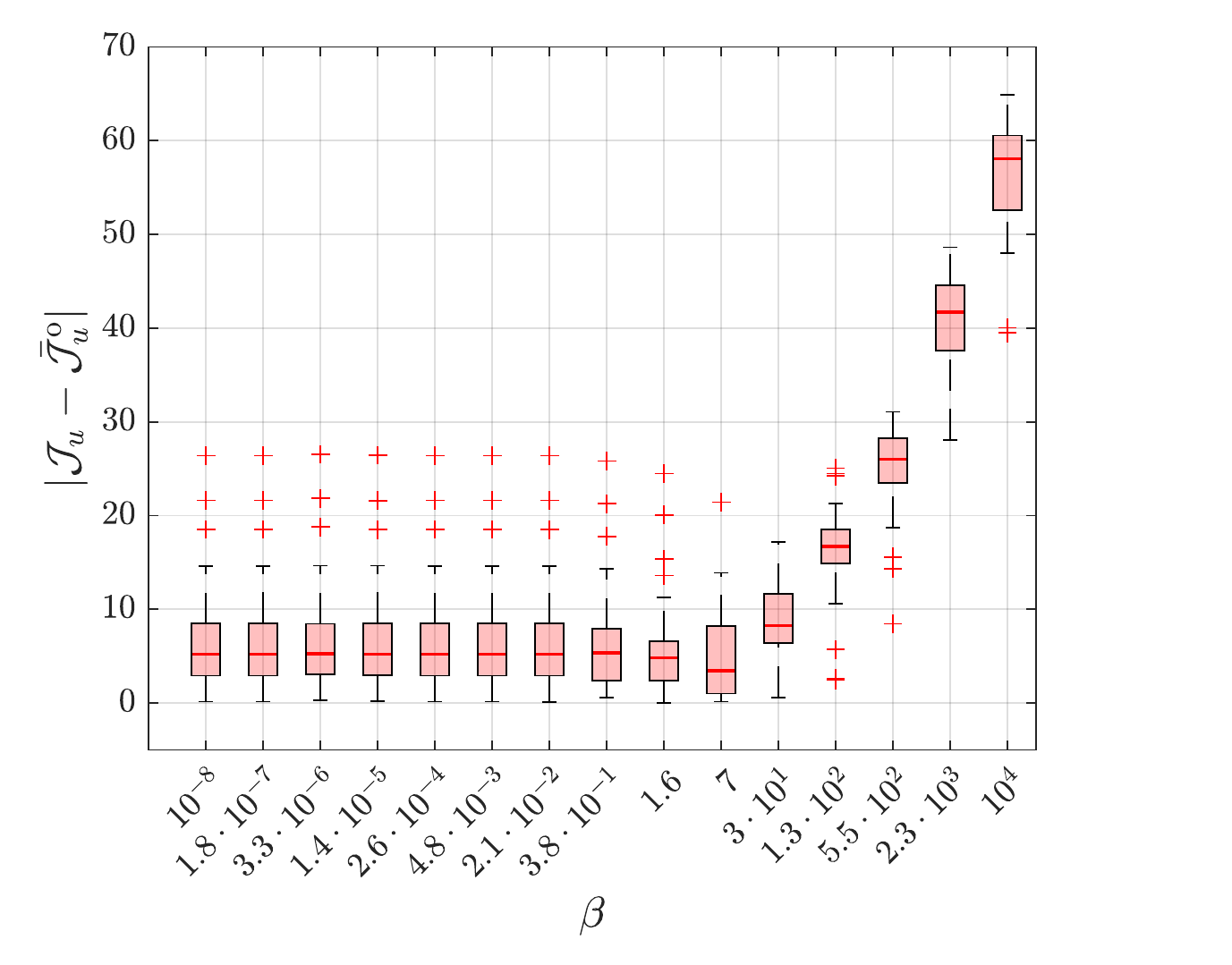}}
	\end{tabular}
	\caption{\rev{Closed-loop validation tests ($\overline{\mbox{SNR}}=18$~dB): absolute differences between the performance indexes of $\gamma$-DDPC and the average values of those associated with the \emph{noisy} oracle MPC \emph{vs} penalties $\beta$ on a 2-norm regularization on $\gamma_{2}$ over $30$ Monte Carlo predictors}.}\label{Fig:different_beta}
\end{figure}
\begin{figure}[!tb]
	\centering
	\begin{tabular}{c}
		\subfigure[$|\mathcal{J}-\bar{\mathcal{J}}^{\mathrm{o}}|$ \emph{vs} $\eta$\label{Fig:etaJ}]{\includegraphics[scale=.4,trim=2cm 1.75cm 5cm 17cm,clip]{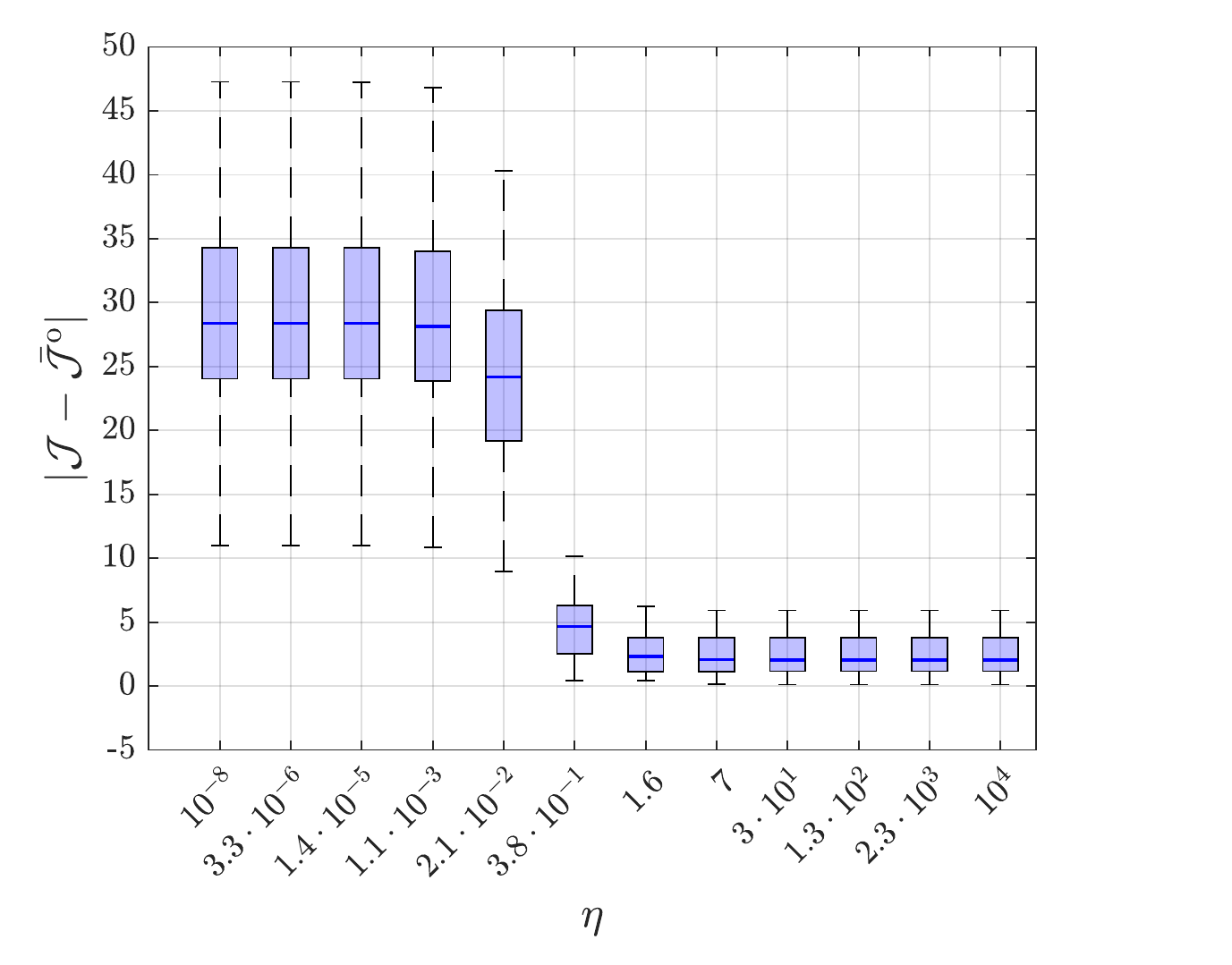}} \\
		\subfigure[$|\mathcal{J}_u-\bar{\mathcal{J}}_{u}^{\mathrm{o}}|$ \emph{vs} $\eta$\label{Fig:etaJu}]{\includegraphics[scale=.4,trim=2cm 1.75cm 5cm 17cm,clip]{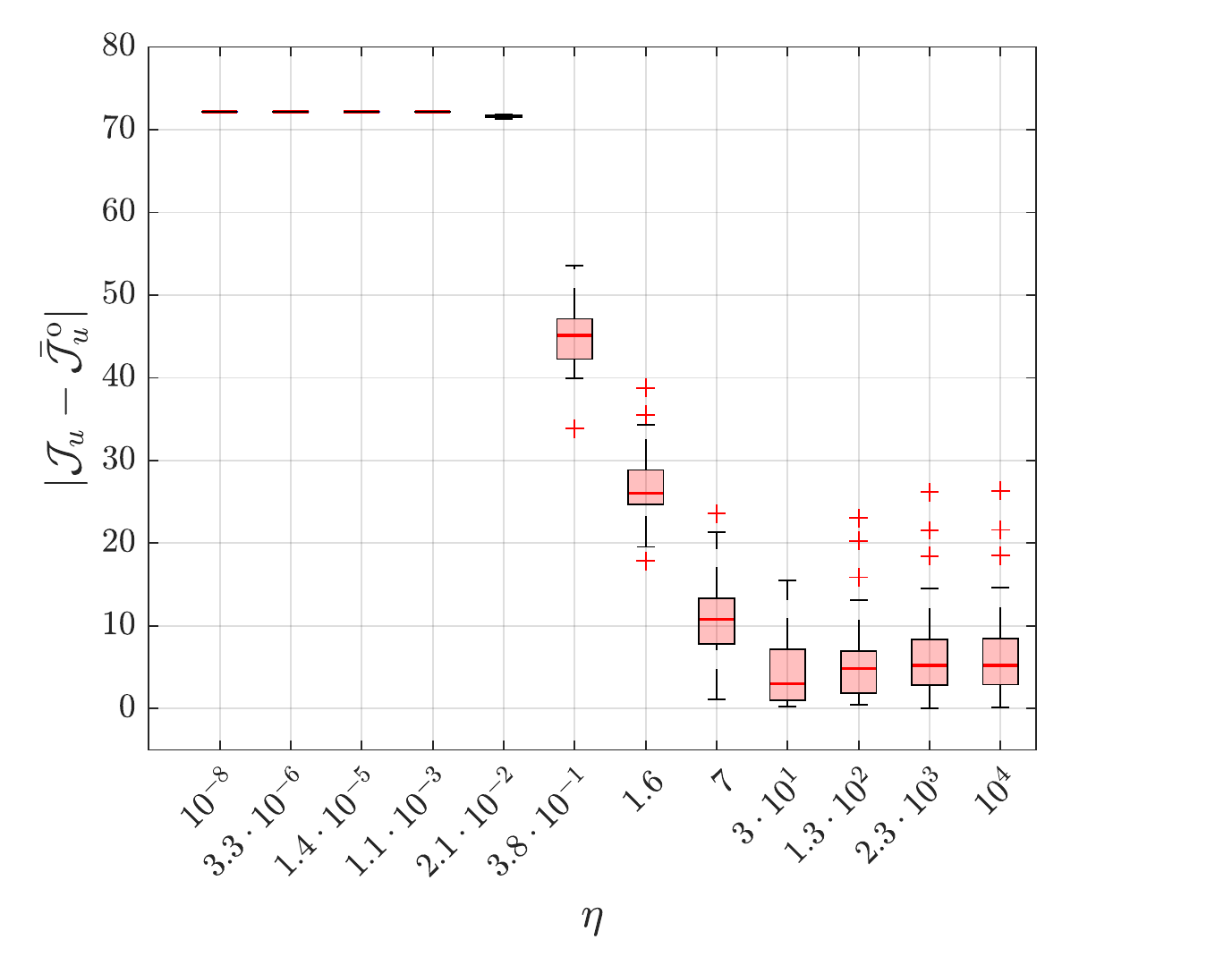}}
	\end{tabular}
	\caption{\rev{Closed-loop validation tests ($\overline{\mbox{SNR}}=18$~dB): absolute differences between the performance indexes of $\gamma$-DDPC and the average values of those associated with the \emph{noisy} oracle MPC \emph{vs} penalties $\eta$ on a 2-norm regularization on $\gamma_{3}$ over $30$ Monte Carlo predictors}.}\label{Fig:different_eta}
\end{figure}

\rev{By keeping the level of noise acting on the batch and online data, w}e now study the effect of an additional 2-norm regularization on $\gamma_{2}$, with $\beta>0$ indicating the associated penalty. As shown in \figurename{~\ref{Fig:different_beta}}, the performance index $\mathcal{J}$ tends to be rather insensitive to the additional regularization term up to a certain value of $\beta$. However, when $\beta$ increases sufficiently, performance tends to deteriorate, while the input effort tends to consistently increase with respect to the oracle MPC one. Since such a behavior is certainly undesirable, this result validates in this experimental case the claims in Section~\ref{Sec:explain}. Indeed, the additional regularization leads to a deterioration of performance, likely to be induced by the change that the regularization enforces on the actual performance-oriented cost. To prove the effectiveness of our structural choices, within the same framework we consider the DDPC problem with the predicted output defined as
\begin{equation*}
			y_{f}=\sum_{i=1}^{3}L_{3i}\gamma_{i},
\end{equation*} 
and $\gamma_{3}$ not set to zero beforehand, as in the proposed $\gamma$-DDPC approach. In this case, $\gamma_{3}$ is steered towards small values via an addition of a 2-norm regularization term in the cost \rev{weighted by $\eta>0$}. As shown in \figurename{~\ref{Fig:different_eta}}, only by heavily regularizing $\gamma_{3}$ we obtain performance comparable with the ones obtained with the oracle predictive controller. Specifically low $\eta$ result in an ineffective DDPC scheme, with the system actually operating in open-loop. These results once again show the expected detrimental effect of poor choices of the regularization parameter, highlighting the advantages of embedding insights given by subspace identification into the predictor used in the DDPC scheme. 
	\subsection{\rev{Validating results on regularized DDPC schemes}}	
	\begin{figure}[!tb]
		\centering
		\begin{tabular}{cc}
			\hspace{-1cm}\subfigure[$|\mathcal{J}-\bar{\mathcal{J}}^{\mathrm{o}}|$ \emph{vs} $\lambda$\label{Fig:luciaJ}]{\includegraphics[scale=.4,trim=2cm 1.75cm 5cm 17cm,clip]{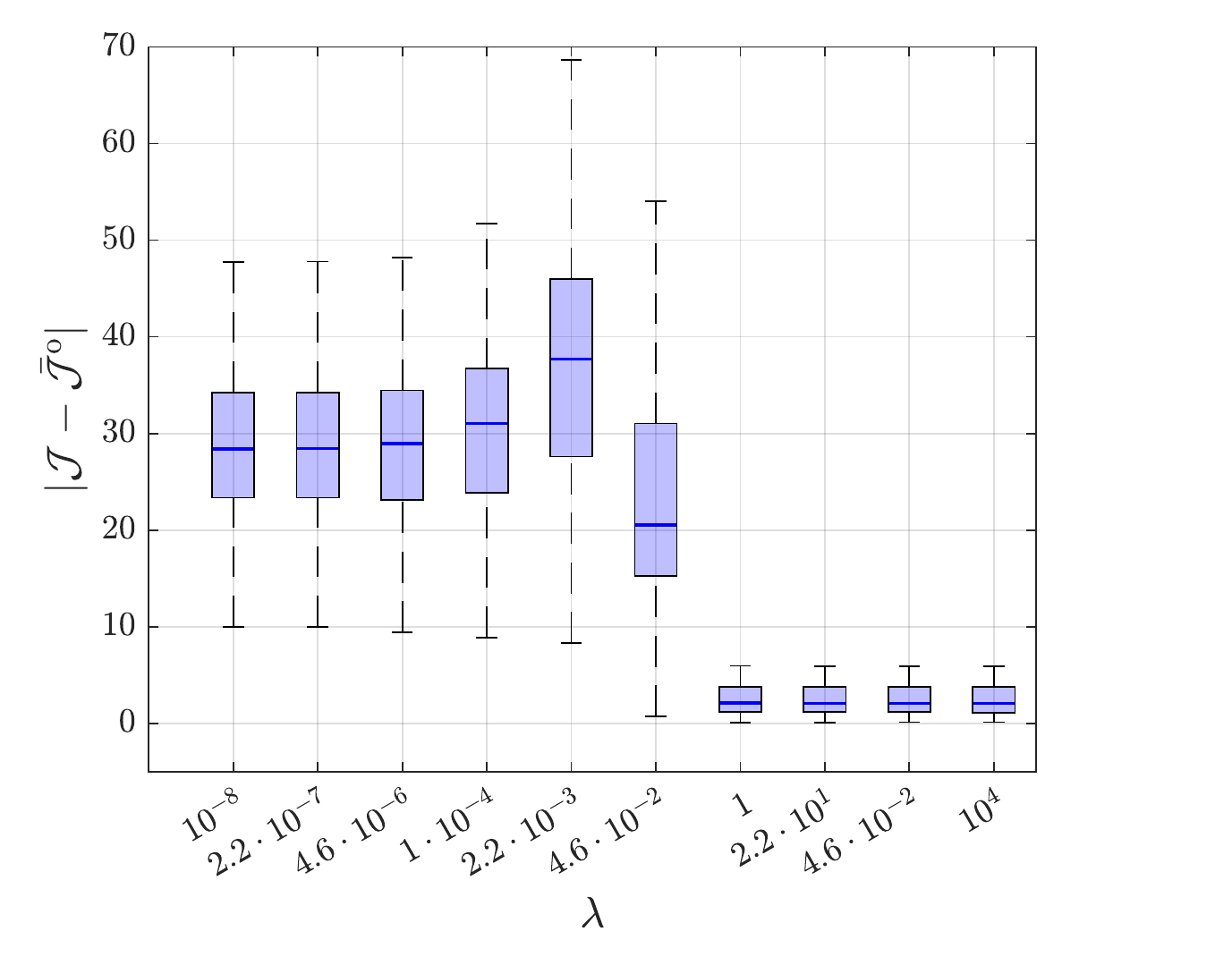}} \hspace{-.7cm} & \hspace{-.7cm}
			\subfigure[$|\mathcal{J}_u-\bar{\mathcal{J}}_{u}^{\mathrm{o}}|$ \emph{vs} $\lambda$\label{Fig:luciaJu}]{\includegraphics[scale=.4,trim=2cm 1.75cm 5cm 17cm,clip]{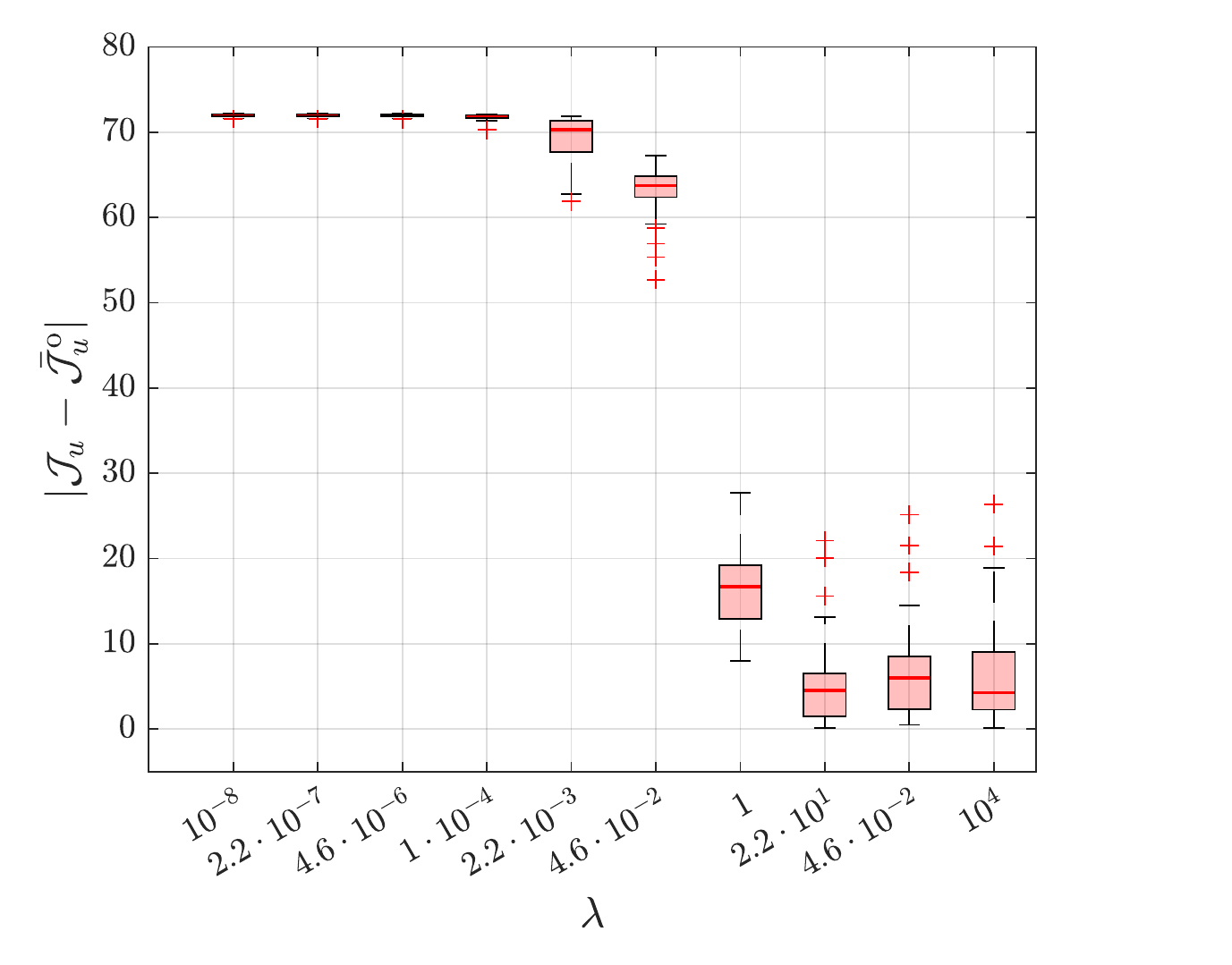}}
		\end{tabular}
		\caption{\rev{Closed-loop validation tests ($\overline{\mbox{SNR}}=18$~dB): absolute differences between the performance indexes attained with \cite{Felix2021} and the average values of those associated with the \emph{noisy} oracle MPC \emph{vs} penalties $\lambda$ over $30$ Monte Carlo predictors}.}\label{Fig:different_lucia}
	\end{figure}
		\begin{figure}[!tb]
		\centering
		\begin{tabular}{c}
			\subfigure[$|\mathcal{J}-\bar{\mathcal{J}}^{\mathrm{o}}|$ \emph{vs} $\bar{\lambda}_{\alpha}$ and $\lambda_{\sigma}$\label{Fig:berbJ}]{\includegraphics[scale=.4,trim=2cm 1.75cm 5cm 17cm,clip]{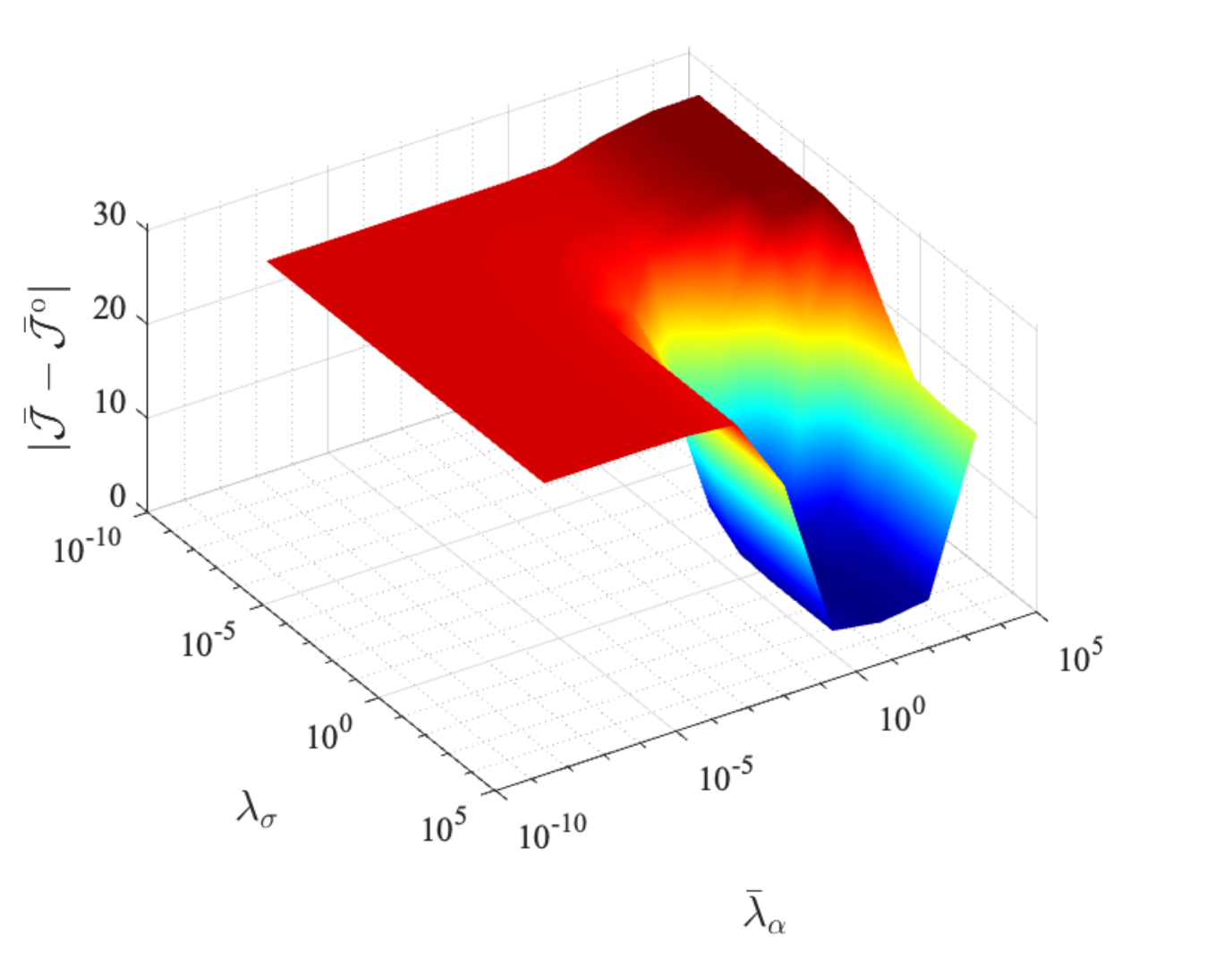}} \vspace{-.2cm}\\
			\subfigure[$|\mathcal{J}_u-\bar{\mathcal{J}}_{u}^{\mathrm{o}}|$ \emph{vs} $\bar{\lambda}_{\alpha}$ and $\lambda_{\sigma}$\label{Fig:berbJu}]{\includegraphics[scale=.4,trim=2cm 1.75cm 5cm 17cm,clip]{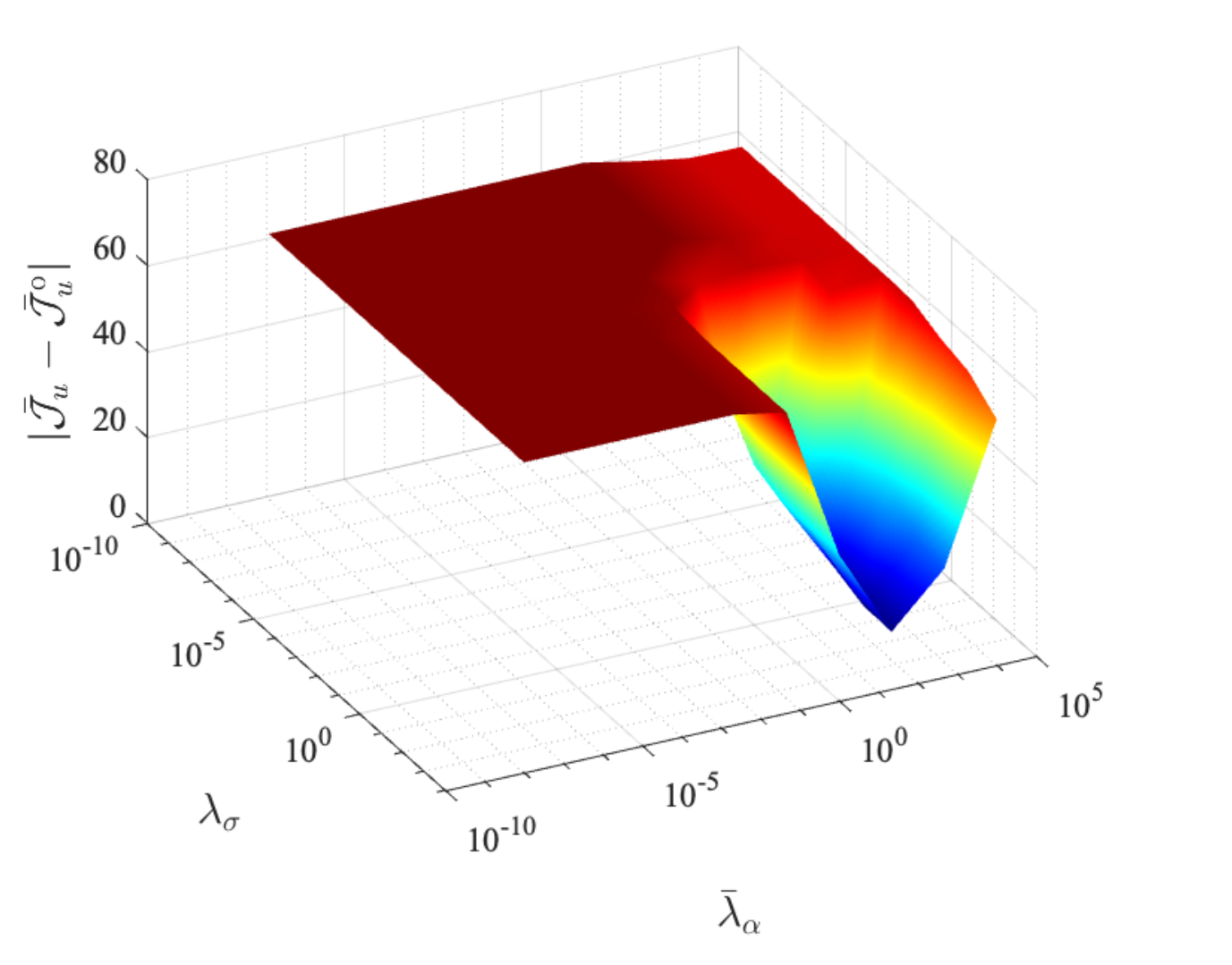}}
		\end{tabular}
		\caption{\rev{Closed-loop validation tests ($\overline{\mbox{SNR}}=18$~dB): absolute differences between the performance indexes attained with \cite{berberich2020data} and the average values of those associated with the \emph{noisy} oracle MPC \emph{vs} penalties $\bar{\lambda}_{\alpha}=\lambda_{\alpha}\bar{\varepsilon}$ and $\lambda_{\sigma}$ over $30$ Monte Carlo predictors}.}\label{Fig:different_berb}
	\end{figure}
\begin{figure}[!tb]
	\centering
	\begin{tabular}{c}
		\subfigure[$|\mathcal{J}-\bar{\mathcal{J}}^{\mathrm{o}}|$ \emph{vs} approaches\label{Fig:allJ}]{\includegraphics[scale=.4,trim=2.5cm 1.75cm 1cm 17cm,clip]{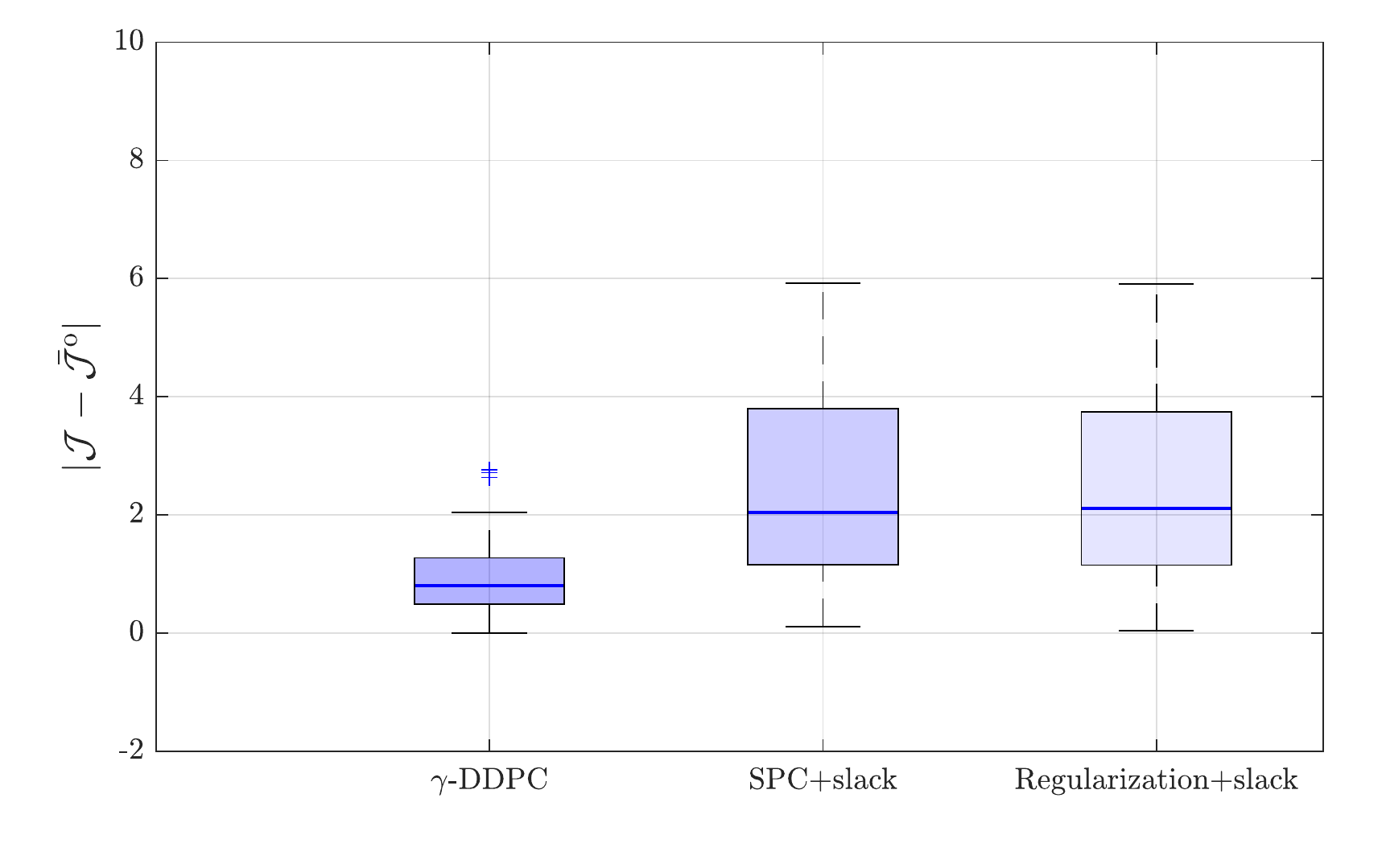}} \vspace{-.2cm}\\
		\subfigure[$|\mathcal{J}_u-\bar{\mathcal{J}}_{u}^{\mathrm{o}}|$ \emph{vs} approaches\label{Fig:allJu}]{\includegraphics[scale=.4,trim=2.5cm 1.75cm 1cm 17cm,clip]{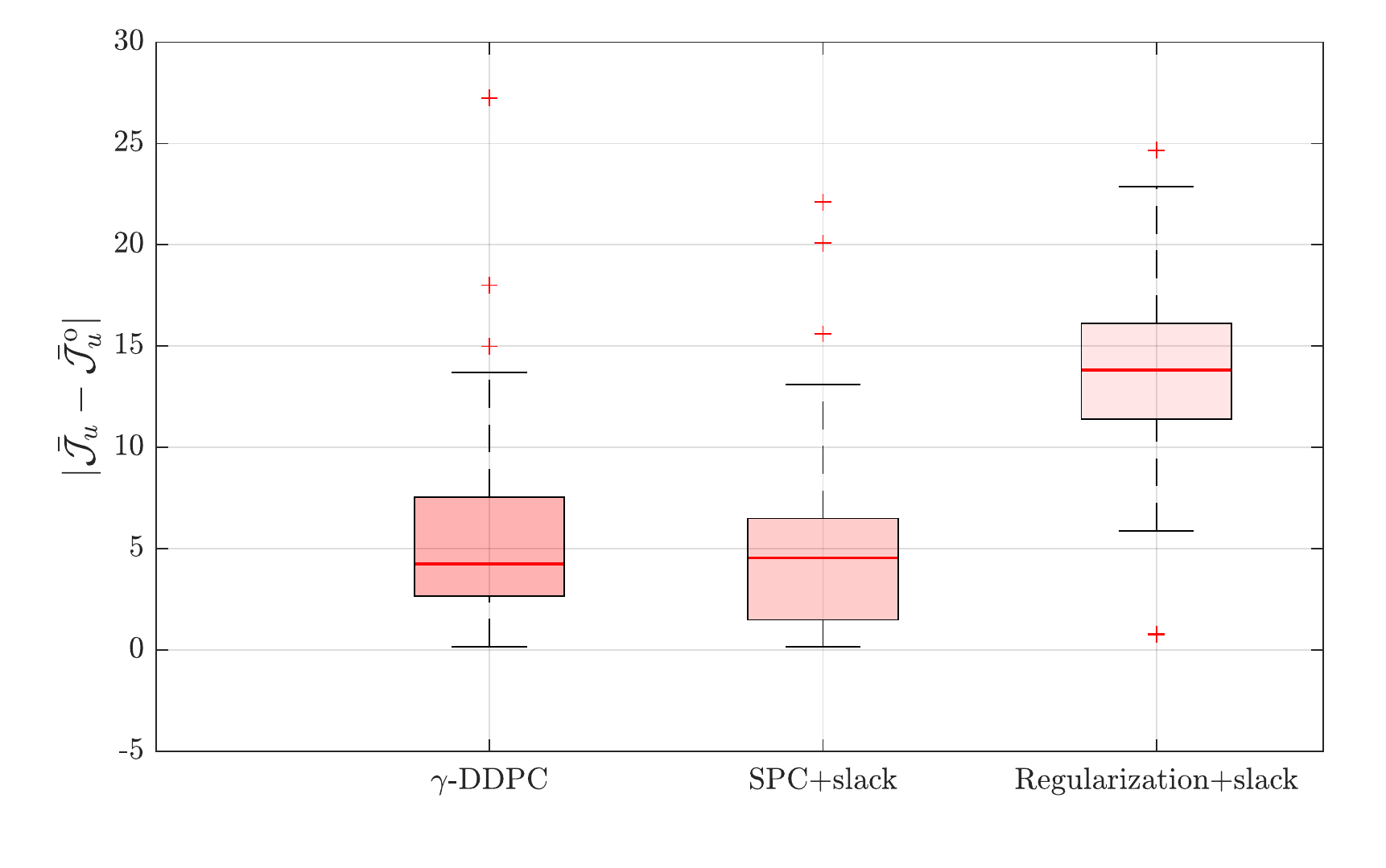}}
	\end{tabular}
	\caption{\rev{Closed-loop validation tests ($\overline{\mbox{SNR}}=18$~dB): absolute differences between the average performance indexes attained with he \emph{noisy} oracle MPC and $\mathcal{J}$ and $\mathcal{J}_{u}$ obtained with $\gamma$-DDPC, and the SPC+slack scheme in \eqref{eq:DD_Lucia} and the regularized approach with slack in \eqref{eq:DD_Berberich} with the best tuning over $30$ Monte Carlo predictors.}}\label{Fig:overall_notdorf}
\end{figure}
\begin{figure}[!tb]
	\centering
	\begin{tabular}{c}
		\subfigure[$|\mathcal{J}-\bar{\mathcal{J}}^{\mathrm{o}}|$ \emph{vs} $\lambda_{1}$ and $\lambda_{2}$\label{Fig:dorfJ}]{\includegraphics[scale=.4,trim=2cm 1.75cm 5cm 17cm,clip]{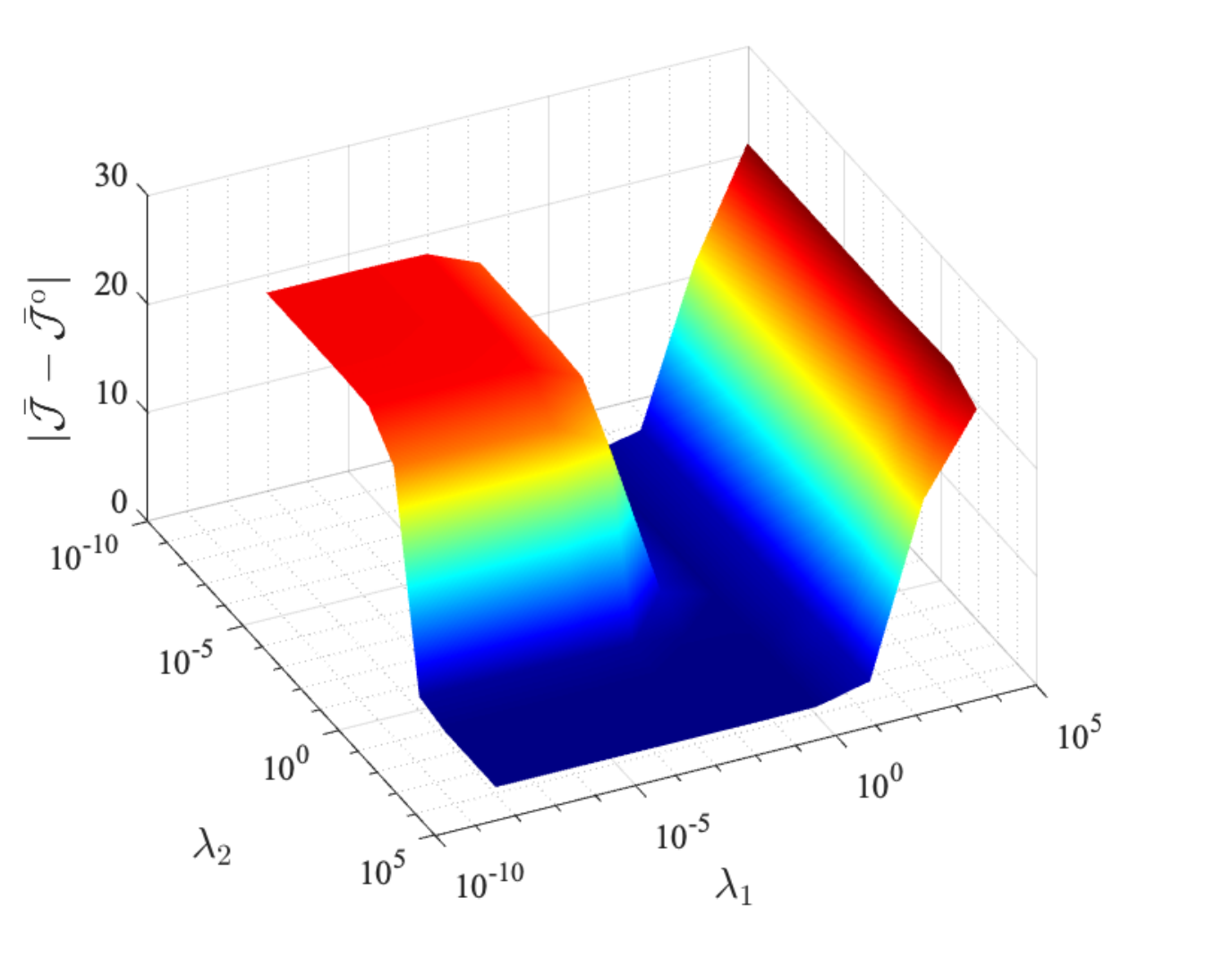}} \vspace{-.2cm}\\
		\subfigure[$|\mathcal{J}_u-\bar{\mathcal{J}}_{u}^{\mathrm{o}}|$ \emph{vs} ${\lambda}_{1}$ and $\lambda_{2}$\label{Fig:dorfJu}]{\includegraphics[scale=.4,trim=2cm 1.75cm 5cm 17cm,clip]{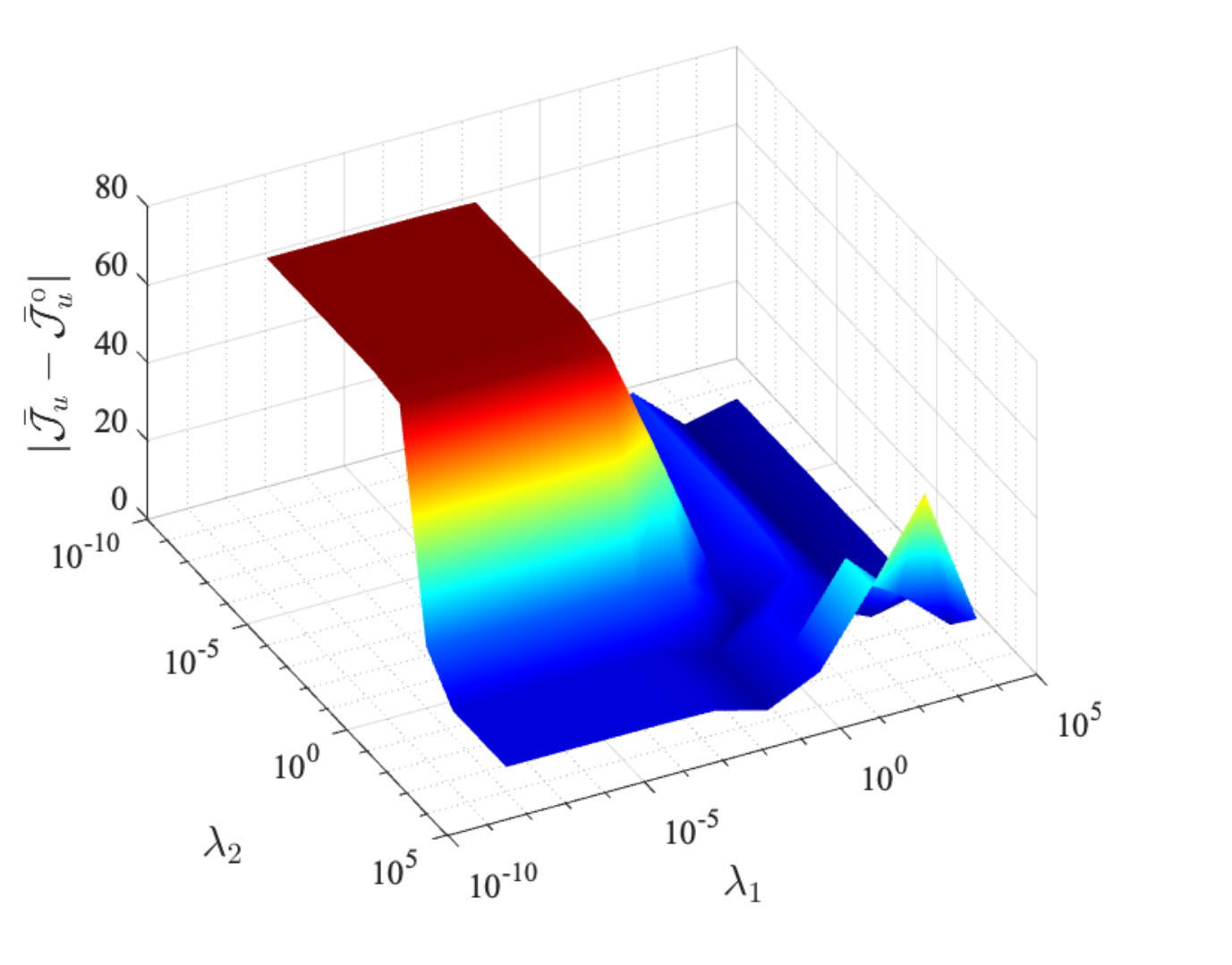}}
	\end{tabular}
	\caption{\rev{Closed-loop validation tests ($\overline{\mbox{SNR}}=18$~dB): absolute differences between the performance indexes attained with \cite{dorfler2022bridging} and the average values of those associated with the \emph{noisy} oracle MPC \emph{vs} penalties $\lambda_1$ and $\lambda_2$ over $5$ Monte Carlo predictors}.}\label{Fig:different_dorf}
\end{figure}
\begin{figure}[!tb]
	\centering
		\begin{tabular}{cc}
				\hspace*{-.4cm}\includegraphics[scale=.515,trim=2cm 2cm 10cm 21.7cm,clip]{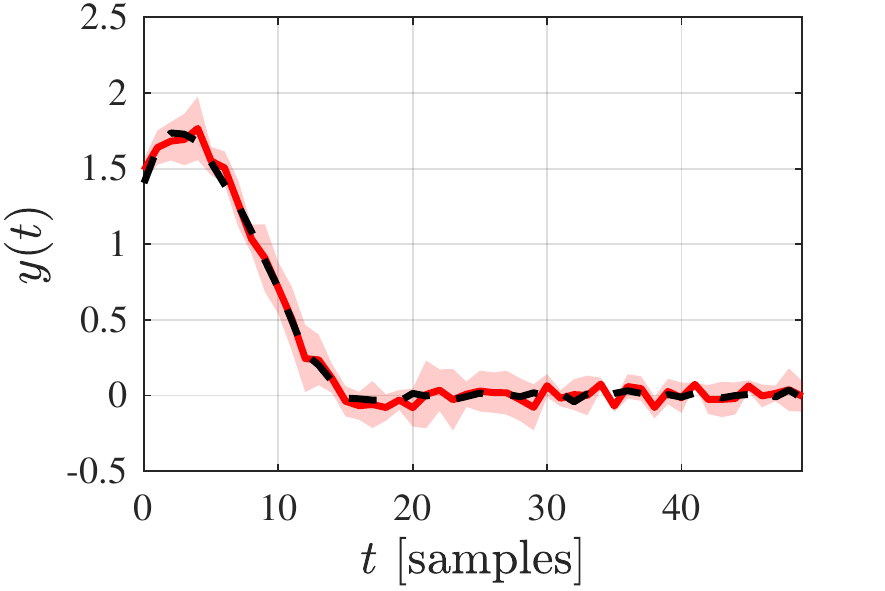} \hspace*{-.4cm}&\hspace*{-.4cm}	\includegraphics[scale=.515,trim=2cm 2cm 10cm 21.7cm,clip]{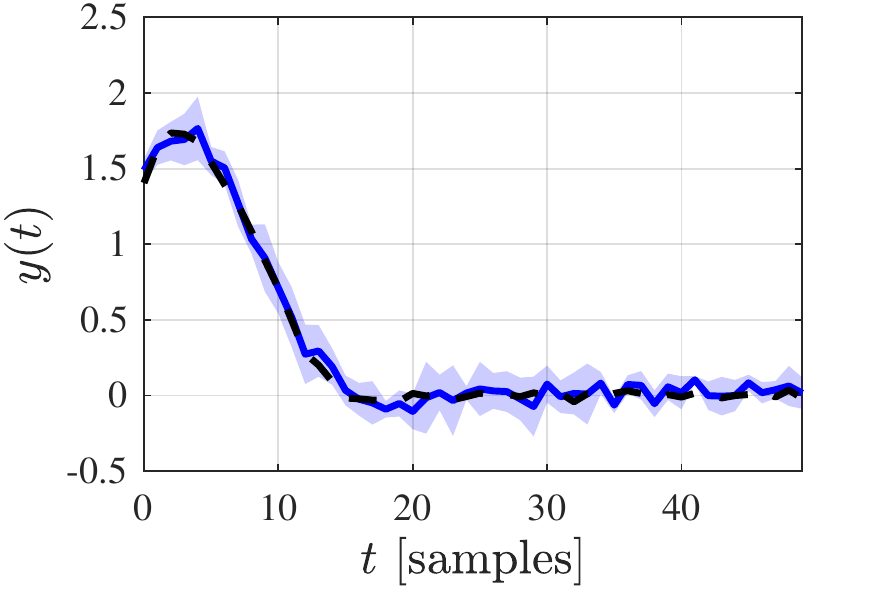}\vspace{-.2cm}\\
		\hspace*{-.4cm}\subfigure[$\lambda_{1}=10^{-8}$, $\lambda_{2}=10^{5}$]{\includegraphics[scale=.515,trim=2cm 2cm 10cm 21.7cm,clip]{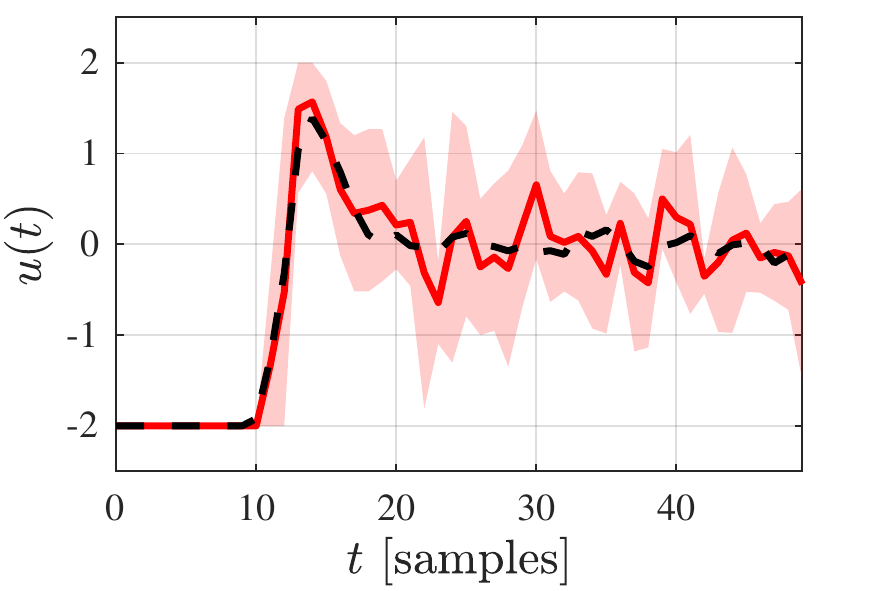}} \hspace*{-.4cm}&\hspace*{-.4cm}	\subfigure[$\lambda_{1}=1$, $\lambda_{2}=10^{-8}$]{\includegraphics[scale=.515,trim=2cm 2cm 10cm 21.7cm,clip]{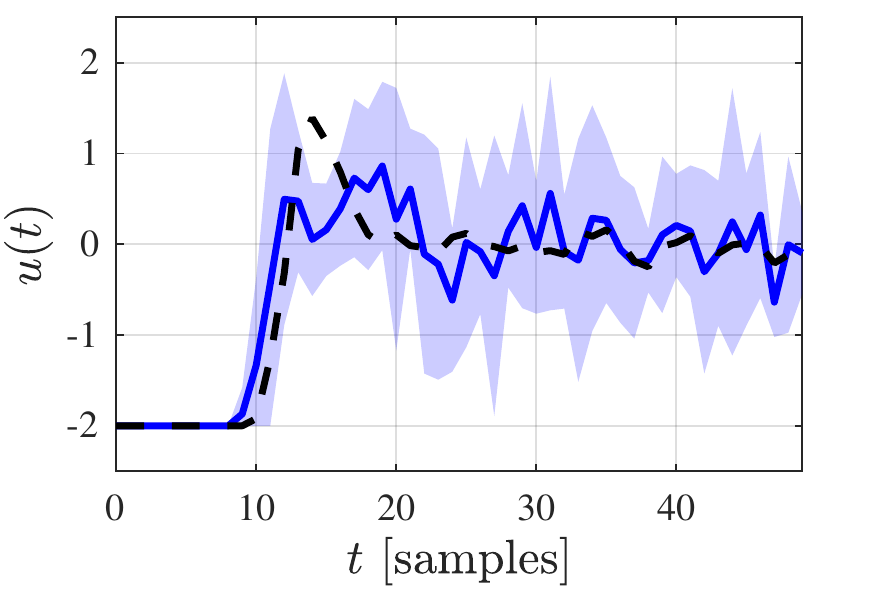}}
	\end{tabular}
	\caption{\rev{Average inputs for closed-loop validation tests ($\overline{\mbox{SNR}}=18$~dB): oracle (dashed black line) \emph{vs} average input (line) and standard deviation (shaded area) for different values of $\lambda_{1}$ and $\lambda_{2}$ in \eqref{eq:DD_Dorfler} over $5$ Monte Carlo predictors.}}\label{Fig:input_dorf}
\end{figure}
\rev{We now analyze the sensitivity of the three regularized DDPC approaches considered in Section~\ref{Sec:unifiednew} to different choices of their main tuning knobs, with the aim of experimentally validating the results stemming from the derived unified framework. The behavior of the performance indexes shown in \figurename{~\ref{Fig:different_lucia}} supports our conclusions. Indeed, the SPC+slack scheme proposed in \cite{Felix2021} tends to behave more closely to the oracle MPC for growing $\lambda$. Meanwhile, the input sequence fed to the system in closed loop tends to become equal to zero when $\lambda$ is small, concurrently causing a deterioration of the overall closed-loop performance.} As shown in \figurename{~\ref{Fig:different_berb}}, the choice of the regularization parameters is crucial to attain satisfactory performance \rev{when exploiting the approach proposed in \cite{berberich2020data}, balancing the need to have a meaningful control action and the one of rejecting noise. The attained} behavior validates the conclusions drawn in Section~\ref{Sec:unifiednew} \rev{with respect to the penalty $\lambda_{\sigma}$ in \eqref{eq:DD_Berberich}}. \rev{Indeed, higher values of this weight tends to improve the overall performance of the closed-loop. At the same time, since the dataset is finite and noisy, the results in \figurename{~\ref{Fig:different_berb}} highlight the importance of regularization for this DDPC formulation. Moreover,} these results show that regularizing the whole parameter vector $\alpha$\rev{, along with introducing a set of slacks,} requires a careful selection of \rev{both the associated} the regularization penalty. \rev{\rev{When compared with $\gamma$-DDPC, even with the best possible tuning, the schemes presented in \cite{Felix2021} and \cite{berberich2020data} result in the worst average performance with respect to the oracle MPC and a higher variability of the closed-loop behavior, see Figure \ref{Fig:allJ}. Note that, when the regularization penalty is properly tuned, the introduction of the slack variables in \eqref{eq:DD_Lucia} leads to an overall control effort similar to the oracle input sequence.}} \rev{Lastly, \figurename{~\ref{Fig:different_dorf}} and \figurename{~\ref{Fig:input_dorf}} corroborate the conclusions drawn in Section~\ref{Sec:unifiednew}. Indeed, it is clear that larger values of $\lambda_{2}$ and smaller $\lambda_{1}$ lead to performance that are comparable with that of the {$\gamma$-DDPC}. In particular, for}  $\lambda_{1}=10^{-8}$, it is clear that $\lambda_{2}$ has an effect similar to the one of $\eta$ and that (as expected) it is advisable to set $\lambda_{2}$ as large as possible. 

	\section{Conclusions}
	\rev{In this paper, {exploiting subspace identification tools,  we have provided an unifying framework for several  regularized data-driven predictive control schemes proposed in the literature, showing that they can be seen as variations of subspace predictive control. } This result extends the validity of these approaches beyond scenarios in which only measurement noise affects the system under control.} As a by-product, we have discussed the role of regularization, which is generally advocated in the literature as a tool to extend deterministic ideas to the noisy setting. \rev{By relying on the predictor decomposition proposed in the paper, we have further introduced the $\gamma$-DDPC problem, leading to a two-stage scheme where the effect of initial conditions, performance objectives and constraints is accounted for by solving two smaller optimization problems. By means of a numerical example, we show how the formulation at the core of $\gamma$-DDPC ease the interpretation of the effect of different regularization terms on the closed-loop behavior of the system, while validating the outcome of our discussions about the selection of the regularization penalties.}
	
	\rev{Future works will be devoted to the analysis of the closed loop properties of $\gamma$-DDPC, and to extend the latter to explicitly account for the error induced by the availability of a finite dataset. In addition, we will analyze the impact of regularization} when regularized DDPC schemes are applied to nonlinear systems.

	\bibliographystyle{plain}
	\bibliography{main}

\end{document}